\title{Combined-BayesGOF-Jan2018}
\newcolumntype{Y}{>{\centering\arraybackslash}X}
\theoremstyle{plain}
\newtheorem{thm}{Theorem}
\newcommand{\bthm}{\begin{thm}}
\newcommand{\ethm}{\end{thm}}
\newcommand{\bpf}{\begin{proof}}
\newcommand{\epf}{\end{proof}}
\theoremstyle{definition}
\newtheorem{defn}{Definition}
\newcommand{\DS}{\mbox{DS}}
\newcommand{\wcte}{\widecheck{\te}}
\numberwithin{equation}{section}
\newcommand{\wtte}{\widetilde \te}
\begin{document}
\begin{bibunit}[naturemag-doi]
\begin{center}
{\Large{\bf Bayesian Modeling \textit{via} Goodness-of-fit}} \\[.15in] %
Subhadeep Mukhopadhyay, Douglas Fletcher\\  
Temple University, Department of Statistical Science \\ Philadelphia, Pennsylvania, 19122, U.S.A.
\vspace{-.25em}
\end{center}
\begin{center}
\textit{{\large Dedicated to 80th birthday anniversary of Brad Efron}} \end{center}
\begin{abstract}
\vspace{-.25em}
The two key issues of modern Bayesian statistics are: (i) establishing principled approach for \textit{distilling} statistical prior that is \textit{consistent} with the given data from an initial believable scientific prior; and (ii) development of a \textit{consolidated} Bayes-frequentist data analysis workflow that is more effective than either of the two separately. In this paper, we propose the idea of  ``Bayes \textit{via} goodness-of-fit'' as a framework for exploring these fundamental questions, in a way that is general enough to embrace almost all of the familiar probability models. Several examples, spanning application areas such as clinical trials, metrology, insurance, medicine, and ecology show the unique benefit of this new point of view as a practical data science tool.
\end{abstract}
\noindent\textsc{\textbf{Keywords}}: Exploratory Bayes Modeling; Prior uncertainty modeling; Empirical Bayes.
\renewcommand{\baselinestretch}{1.12}
\setlength{\parskip}{1.4ex}
\section{Introduction}\label{Intro}
Bayesians and frequentists have long been ambivalent toward each other \cite{efron1986isn,sims2010,stigler1982thomas}. The concept of ``prior'' remains the center of this 250 years old tug-of-war: frequentists view prior as a weakness that can hamper scientific objectivity and can corrupt the final statistical inference, whereas Bayesians view it as a \textit{strength} to incorporate relevant domain-knowledge into the data analysis. The question naturally arises: how can we develop a consolidated Bayes-frequentist data analysis workflow \cite{robbins1956, good1992bayes, rubin1984aos, efron2003robbins} 
that enjoys the best of both worlds? The objective of this paper is to develop one such modeling framework.  

We observe samples $y=(y_1,\ldots,y_k)$ from a known probability distribution $f(y|\te)$, where the unobserved parameters $\te=(\te_1,\ldots,\te_k)$ are independent realizations from unknown $\pi(\te)$. Given such a model, Bayesian inference typically aims at answering the following two questions:
\begin{itemize}[itemsep=1pt,topsep=1.24pt]
\item MacroInference: How should we combine $k$ model parameters to come up with an overall, macro-level aggregated statistical behavior of $\te_1,\ldots,\te_k$?
\item MicroInference: Given the observables $y_i$, how should we simultaneously estimate individual micro-level parameters $\te_i$?
\end{itemize}
Thanks to Bayes' rule, answers to these questions are fairly straightforward and automatic once we have the observed data $\{y_i\}_{i=1}^k$ and a specific choice for $\pi(\theta)$. A common practice is to choose $\pi$ as the parametric conjugate prior $g(\te;\al,\be)$, where the hyper-parameters are either selected based on an investigator's expert input or estimated from the data (current/historical) when little prior information is available .
\vskip.65em

{\bf Motivating Questions}. However, an applied Bayesian statistician may find it unsatisfactory to work with an initial believable prior $g(\te)$ at its face value, without being able to interrogate its credibility in the light of the observed data \cite{dempster1975sub,Berger1994robust} as this choice unavoidably shapes his or her final inferences and decisions. A good statistical practice thus demands greater transparency to address this trust-deficit. What is needed is a justifiable class of prior distributions to answer the following \textit{pre}-inferential modeling questions: Why should I believe your prior? How to check its appropriateness (self-diagnosis)? How to quantify and characterize the uncertainty of the a priori selected $g$? Can we use that information to ``refine'' the starting prior (\textit{auto}-correction), which is to be used for subsequent inference? In the end, the question remains: how can we develop a systematic and principled approach to go from a \textit{scientific} prior to a \textit{statistical} prior that is consistent with the current data? A resolution of these questions is necessary to develop a ``dependable and defensible'' Bayesian data analysis workflow, which is the goal of the ``Bayes \textit{via} goodness-of-fit'' technology.
\vskip.65em

{\bf Summary of Contributions}. This paper provides some practical strategies for addressing these questions by introducing a general modeling framework, along with concrete guidelines for applied users. The major practical advantages of our proposal are: (i) computational ease (it does not require Markov chain Monte Carlo (MCMC), variational methods, or any other sophisticated computational techniques); (ii) simplicity and interpretability of the underlying theoretical framework which is general enough to include \textit{almost all} commonly encountered models; and (iii) easy integration with mainframe Bayesian analysis that makes it readily applicable to a wide range of problems. The next section introduces a new class of nonparametric priors $\DS(G,m)$ along with its role in exploratory graphical diagnostic and uncertainty quantification. The estimation theory, algorithm, and real data examples are discussed in Section \ref{sec:EstMeth}. Consequences for inference are discussed in Section \ref{sec:Inference}, which include methods of combining heterogeneous studies and a generalized nonparametric Stein-prediction formula that selectively borrows strength from `similar' experiments in an automated manner. Section \ref{sec:InfLearningArs} describes a new theory of `learning from uncertain data,' which is an important problem in many application fields including metrology, physics, and chemistry. Section \ref{sec:PoiSmo} solves a long-standing puzzle of modern empirical Bayes,  originally posed by Herbert Robbins \cite{robbins1980}. We conclude the paper with some final remarks in Section \ref{sec:Disc}. Connections with other Bayesian cultures are presented in the supplementary material to ensure the smooth flow of main ideas.  
\vskip.5em
{\bf Real-data Applications.} To demonstrate the versatility of the proposed ``Bayes \textit{via} goodness-of-fit'' data analysis scheme, we selected examples from a wide range of models including normal, Poisson, and Binomial distributions. The full catalog of datasets is presented in Supplementary Table \ref{tbl:DataSummary}.
\vskip.5em
{\bf Notation.} The notation $g$ and $G$ denote the density and distribution function of the starting prior, while $\pi$ and $\Pi$ denote the density and distribution function of the unknown oracle prior. We will denote the conjugate prior with hyperparameters $\al$ and $\be$ by $g(\te;\al,\be)$. Let $\sL^2(\mu)$ be the space of square integrable functions with inner product $\int f(u) g(u)\dd\mu(u)$. $\Leg_j(u)$ denotes $j$th shifted orthonormal Legendre polynomials on $[0,1]$. They form a complete orthonormal basis for $\sL^2(0,1)$. Whereas $T_j(\te;G) := \Leg_j [G(\te)]$ is the modified shifted Legendre polynomials of rank-G transform $G(\te)$, which are basis of the Hilbert space $\sL^2(G)$. The composition of functions is denoted by the usual `$\circ$' sign. 
\section{The Model} \label{sec:Model}
Our model-building approach proceeds sequentially as follows: (i) it starts with a scientific (or empirical) parametric prior $g(\te;\al,\be)$, (ii) inspects the adequacy and the remaining uncertainty of the elicited prior using a graphical exploratory tool, (iii) estimates the necessary ``correction'' for assumed $g$ by looking at the data, (iv) generates the final statistical estimate $\hat \pi(\te)$, and (v) executes macro and micro-level inference. We seek a method that can yield answers to all five of the phases using only a \textit{single} algorithm.

\subsection{New Family of Prior Densities}\label{sec:DSGm}
This section serves two purposes: it provides a universal class of prior density models, followed by its Fourier non-parametric representation in a specialized orthonormal basis.

\begin{defn}
The Skew-G class of density models is given by
\beq \label{eq:DP-prior} \pi(\theta) = g(\theta; \alpha, \beta)\,d[G(\theta);G,\Pi],\eeq
where $d(u;G,\Pi)=\pi(G^{-1}(u))/g(G^{-1}(u))$ for $0<u<1$ and consequently $\int_0^1 d(u;G,\Pi)=1$.
\end{defn}
A few notes on the model specification:
\begin{itemize}[itemsep=1pt,topsep=1.24pt]
\item It has a unique \textit{two-component} structure that combines assumed parametric $g$ with the $d$-function. The function $d$ can be viewed as a ``correction'' density to counter the possible misspecification bias of $g$. 
\item The density function $d(u;G,\Pi)$ can also be viewed as describing the ``excess'' \textit{uncertainty} of the assumed $g(\te;\al,\be)$. For that reason we call it the U-function.
\item The motivation behind the representation \eqref{eq:DP-prior} stems from the observation that $d[G(\te);G,\Pi]$ is in fact the prior density-ratio $\pi(\te)/g(\te)$. Hence, it is straightforward to verify that the scheme \eqref{eq:DP-prior} always yields a proper density, i.e., $\int_\te g(\theta)\,d[G(\theta);G,\Pi]=1$.
\end{itemize}

Since the square integrable $d[G(\te);G,\Pi]$ lives in the Hilbert space $\sL^2(G)$, we can approximate it by projecting into the orthonormal basis $\{T_j\}$ satisfying $\int T_i(\te;G) T_j(\te;G)\dd G=\delta_{ij}$. We choose $T_j(\te;G)$ to be $\Leg_j \hspace{-.08em}\circ \,G(\te)$, a member of the LP-class of rank-polynomials \cite{mukhopadhyay2014lp}. The system $\{T_j\}$ possesses two attractive properties: they are polynomials of rank transform $G(\te)$ thus constitutes a robust basis, and they are orthonormal with respect to $\sL^2(G)$, for \textit{any} arbitrary $G$ (continuous). This is not to be confused with standard Legendre polynomials $\Leg_j(u), 0 <u<1$, which are orthonormal with respect to $\mathrm{Uniform}[0,1]$ measure. For more details, see Supplementary Appendix B. The above discussion paves the way for the following definition. 
\begin{defn}
$\Theta \sim \DS(G,m)$ distribution if it admits the following representation: 
\beq \label{eq:LPds}
\pi(\te)\,=\,g(\te;\al,\be)\, \Big[ 1+\sum_{j=1}^m \LP[j;G,\Pi] \,T_j(\te;G)   \Big].
\eeq
The LP-Fourier coefficients $\LP[j;G,\Pi]$ are the key parameters that help us to express mathematically the ``gap'' between a priori anticipated $G$ and the true prior $\Pi$. When all the expansion coefficients are zero, we automatically recover $g$.
\end{defn}
We will now spend a few words on the LP-DS($G,m$) class of prior models:
\begin{itemize}[itemsep=1pt,topsep=1.24pt]
\item When $\pi(\te)$ is a member of $\DS(G,m)$ class of priors, the orthogonal LP-transform coefficients \eqref{eq:LPds} satisfy 
 \beq \label{eq:lpt}\LP[j;G,\Pi]\,=\,\langle d, T_j\circ G^{-1}\rangle_{\sL^2(0,1)}\,=\,\Ex[T_j(\Theta;G);\Pi].\eeq
 Thus, given a random sample $\te_1,\ldots,\te_k$  from $\pi(\te)$, we could easily estimate the unknown LP-coefficients, and, thus, $d$ and $\pi$, by computing the sample mean $k^{-1}\sum_{i=1}^k T_j(\te_i;G)$. \textit{But unfortunately, the $\te_i$'s are unobserved}. Section \ref{sec:EstMeth} describes an estimation strategy that can deal with the situation at hand. Before introducing this technique, however, we must acclimate the reader with the role played by the U-function $d(u;G,\Pi)$ for uncertainty quantification and characterization of the initial believable prior $g$. That's the objective of the next Section \ref{sec:Ufunc}.
\item Under definition 2, we have $\DS(G,m=0) \equiv g(\te;\al,\be)$. The truncation point $m$ in \eqref{eq:LPds} reflects the \textit{concentration} of permissible $\pi$ around a known $g$. While this class of priors is rich enough to approximate any reasonable prior with the desired accuracy in the large-$m$ limit, one can easily exclude absurdly rough densities and focus on a neighborhood around the domain-knowledge-based $g$ by choosing $m$ not ``too big.''
\item The motivations behind the name `DS-Prior' are twofold.  First, our formulation operationalizes  I. J. Good's `\underline{S}uccessive \underline{D}eepening' idea \cite{good1983EDA} for Bayesian data analysis:
\begin{quote}
\vspace{-.25em}
 \begin{spacing}{1.1}
{\footnotesize \textrm{\textit{A hypothesis is formulated, and, if it explains enough, it is judged to be probably approximately correct. The next stage is to try to improve it. The form that this approach often takes in EDA is to examine residuals for patterns, or to treat them as if they were original data}} (I. J. Good, 1983, p. 289).}
\end{spacing}
\vspace{-.25em}
\end{quote}
Secondly, our prior has two components: A \underline{S}cientific $g$ that encodes an expert's knowledge and a \underline{D}ata-driven $d$. That is to say that our framework embraces data and science, both, in a \textit{testable} manner \cite{gelman2017prior}.
\end{itemize}
 \subsection{Exploratory Diagnostics and U-Function} \label{sec:Ufunc}
Is your data compatible with the pre-selected $g(\te)$? If yes, the job is done without getting into the arduous business of nonparametric estimation. If no, we can model the ``gap'' between the parametric $g$ and the true unknown prior $\pi$, which is often \textit{far easier} than modeling $\pi$ from scratch (hence, one can learn from small number of cases)! If the observed $y_1,\ldots,y_k$ look very unexpected given $g(\theta;\al,\be)$, it is completely reasonable to question the sanctity of such a self-selected prior. Here we provide a formal nonparametric exploratory procedure to describe comprehensively the uncertainty about the choice of $g$.  Using the algorithm detailed in the next section, we estimate U-functions for four real data sets.  Among them, the first three are binomial variate and the last one normal. The results are shown in Fig. \ref{fig:sec3_3CD}. 
\begin{itemize}[itemsep=1pt,topsep=1.24pt]
\item The rat tumor data \cite{gelman2013bayesian} consists of observations of endometrial stromal polyp incidence in $k=70$ groups of female rats.  For each group, $y_i$ is the number of rats with polyps and $n_i$ is the total number of rats in the experiment. \vspace{-.15em}
\item The terbinafine data \cite{young2008pooling} comprise $k=41$ studies, which investigate the proportion of patients whose treatment terminated early due to some adverse effect of an oral anti-fungal agent: $y_i$ is the number of terminated treatments and $n_i$ is the total number of patients in the experiment. \vspace{-.15em}
\item The rolling tacks \cite{beckett1994spectral} data involve flipping a common thumbtack $9$ times.  It consists of $320$ pairs, $(9, y_i)$, where $y_i$ represents the number of times the thumbtack landed point up. 
\item The ulcer data consist of forty randomized trials of a surgical treatment for stomach ulcers conducted between 1980 and 1989 \cite{sacks1990endoscopic,efron1996empirical}.  Each of the $40$ trials has an estimated log-odds ratio $y_i|\te_i \sim \cN(\te_i,s_i^2)$ that measures the rate of occurrence of recurrent bleeding given the surgical treatment. 
\end{itemize}
Throughout, we have used the  maximum likelihood estimates (MLE) for estimating the initial starting value of the hyperparameters. However, one can use any other reasonable choice, which may involve expert's judgment. What is important to note is the \textit{shape} of the $\whd$; more specifically, its departure from uniformity, indicates the assumed conjugate prior $g(\te;\al,\be)$ needs a `repair' to resolve the prior-data conflict. For example, the flat shape of the estimated $\whd$ in Fig. \ref{fig:sec3_3CD}(b)  indicates that our initial selection of $g(\theta; \alpha,\beta)$ is appropriate for the terbinafine and ulcer data.  Therefore, one can proceed in turning the ``Bayesian crank'' with confidence using the parametric beta and normal prior respectively.
\begin{figure}[t]
\begin{subfigure}{.33\textwidth}
  \centering
  \includegraphics[width=\linewidth]{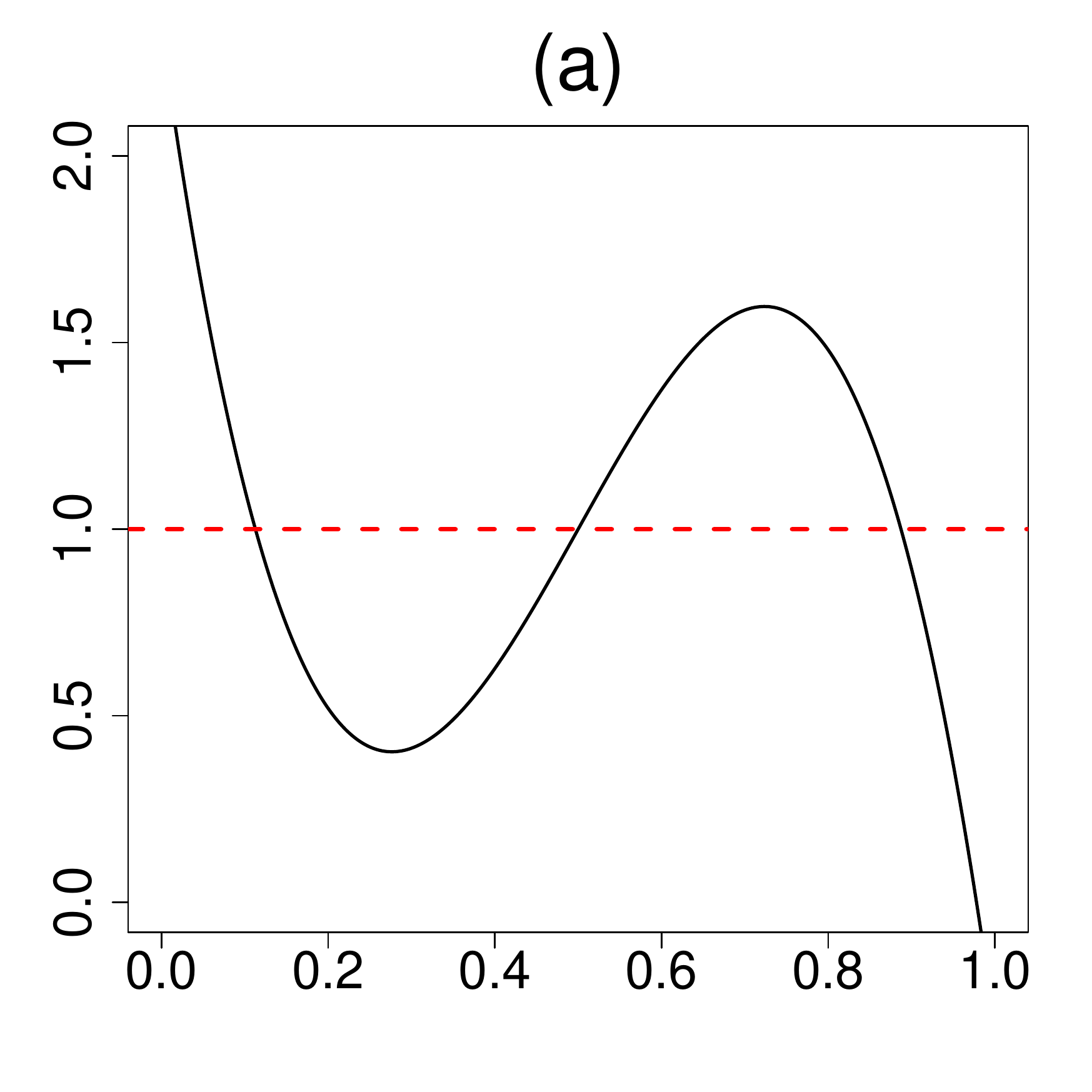}
\end{subfigure}\hspace{1mm}%
\begin{subfigure}{.33\textwidth}
  \centering
\includegraphics[width=\linewidth]{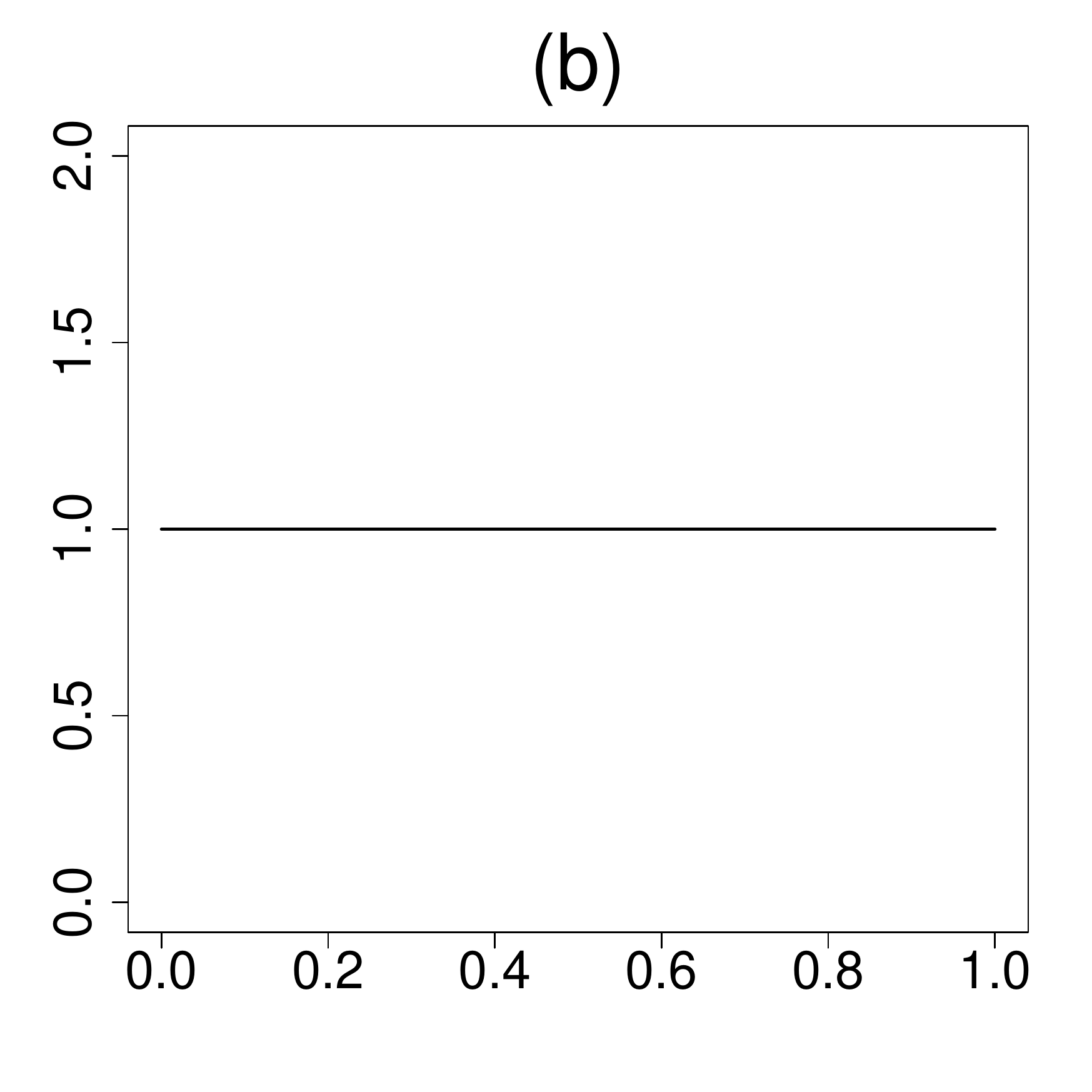}
\end{subfigure}\hspace{1mm}%
\begin{subfigure}{.33\textwidth}
  \centering
  \includegraphics[width=\linewidth]{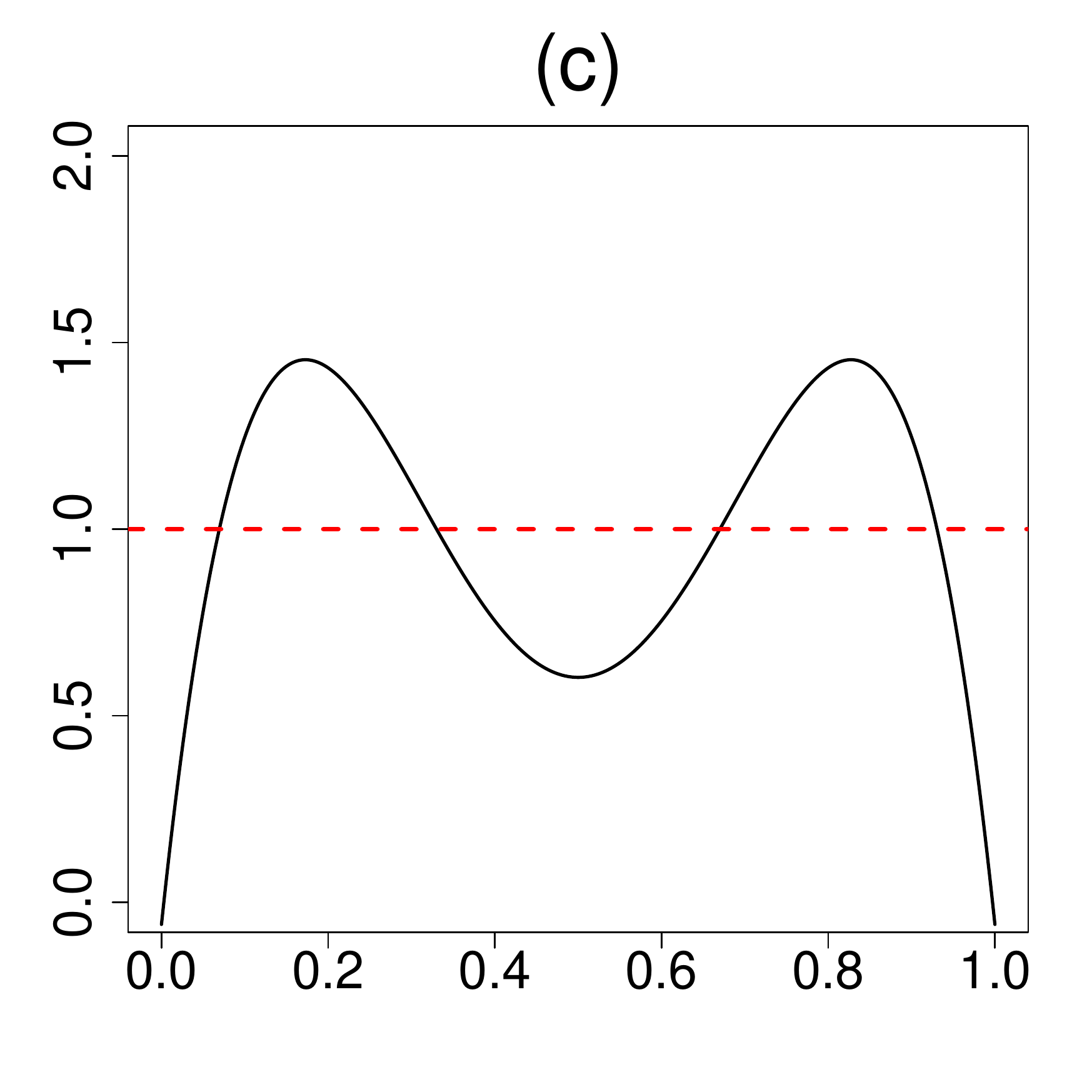}
\end{subfigure}
\caption{Graphical diagnostic tool: U-functions for (a) rat tumor data; (b) terbinafine and  ulcer data; (c) rolling tacks data. The deviation from uniformity (red dotted line) indicates that the default prior contradicts the observed data. The flat shape of the U-function in panel (b) suggests ${\rm Beta}(1.24, 34.7)$  and $\cN(-1.17, 0.98)$ are consistent with the terbinafine and ulcer data, respectively.}
\label{fig:sec3_3CD}
\end{figure}
\vskip.4em
In contrast, Figs. \ref{fig:sec3_3CD}(a,c) provide a strong warning in using $g={\rm Beta}(\al,\be)$ for the rat tumor and the rolling tacks experiments. The smooth estimated U-functions expose the nature of the discrepancy that exists between $g$ and the observed data by having an ``extra'' mode. Clearly, the answer does not lie in choosing a different $(\al,\be)$ as this cannot rectify the missing bimodality. This brings us to an important point: the full Bayesian analysis, by assigning hyperprior distribution on $\al$ and $\be$, is not always a fail-safe strategy and should be practiced with caution (not in a blind mechanical way). The bottom line is uncertainty in the prior probability model $\neq$ uncertainty in $\al, \be$. A foolproof prior uncertainty model, thus, has to allow ignorance in terms of the \textit{functional shape} around $g$. The foregoing discussion motivates the following entropy-like measure of uncertainty.
\begin{defn}
The $q\LP$ statistic for uncertainty quantification is defined as follows:
\beq \label{eq:qlp}
{\rm qLP}(G||\Pi)=\sum_j\big| \LP[j;G,\Pi]\big|^2.
\eeq
\end{defn}
The motivation  behind this definition comes from applying Parseval's identity in \eqref{eq:LPds}: $\int_0^1 d^2(u;G,\Pi)= 1 + {\rm qLP}(G||\Pi)$. Thus, the proposed measure captures the departure of the U-function from uniformity. The following result connects our $q\LP$ statistic with relative entropy.
\begin{thm} The $q\LP$ uncertainty quantification statistic satisfies the following relation:
\beq {\rm qLP}(G||\Pi) ~\approx~ 2 \times {\rm KL}(\Pi||G), \eeq
where   ${\rm KL}(\Pi||G)$ is the Kullback–Leibler (KL) divergence between the true prior $\pi$ and its parametric approximate $g$.
\end{thm}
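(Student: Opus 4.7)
My plan is to rewrite both sides as integrals of the U-function $d(u;G,\Pi)$ over $[0,1]$ and then match them via a second-order Taylor expansion around $d\equiv 1$. The two sides will turn out to be, respectively, the Pearson-type $\chi^2$ discrepancy of $d$ from the uniform density and the Csisz\'ar $f$-divergence of $d$ from uniform with $f(x)=x\log x$; the approximation then reduces to the classical second-order equivalence of these $f$-divergences at the point of coincidence of the two measures.

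First, I would make the Parseval step (already hinted at in the motivation of Definition~3) explicit. Since $\{T_j(\cdot;G)\}$ is a complete orthonormal system in $\sL^2(G)$ with $T_0\equiv 1$, and since the constant LP-coefficient of $d[G(\te);G,\Pi]$ is $\int d\dd G = 1$, Parseval yields
\[ 1+\mathrm{qLP}(G\|\Pi)\;=\;1+\sum_j \bigl|\LP[j;G,\Pi]\bigr|^2\;=\;\int_0^1 d^2(u;G,\Pi)\dd u. \]
Combined with $\int_0^1 d\dd u=1$ this gives $\mathrm{qLP}(G\|\Pi)=\int_0^1\bigl(d(u;G,\Pi)-1\bigr)^2\dd u$.

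Second, I would transport the KL divergence to the unit interval. Substituting $u=G(\te)$ so that $\dd u=g(\te)\dd\te$, and using $\pi(\te)/g(\te) = d(G(\te);G,\Pi)$,
\[ \mathrm{KL}(\Pi\|G)\;=\;\int \pi(\te)\log\!\frac{\pi(\te)}{g(\te)}\dd\te\;=\;\int_0^1 d(u;G,\Pi)\log d(u;G,\Pi)\dd u. \]
So the two sides now differ only in the pointwise functional applied to $d$: quadratic $(d-1)^2$ for $q\LP$ versus $d\log d$ for KL. Writing $\varepsilon(u)=d(u;G,\Pi)-1$, which satisfies $\int_0^1\varepsilon\dd u=0$, and Taylor expanding $(1+\varepsilon)\log(1+\varepsilon)=\varepsilon+\tfrac12\varepsilon^2+O(\varepsilon^3)$, the linear term integrates to zero and
\[ \mathrm{KL}(\Pi\|G)\;=\;\tfrac12\int_0^1\varepsilon^2(u)\dd u \;+\; O\!\left(\int_0^1|\varepsilon|^3\dd u\right)\;=\;\tfrac12\,\mathrm{qLP}(G\|\Pi) \;+\; O(\|d-1\|^3), \]
which is precisely the claim $\mathrm{qLP}(G\|\Pi)\approx 2\,\mathrm{KL}(\Pi\|G)$.

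The main obstacle, and the reason the statement carries an ``$\approx$'' rather than ``$=$,'' is controlling the cubic remainder. A clean sufficient hypothesis is that $\|d-1\|_\infty$ be small (or at least that $d$ be uniformly bounded), which places us in the ``neighborhood of $g$'' regime for which the $\DS(G,m)$ expansion is designed; there the remainder is bounded by $C\,\|\varepsilon\|_\infty\|\varepsilon\|_2^2$ and is genuinely of lower order than the leading $\tfrac12\|\varepsilon\|_2^2$ term. Beyond that bookkeeping, the result is simply the standard second-order agreement of smooth $f$-divergences, with KL contributing the universal factor $\tfrac12$ relative to $\chi^2$.
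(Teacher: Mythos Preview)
Your proof is correct and follows essentially the same route as the paper: transport $\mathrm{KL}(\Pi\|G)$ to $\int_0^1 d\log d\,\dd u$ via the substitution $u=G(\theta)$, Taylor expand $d\log d\approx (d-1)+\tfrac12(d-1)^2$, and identify $\int_0^1(d-1)^2\,\dd u$ with $\mathrm{qLP}(G\|\Pi)$ through Parseval. Your treatment is in fact more careful than the paper's, which leaves the Parseval identification and the vanishing of the linear term implicit and says nothing about the remainder.
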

\begin{proof}
Express KL-information divergence using U-functions by substituting $G(\te)=u$:
\beq \label{eq:KLp}{\rm KL}(\Pi||G)\,=\int \pi(\te) \log \frac{\pi(\te)}{g(\te)} \dd \te \,= \int_0^1 d(u;G,\Pi) \log d(u;G,\Pi) \dd u.\eeq
Complete the proof by approximating $d \log d$ in \eqref{eq:KLp} via Taylor series $(d-1) + \frac{1}{2} (d-1)^2$.
\end{proof}
We conclude this section with a few additional remarks:
\begin{itemize}
\item Our exploratory uncertainty diagnostic tool encourages ``interactive'' data analysis that is similar in spirit to Gelman et al.\cite{gelman1996PDC}. Subject-matter experts can use this tool to ``play'' with different hyperparameter choices in order to filter out the reasonable ones. This functionality might be especially valuable when multiple expert opinions are available.
\item When $\whd$ shows evidence of the prior-data conflict, the question remains: what to do next? It is not enough to check the adequacy without informing the user an explanation for the misfit or what is the ``deeper'' structure that is missing in the starting parametric prior. Fortunately, our $\DS(G,m)$ model suggests a simple, yet formal, guideline for upgrading: $\widehat{\pi}(\te)=g(\te;\hat \al, \hat \be) \times \whd[G(\te);G,\Pi]$, where the shape of $\whd(u;G,\Pi)$ captures the patterns which were not a priori anticipated. Hence our formalism \textit{simultaneously} addresses the problem of uncertainty quantification and the subsequent model synthesis.
\end{itemize}
\section{Estimation Method}\label{sec:EstMeth}
\subsection{Theory}\label{sec:EstTh}
In this Section, we lay out the key theoretical results that we use for designing our algorithm. Before deriving the general expressions under the LP-$\DS(G,m)$ model, it is helpful to start by recalling the results for the basic conjugate model, i.e., $\Theta \sim \DS(G,m=0)$ and $y_i|\te_i ~\overset{{\rm ind}}{\sim} ~ f(y_i|\te_i)$  for $i=1,\ldots,k$. Table \ref{tbl:familyconj} provides the marginal $f_{G}(y_i)=\int_{\te_i} f(y_i|\te_i) g(\te_i)\dd \te_i$ and the posterior distribution $\pi_{G}(\te_i|y_i)=\frac{f(y_i|\te_i)g(\te_i)}{f_G(y_i)}$ for four commonly encountered distributions, with the Bayes estimate of $h(\Te_i)$ being denoted as $\Ex_G\big [h(\Theta_i)|y_i\big] = \int_{\te_i} h(\te_i) \pi_{G}(\te_i|y_i) \dd \te_i$. The subscript `$G$' in these expressions underscores the fact that they are calculated for the conjugate $g$-model. 

\begin{table}[H]
\setlength{\tabcolsep}{8pt}
\def\arraystretch{1.2}
\centering
\caption{\label{tbl:familyconj} Details on the distributions, their conjugate priors, and the resulting marginal and posterior distributions for four familiar distributions (two discrete and two continuous): Binomial, Poisson, Normal, and Exponential. For the normal-normal posterior $\lambda_i = \sigma^2_i / (\sigma^2_i + \beta^2)$ and in the marginal of the Poisson-gamma $p = 1/(1+\beta)$. We use $\B(\al,\be)=\frac{\Gamma(\al) \Gamma(\be)}{\Gamma(\al+\be)}$ to denote the normalizing constant of beta distribution.}
{\small
\begin{tabular}{cccc}
\toprule
Family~~ & Conjugate $g$-prior & Marginal [$f_G(y_i)$] & Posterior [$\pi_G(\theta_i \mid y_i)$]\\ 
\midrule
${\rm Binomial}(n_i,\theta_i)$ &  ${\rm Beta}(\alpha, \beta)$ & $\binom{n_i}{y_i} \frac{\B(\alpha+y_i,\beta-y_i+n_i)}{\B(\alpha,\beta)}$  & ${\rm Beta}(\alpha+y_i,\beta-y_i+n_i)$\\[.2cm]
${\rm Poisson}(\theta_i)$ & ${\rm Gamma}(\alpha, \beta)$ & $\binom{y_i + \alpha -1}{y_i}p^\alpha(1-p)^{y_i}$ & $ {\rm Gamma}\big(\al + y_i, \frac{\beta}{1+\beta}\big)$\\[.2cm]
${\rm Normal}(\theta_i,\sigma^2_i)$ & $\rm{ Normal}(\alpha,\beta^2)$ & ${\rm Normal}(\alpha, \sigma^2_i + \beta^2)$   & ${\rm Normal}(\lambda_i \alpha + (1-\lambda_i)y_i, (1-\lambda_i)\sigma^2_i)$\\[.2cm]
${\rm Exp}(\la)$ & ${\rm Gamma}(\alpha, \beta)$ & $\frac{\al\be}{(1+ \be y)^{\al+1}}$ & ${\rm Gamma}\big(\alpha+1, \frac{\beta}{1+\beta y_i}\big)$\\
\bottomrule
\end{tabular}
}
\end{table}
Next, we seek to extend these parametric results to LP-nonparametric setup in a systematic way. Especially, without deriving analytical expressions for each case separately, we want to establish a more general representation theory that is valid for all of the above and, in fact, extends to any conjugate pairs, explicating the underlying unity of our formulation.
\vskip1em
\begin{thm} \label{thm:r1}
Consider the following model:
\beas
y_i|\theta_i &~\overset{{\rm ind}}{\sim}& ~ f(y_i|\te_i),~~~~(i=1,\ldots,k)\\
\Theta_i &~\overset{{\rm ind}}{\sim}& ~ \pi(\theta),
\eeas
where $\pi(\theta)$ is a member of ${\rm DS}(G,m)$ family \eqref{eq:LPds}, $G$ being the associated conjugate prior.  Under this framework, the following holds:

\begin{enumerate}[label = (\alph*)]

\item The marginal distribution of $y_i$ is given by
\beq \label{thm:marginal}
f_{\LP}(y_i) \,=\, f_G(y_i) \left(1+ \sum_j \LP[j;G,\Pi] ~\Ex_G[T_j(\Te_i;G)|y_i]  \right),
\eeq
where $\Ex_G[T_j(\Te_i;G)|y_i]= \int_{\te_i} \Leg_j \hspace{-.08em}\circ \,G(\te_i) \pi_G(\te_i|y_i) \dd \te_i$.

\item A closed-form expression for the posterior distribution of $\Theta_i$ given $y_i$ is
\beq \label{thm:post}
\pi_{\LP}(\te_i|y_i) = \dfrac{\pi_G(\te_i|y_i) \big(1+\sum_j \LP[j;G,\Pi] ~T_j(\te_i;G)\big)}{1+ \sum_j \LP[j;G,\Pi] ~\Ex_G[T_j(\Te_i;G)|y_i]}
\eeq

\item For any general random variable $h(\Theta_i)$, the Bayes conditional mean estimator can be expressed as follows:
\begin{equation} \label{eq:pm}
\Ex_{\LP}[h(\Theta_i)|y_i] = \frac{\Ex_G[h(\Theta_i)|y_i]+\sum_j \LP[j;G,\Pi]\,\Ex_G[h(\Theta_i)T_j(\Theta_i;G)| y_i]}{1+\sum_j\LP[j;G,\Pi]\, \Ex_G[T_j(\Te_i;G)|y_i]}
\end{equation}
\end{enumerate}
\end{thm}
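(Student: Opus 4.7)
The plan is to prove all three parts in sequence, treating part (a) as the workhorse computation from which (b) and (c) follow by straightforward applications of Bayes' rule. The single algebraic identity that drives everything is the conjugate-model factorization $f(y_i\mid\te_i)\,g(\te_i;\al,\be) = f_G(y_i)\,\pi_G(\te_i\mid y_i)$, which is just Bayes' rule applied to the baseline $\DS(G,0)$ model and converts a weighted integral against $g$ into a weighted integral against the conjugate posterior.

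For part (a), I would start from $f_{\LP}(y_i) = \int f(y_i\mid\te_i)\,\pi(\te_i)\,d\te_i$ and substitute the $\DS(G,m)$ representation \eqref{eq:LPds} for $\pi$. The integral splits into the ``unperturbed'' piece $\int f(y_i\mid\te_i) g(\te_i)\,d\te_i = f_G(y_i)$ plus a sum over $j$ of terms $\LP[j;G,\Pi] \int f(y_i\mid\te_i) g(\te_i)\,T_j(\te_i;G)\,d\te_i$. Applying the conjugate identity pulls out $f_G(y_i)$ and turns the remaining integral into $\Ex_G[T_j(\Te_i;G)\mid y_i]$, yielding \eqref{thm:marginal}. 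Interchanging sum and integral is legitimate since we are dealing with the finite truncation $m$ in the definition of $\DS(G,m)$.

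For part (b), I would write $\pi_{\LP}(\te_i\mid y_i) = f(y_i\mid\te_i)\pi(\te_i)/f_{\LP}(y_i)$. Substituting \eqref{eq:LPds} in the numerator and again invoking the conjugate identity $f(y_i\mid\te_i) g(\te_i) = f_G(y_i)\pi_G(\te_i\mid y_i)$, the numerator becomes $f_G(y_i)\pi_G(\te_i\mid y_i)\bigl[1+\sum_j \LP[j;G,\Pi]\,T_j(\te_i;G)\bigr]$. Dividing by $f_{\LP}(y_i)$ from \eqref{thm:marginal} cancels the $f_G(y_i)$ factor and gives \eqref{thm:post}. For part (c), I would compute $\Ex_{\LP}[h(\Te_i)\mid y_i] = \int h(\te_i)\,\pi_{\LP}(\te_i\mid y_i)\,d\te_i$ directly from the closed form in (b); the denominator pulls out of the integral (it does not depend on $\te_i$), and the numerator integral splits into $\Ex_G[h(\Te_i)\mid y_i]$ plus $\sum_j \LP[j;G,\Pi]\,\Ex_G[h(\Te_i)T_j(\Te_i;G)\mid y_i]$, which is \eqref{eq:pm}.

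The main obstacle is not any deep analytic step but making sure the bookkeeping is transparent and that the reader sees why the conjugate factorization is doing all the work. In particular, I would flag explicitly that the entire theorem reduces to one observation: in the $\DS(G,m)$ representation, the ``correction'' factor $1+\sum_j \LP[j;G,\Pi]\,T_j(\te;G)$ rides along multiplicatively through both the marginalization (giving its $G$-posterior expectation) and the Bayes update (appearing directly in the posterior, divided by its $G$-posterior expectation). No interchange-of-limit or convergence issues arise because the basis expansion is truncated at finite $m$ and all integrals are of bounded polynomials against proper densities.
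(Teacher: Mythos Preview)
Your proposal is correct and follows essentially the same route as the paper: both proofs hinge on the conjugate-model factorization $f(y_i\mid\te_i)\,g(\te_i;\al,\be)=f_G(y_i)\,\pi_G(\te_i\mid y_i)$, use it to convert the perturbation terms in part (a) into $G$-posterior expectations, and then obtain (b) and (c) by straightforward substitution into Bayes' rule and integration. Your explicit remark that the finite truncation at $m$ obviates any interchange-of-limit concerns is a small addition not stated in the paper's proof.
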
 

\begin{proof}
The marginal distribution for $\DS(G,m)$-nonparametric model can be represented as:
\[f_{\LP}(y_i)\,=\,\int f(y_i|\te_i) \times \big\{g(\te_i;\al,\be) \,d[G(\te_i);G,\Pi]\big \} \dd \te_i.\]
Expanding the U-function in the LP-bases \eqref{eq:LPds} yields
\beq \label{eq:pa}
f_{\LP}(y_i)\,=\,f_{G}(y_i)\, +\, \sum_j \LP[j;G,\Pi]\int T_j(\te_i;G) f(y_i|\te_i) g(\te_i;\al,\be) \dd \te_i.
\eeq
The next step is to recognize that 
\beq  \label{thm:key}
f(y_i|\te_i) \,g(\te_i;\al,\be)\,=\,f_{G}(y_i)\, \pi_G(\te_i|y_i).
\eeq
Substituting \eqref{thm:key} in the second term of \eqref{eq:pa} leads to 
\beq  \label{eq:pa2}
\sum_j \LP[j;G,\Pi]\int T_j(\te_i;G) f(y_i|\te_i) g(\te_i;\al,\be) \dd \te_i\,=\,f_{G}(y_i) \sum_j \LP[j;G,\Pi]\, \Ex_G[T_j (\Te_i;G)|y_i].
\eeq
Complete the proof of part (a) by replacing \eqref{eq:pa2} into \eqref{eq:pa}.
\vskip.4em
For part (b) of posterior distribution calculation we have
\beq \label{thm-pb}
\pi_{\LP}(\te_i|y_i) = \dfrac{f(y_i|\te_i)\,g(\te_i;\al,\be)}{f_{\LP}(y_i)} \Big\{1 + \sum_j \LP[j;G,\Pi] T_j(\te_j;G)\Big\}.
\eeq
Combine \eqref{thm:marginal} and \eqref{thm:key} to verify that
\beq \label{thm:pb2} \dfrac{f(y_i|\te_i)\,g(\te_i;\al,\be)}{f_{\LP}(y_i)} \,=\,\dfrac{\pi_G(\te_i|y_i)}{1+ \sum_j \LP[j;G,\Pi] ~\Ex_G[T_j(\Te_i;G)|y_i]}.\eeq
Finish the proof of part (b) by replacing \eqref{thm:pb2} into \eqref{thm-pb}.
\vskip.4em
Part (c) is straightforward as 
\[\Ex_{\LP}[h(\Theta_i)|y_i] = \int h(\te_i) \pi_{\LP}(\te_i|y_i) \dd \te_i,\]
which is same as 
\[ \dfrac{\int h(\te_i) \pi_G(\te_i|y_i)\big\{1 + \sum_j \LP[j;G,\Pi] T_j(\te_j;G)\big\}\dd \te_i}{1+ \sum_j \LP[j;G,\Pi] ~\Ex_G[T_j(\Te_i;G)|y_i]},\]
by \eqref{thm:post}. Hence, result \eqref{eq:pm} is immediate.
\end{proof}
Our LP-Bayes recipe \eqref{thm:marginal}-\eqref{eq:pm}, admits some interesting overall structure: The usual `parametric' answer multiplied by a correction factor involving $\LP[j;G,\Pi]$'s. This decoupling pays dividends for theoretical interpretation as well as computation.

\subsection{Algorithm}\label{sec:EstAlgo}
The critical parameters of our $\DS(G,m)$ model are the LP-Fourier coefficients, which, as is evident from \eqref{eq:lpt}, could be estimated simply by their empirical counterpart $\widehat{\LP}[j;G,\Pi] = k^{-1}\sum_{i=1}^k T_j(\te_i;G)$. But as we pointed out earlier, $\te_1,\ldots,\te_k$ are unobservable. How can we then estimate those parameters? While the $\te_i$'s are \textit{unseen}, it is interesting to note that they have left their footprints in the observables $y_1,\ldots,y_k$ with distribution $f(y_i)=\int f(y_i|\te_i) \pi(\te_i)\dd \te_i$. Following the spirit of the EM-algorithm, an obvious proxy for $T_j(\te_i;G)$ would be its posterior mean $\Ex_{\LP}[T_j(\Theta_i;G)|y_i]$, which also naturally arises in the expression \eqref{thm:marginal}. This leads to the following `ghost' LP-estimates:  
\beq \label{eq:Glp}
\tLP[j;G,\Pi]\,=k^{-1}\sum_{i=1}^k \Ex_{\LP}\big[T_j(\Te_i;G)|y_i\big],
\eeq
satisfying $\Ex\{\tLP[j;G,\Pi]\}=\hLP[j;G,\Pi]\,\, (j=1\ldots, m)$, by virtue of the law of iterated expectations. These estimates can then be refined via iterations. 
\vskip1em
\makebox[\textwidth]{\textbf{Type-II Method of Moments: Estimation of LP-Coefficients in $\DS(G,m)$}}
\rule{\textwidth}{.8pt}
\vskip.1em
\texttt{Step 0.} Input: Data $(y_1,\ldots,y_k)$ and $m$. Choice of $\al$ and $\be$: based on expert's knowledge, otherwise, we use MLE empirical estimate as our default starting choice.
\vskip.7em
\texttt{Step 1.} Initialize: $\LP^{(0)}[j;G,\Pi]=0$ for $j=1,\ldots,m$. For iteration $\ell > 0$, perform steps (2-3) until convergence: $ \sum_{j=1}^m \big | \tLP^{(\ell)}[j;G,\Pi] - \tLP^{(\ell-1)}[j;G,\Pi] \big|^2 \le \epsilon$.
\vskip.7em

\texttt{Step 2.}  Compute $\Ex_{\{\ell-1\}}[T_j(\Te_i;G)|y_i]$ by plugging $\big\{\hspace{-.2em}\tLP^{(\ell-1)}[j;G,\Pi]\big\}_{j=1}^m$ into \eqref{eq:pm}, where $h(\te_i)=\Leg_j \hspace{-.08em}\circ \,G(\te_i)$. 
\vskip.7em
\texttt{Step 3.} Determine the `ghost' LP-estimates: \[\tLP^{(\ell)}[j;G,\Pi] ~= ~k^{-1}\sum\nolimits_{i=1}^k \Ex_{\{\ell-1\}}[T_j(\Te_i;G)|y_i]~~~~(j=1,\ldots,m).\] 
\vskip.1em
\texttt{Step 4.} Return the final estimated LP-coefficients of $\DS(G,m)$ model together with $\whd(u;G,\Pi)$ and $\widehat{\pi}(\te)$.\\
\rule{\textwidth}{.8pt}
\vskip.5em
We conclude this section with a few remarks on the algorithm:
\begin{itemize}[itemsep=1pt,topsep=1.24pt]
\item Taking inspiration from I. J. Good's type II maximum likelihood nomenclature \cite{good1983book}, we call our algorithm \textit{Type-II} Method of Moments (MOM), whose computation is remarkably tractable and does not require \textit{any} numerical optimization routine.
\item To enhance the results, we smooth the output of MOM-II algorithm as follows:  determine significantly non-zero LP-coefficients via Schwartz's BIC-based smoothing. Arrange $\hLP[j;G,\Pi]$'s in a decreasing magnitude and choose $m$ that maximizes 
\begin{equation*}
\text{BIC}(m) \,=\, \sum_{j=1}^m |\hLP[j;G,\Pi]|^2 - \dfrac{m\log(k)}{k}.
\end{equation*}
See Supplementary Appendix D for more details. Furthermore, Supplementary Appendix I discusses how MOM-II Bayes algorithm can be adapted to yield LP-maximum entropy prior density estimate \cite{lpmode2017}.
\end{itemize}
\subsection{Results}\label{sec:EstResults}
In addition to the rat tumor data (cf. Section \ref{sec:Ufunc}), here we introduce and analyze three additional datasets: two binomial and one Poissonian example.
\begin{itemize}[itemsep=1pt,topsep=1.24pt]
\item The surgical node data \cite{efron2014bayes} involves number of malignant lymph nodes removed during intestinal surgery.  Each of the $k = 844$ patients underwent surgery for cancer, during which surgeons removed surrounding lymph nodes for testing.  Each patient has a pair of data ($n_i, y_i$), where $n_i$ represents the total nodes removed from patient $i$ and $y_i \sim {\rm Bin}(n_i,\te_i)$ are the number of malignant nodes among them. 
\item The Navy shipyard data \cite{martz1974empirical} consists of $k=5$ samples of the number of defects $y_i$ found in $n_i=5$ lots of welding material. 
\item The insurance data \cite{efron2016computer}, shown in Table \ref{tbl:InsMicro}, provides a single year of claims data for an automobile insurance company in Europe. The counts $y_i \sim {\rm Poisson}(\te_i)$ represent the total number of people who had $i$ claims in a single year.
\end{itemize}
Figure \ref{fig:DC_prior_all} displays the estimated LP-$\DS(G,m)$ priors along with the default parametric (empirical Bayes) counterparts. The estimated LP-Fourier coefficients together with the choices of hyperparameters $(\al,\be)$ are summarized below: 
\vspace{-.4em}
\begin{enumerate}[label=(\alph*)]
\item Rat tumor data, $g$ is the beta distribution with MLE $\alpha = 2.30$, $\beta= 14.08$:
\begin{equation}\label{eq:rat_pi_hat}
\hat{\pi}(\theta) = g(\theta; \alpha,\beta)\big[1 - 0.50T_3(\theta;G) \big].
\end{equation}
\item Surgical node data, $g$ is the beta distribution with MLE $\alpha = 0.32$, $\beta= 1.00$:
\begin{equation}\label{eq:surg_pi_hat}
\hat{\pi}(\theta) = g(\theta; \alpha,\beta)\big[1 - 0.07T_3(\theta;G) - 0.11T_4(\theta;G) + 0.09T_5(\theta;G) + 0.13T_7(\theta;G) \big].
\end{equation}
\item Navy shipyard data, $g$ is the Jeffreys prior with $\alpha = 0.5 $, $\beta= 0.5$:
\begin{equation}\label{eq:shp_pi_hat}
\hat{\pi}(\theta) = g(\theta; \alpha,\beta)\big[1 - 0.67T_1(\theta;G) + 0.90T_2(\theta;G) \big].
\end{equation}
\item Insurance data, $g$ is the gamma distribution with MLE $\alpha = 0.70$ and $\beta = 0.31$:
\begin{equation}\label{eq:ins_pi_hat}
\hat{\pi}(\theta) = g(\theta; \alpha,\beta)\big[1 - 0.26T_2(\theta;G)\big].
\vspace{-.4em}
\end{equation}
\end{enumerate}
\begin{figure}[t]
\centering
\begin{subfigure}{.35\textwidth}
  \centering
  \includegraphics[width=\linewidth,trim=1cm 0cm 1cm 1cm]{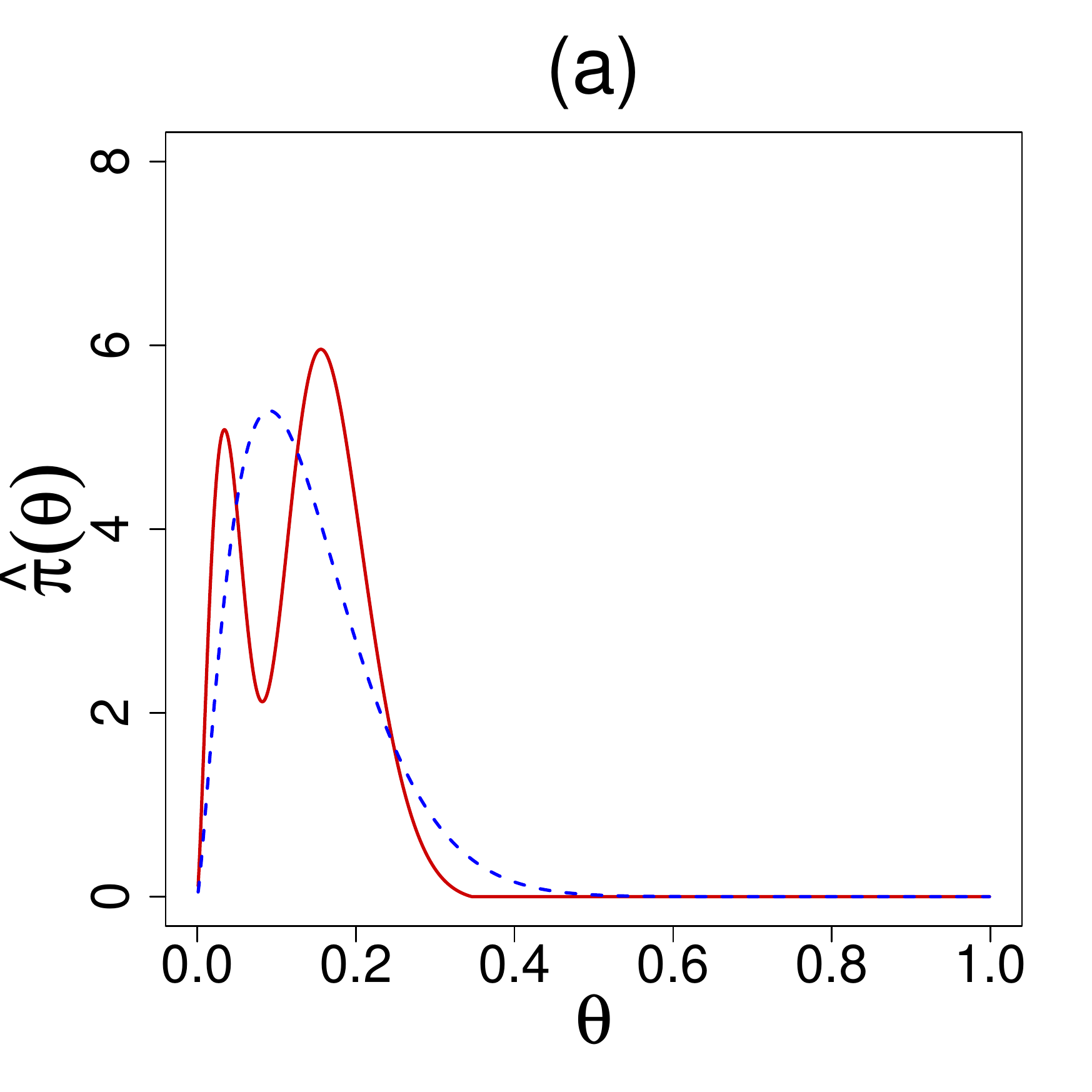}
\end{subfigure}\hspace{1.5cm}%
\begin{subfigure}{.35\textwidth}
  \centering
\includegraphics[width=\linewidth,trim=1cm 0cm 1cm 1cm]{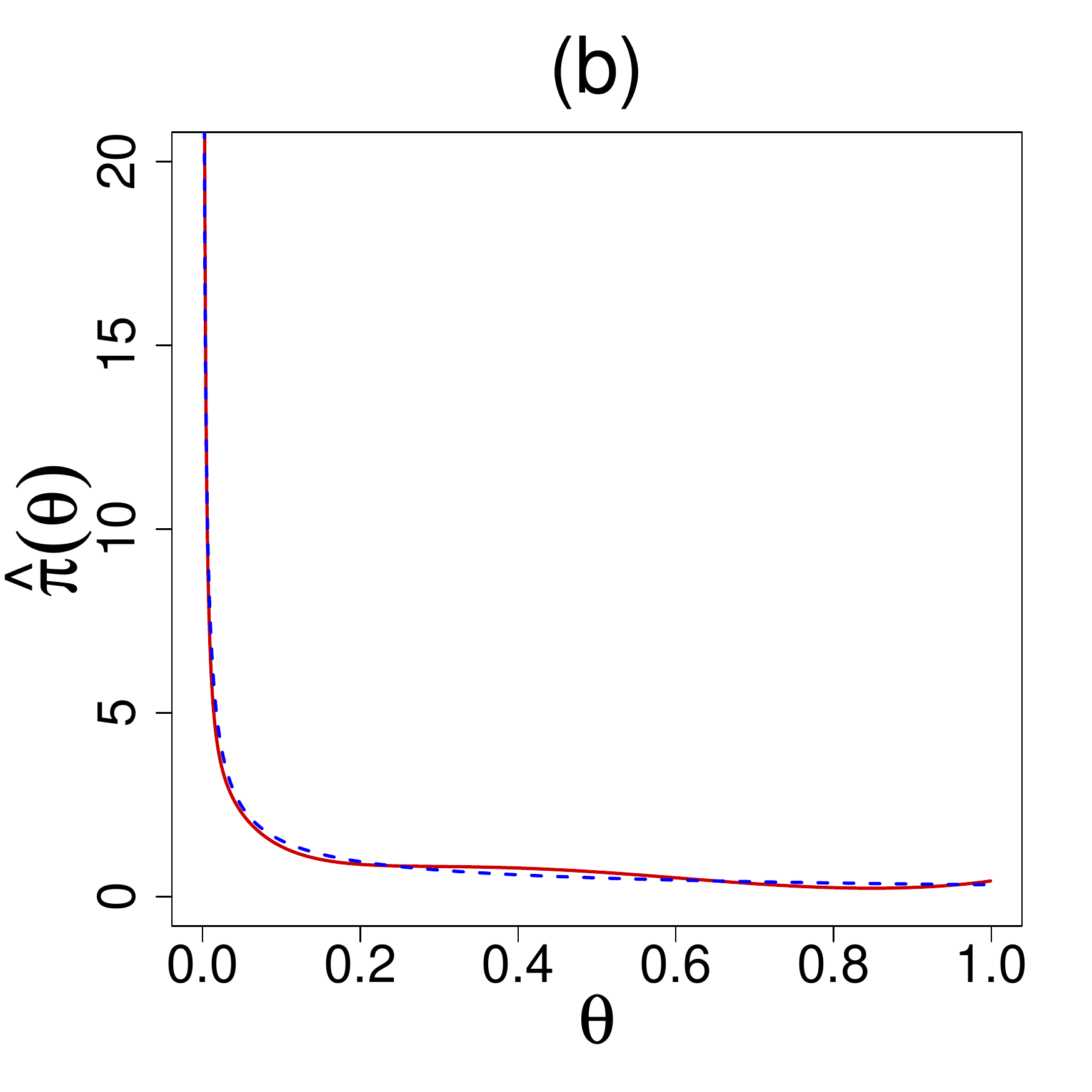}
\end{subfigure}\hspace{1mm}%
\begin{subfigure}{.35\textwidth}
  \centering
  \includegraphics[width=\linewidth,trim=1cm 1cm 1cm 0cm]{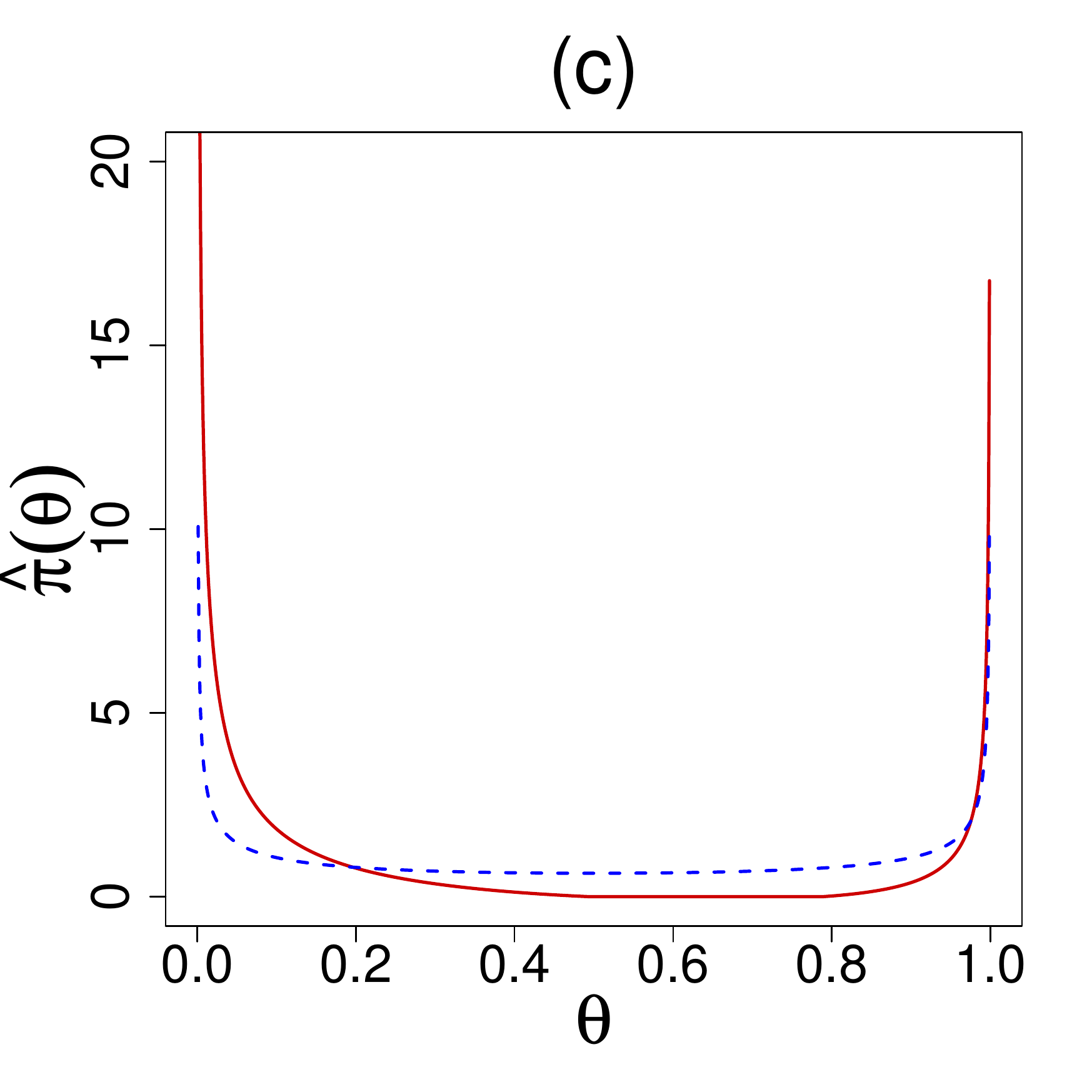}
\end{subfigure}\hspace{1.5cm}%
\begin{subfigure}{.35\textwidth}
  \centering
  \includegraphics[width=\linewidth,trim=1cm 1cm 1cm 0cm]{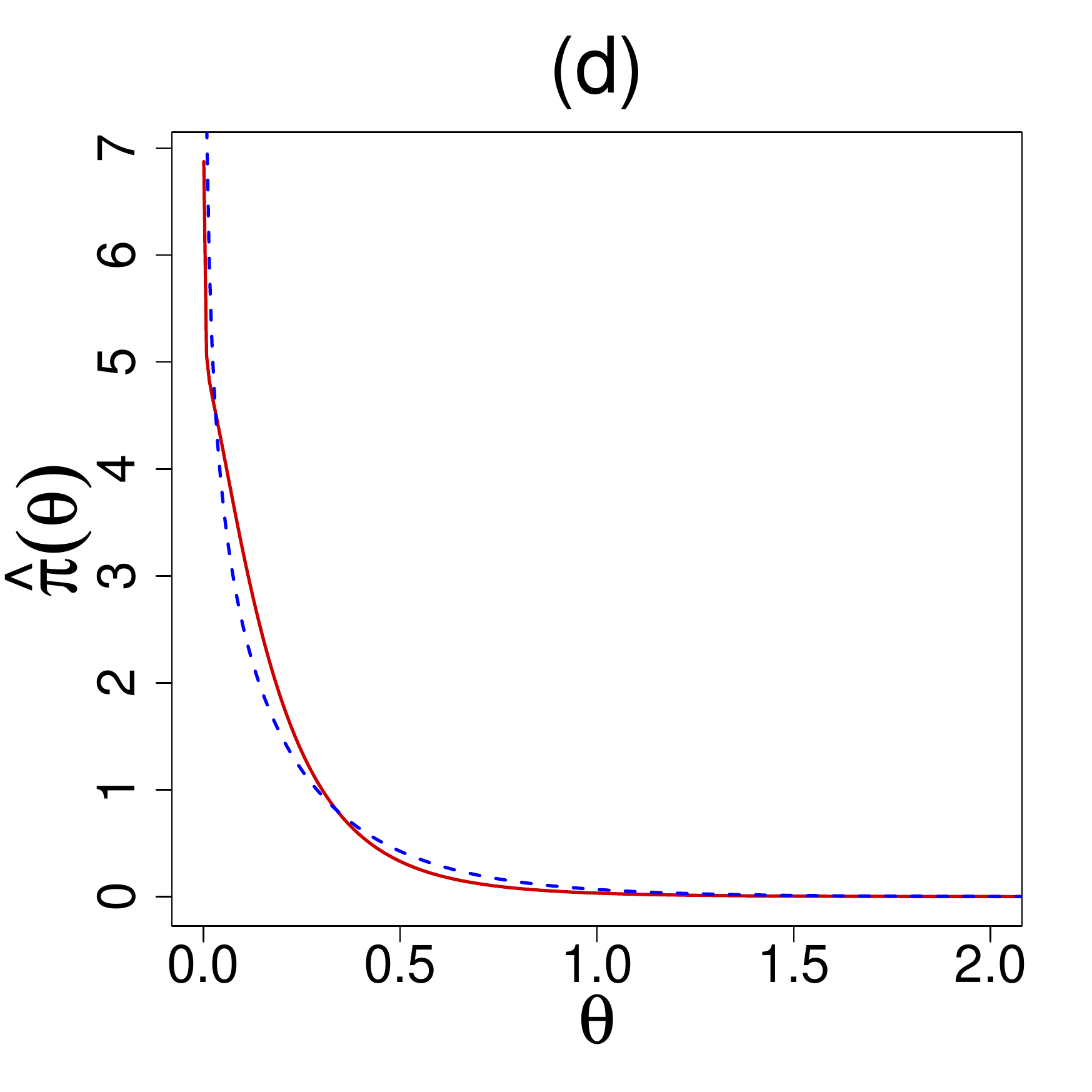}
\end{subfigure}
\vskip.6em
\caption{Comparisons of the $\DS(G,m)$ prior $\hat{\pi}(\theta)$ (solid red) with the respective parametric EB (PEB) priors $g(\theta; \alpha,\beta)$ (dashed blue) for the (a) rat tumor data, (b) surgical node data, (c) Navy shipyard data, and (d) insurance data.}
\label{fig:DC_prior_all}
\end{figure} 
The rat tumor data shows a prominent bimodal shape, which should not come as a surprise in light of Fig. \ref{fig:sec3_3CD}(a). For the surgical data, DS-prior puts excess mass around $0.4$, which concurs with the findings of Efron \cite[Sec 4.2]{efron2014bayes}. In the case of the Navy shipyard data, our analysis corrects the starting ``U'' shaped Jeffreys prior to make it asymmetric with an extended peak at $0$. This is quite justifiable looking at the proportions in the given data: $(0/5, 0/5, 0/5, 1/5, 5/5)$. Finally, for the insurance data, the starting gamma prior requires a second-order (dispersion parameter) correction to yield a bona-fide $\hat \pi$ \eqref{eq:ins_pi_hat}, which makes it slightly wider in the middle with sharper peak and tail.
\section{Inference}\label{sec:Inference}
\subsection{MacroInference}\label{sec:MacInf}
A single study hardly provides adequate evidence for a definitive conclusion due to the limited sample size. Thus, often the scientific interest lies in combining several \textit{related but (possibly) heterogeneous} studies to come up with an overall macro-level inference that is more accurate and precise than the individual studies. This type of inference is a routine exercise in clinical trials and public policy research.
\vskip1em

{\bf Terbinafine data analysis}. For the terbinafine data, the aim is to combine $k=41$ treatment arms with varying event rates and produce a pooled proportion of patients who withdrew from the study because of the adverse effects of oral anti-fungal agents. Recall that our U-function diagnostic in Fig. \ref{fig:sec3_3CD}(b) indicated the parametric beta-binomial model with MLE estimates $\al=1.24$ and $\be=34.7$ as a justifiable choice for this data. Thus the adverse event probabilities across $k=41$ studies can be summarized by the prior mean $\frac{\al}{\al+\be}=.034$. We apply parametric bootstrap using $\DS(G,m)$-sampler (see Supplementary Appendix C) with $m=0$ to compute the standard error (SE): $0.034 \pm 0.006$, highlighted in the Fig. \ref{fig:sec5_mode_int}(b). If one assumes a \textit{single} binomial distribution for all the groups (i.e., under homogeneity), then the `naive' average $\sum_{i=1}^k y_i/\sum_{i=1}^k n_i$ would lead to an overoptimistic biased estimate $0.037 \pm 0.0034$. In this example, heterogeneity arises due to overdispersion among the exchangeable studies. But there could be other ways too. An example is given in the following case study.

\vskip1em
{\bf Rat tumor and rolling tacks data analysis}. Can we always extract a ``single'' overall number to aptly describe $k$ parallel studies? Not true, in general. In order to appreciate this, let us look at Figs. \ref{fig:sec5_mode_int} (a,c), which depict the estimated $\DS$-prior for the rat tumor and rolling tacks data. We highlight two key observations: 
\vskip.75em

~~~~~1. \textit{Mixed population}. The bimodality indicates the existence of two distinct groups of $\te_i$'s. We call this ``\textit{structured heterogeneity},'' which is in between two extremes: homogeneity and complete heterogeneity (where there is no similarity between the $\te_i$'s whatsoever). The presence of two clusters for the rolling tacks data was previously detected by Jun Liu \cite{liu1996nonparametric}. The author further noted,
``Clearly, this feature is unexpected and cannot be revealed by a
regular parametric hierarchical analysis using the Beta-binomial priors.'' One plausible explanation for this two-group structure was attributed to the fact that the tack data were produced by two persons with some systematic difference in their flipping. On the other hand, the bimodal shape of the rat example was not previously anticipated \cite{tarone1982use, dempster1983combining,gelman2013bayesian}. 
\begin{figure}[t]
\begin{subfigure}{.32\textwidth}
  \centering
  \includegraphics[width=\linewidth,trim=1cm .5cm 0cm 1cm]{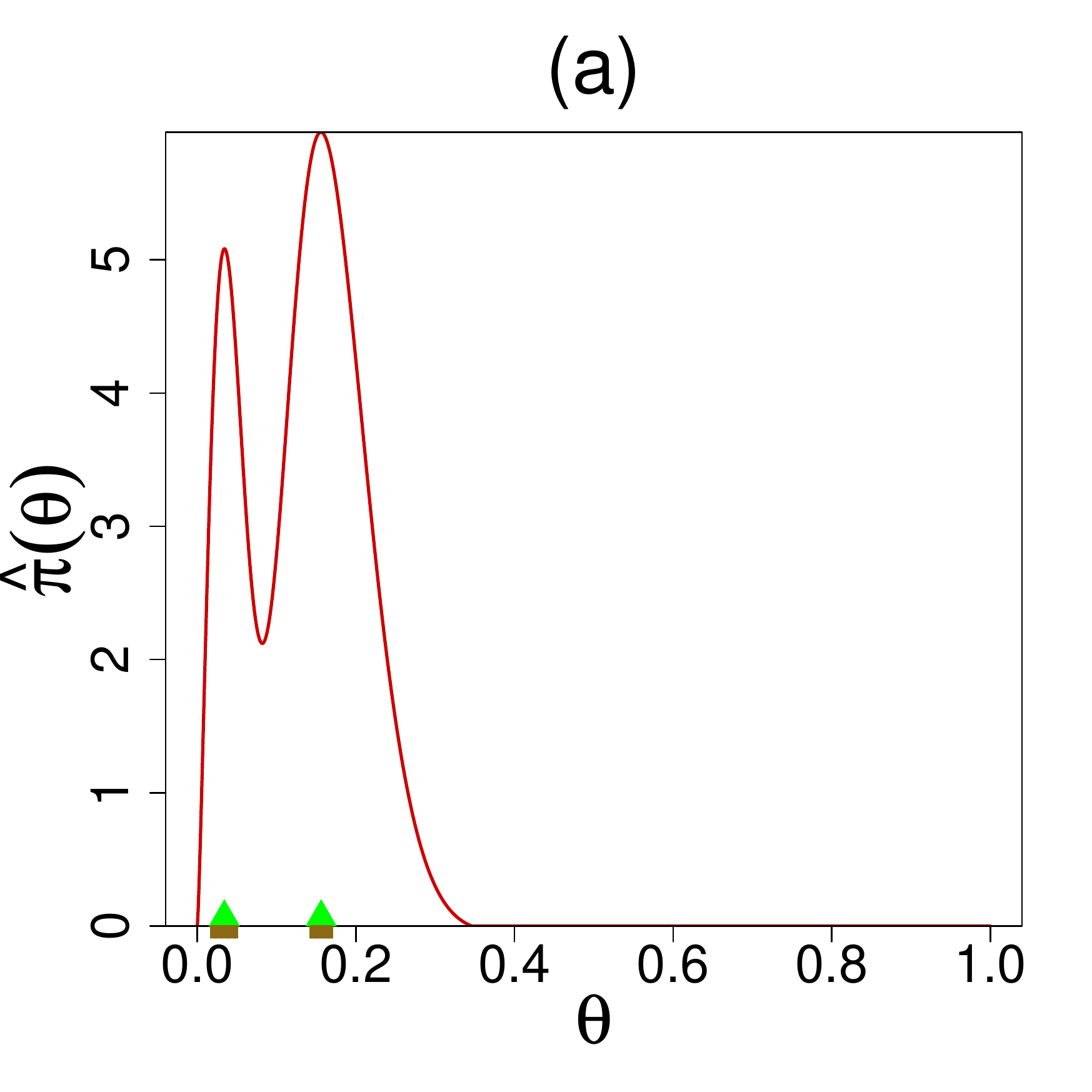}
\end{subfigure}\hspace{1.5mm}%
\begin{subfigure}{.32\textwidth}
  \centering
\includegraphics[width=\linewidth,trim=.5cm .5cm .5cm 1cm]{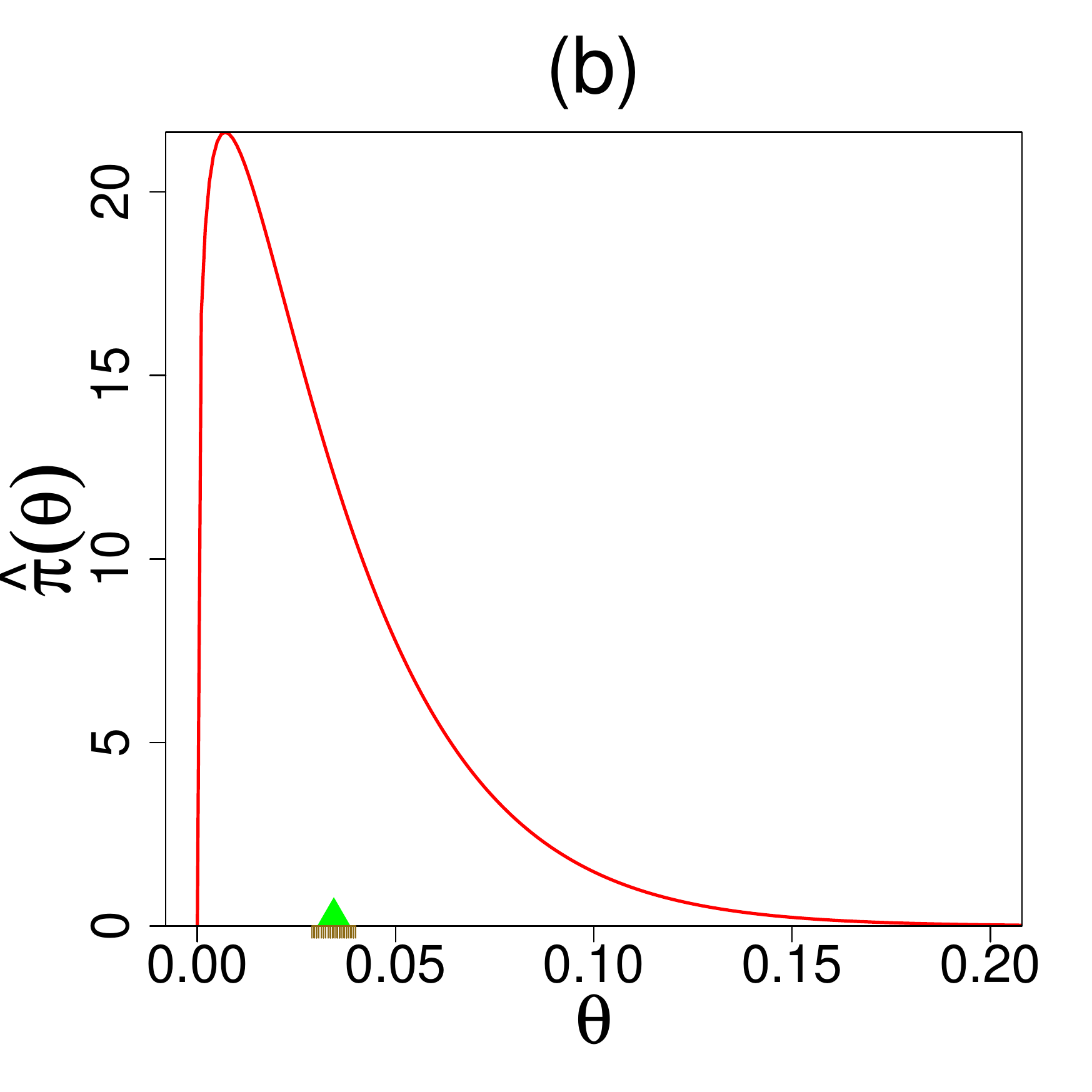}
\end{subfigure}\hspace{1.5mm}%
\begin{subfigure}{.32\textwidth}
  \centering
  \includegraphics[width=\linewidth,trim=0cm .5cm 1cm 1cm]{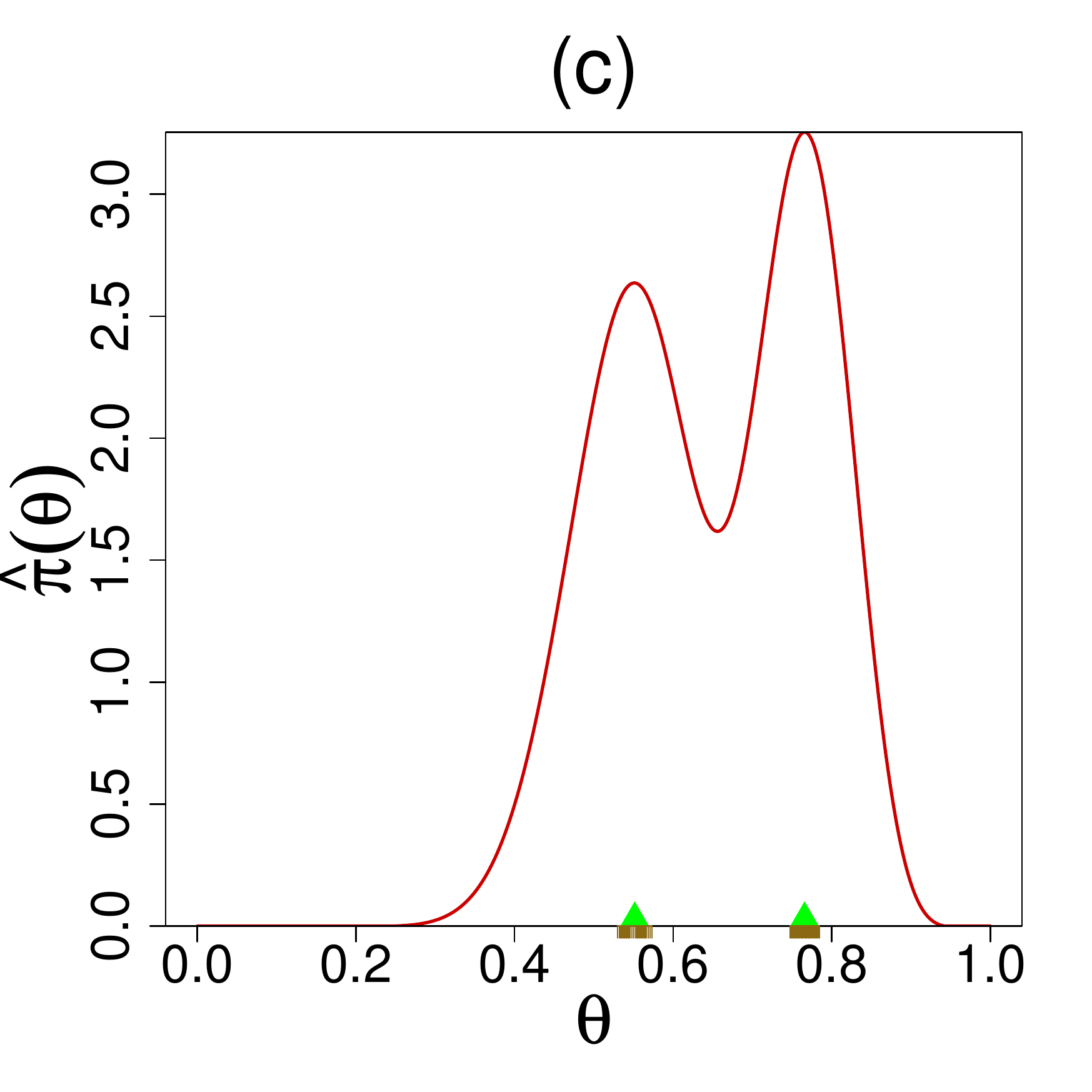}
\end{subfigure}
\caption{Estimated macro-inference summary along with standard errors (using smooth bootstrap) are shown. Panel (a) displays the rat tumor data modes located at $0.034\ (\pm 0.016)$ and $0.156\ (\pm 0.016)$. Panel (b) shows the estimated unimodal prior of the terbinafine data has a mean at $0.034\ (\pm 0.006)$. Panel (c) presents the modes of the rolling tacks data at $0.55\ (\pm 0.022)$ and $ 0.77\ (\pm 0.018)$.}
\label{fig:sec5_mode_int} 
\end{figure}
The resulting two groups of rat tumor experiments are enumerated in the Table \ref{tbl:rat_cluster}. Although we do not have the necessary biomedical background to scientifically justify this new discovery, we are aware that potentially numerous factors (e.g., experimental design, underlying conditions, selection of specific groups of female rats) may contribute to creating this systemic variation.
\begin{table}[th]
\centering
\caption{\label{tbl:rat_cluster} Two group partitions of the rat tumor studies based on K-means clustering on the posterior mode predictions (see Section \ref{sec:micro} and Fig. \ref{fig:rat_pospred3}(c)).}
\vskip.65em
\begin{tabular}{ll}
  \toprule
 Group & ~~~~~~~~~~~~~~~~~~~~~~~~~~~~~~~~~~~~~~~~~~~~~~Studies \\
  \midrule
 \multirow{ 2}{*}{~~~1}  & (0,20), (0,20), (0,20), (0,20), (0,20), (0,20), (0,20), (0,19), (0,19), (0,19), (0,19) \\[.25em] & (0,18), (0,18), (0,17), (1,20), (1,20), (1,20), (1,20), (1,19), (1,19), (1,18), (1,18)\\ [2em]
\multirow{ 5}{*}{~~~2} & (3,27), (2,25), (2,24), (2,23), (2,20), (2,20), (2,20), (2,20), (2,20), (2,20), (1,10)\\[.25em] & (5,49), (2,19), (5,46), (2,17), (7,49), (7,47), (3,20), (3,20), (2,13), (9,48), (10,50)\\[.25em] & (4,20), (4,20), (4,20), (4,20), (4,20), (4,20), (4,20), (10,48), (4,19), (4,19), (4,19)\\[.25em] & (5,22), (11,46), (12,49), (5,20), (5,20), (6,23), (5,19), (6,22), (6,20), (6,20), (6,20)\\[.25em] & (16,52), (15,46), (15,47), (9,24) \\
\bottomrule
\end{tabular}
\end{table}
\vskip.5em
~~~~~2. \textit{From single mean to multiple modes}. An attempt to combine the two subpopulations using a single prior mean (as carried out for the terbinafine example) would result in overestimating one group and underestimating another. We prefer \textit{modes} of $\widehat{\pi}(\te)$, along with their SEs, as a good representative summary, which can be easily computed by the nonparametric smooth bootstrap via $\DS(G,m)$ sampler.

\vskip1em
Learning from big heterogeneous studies is one of the most important yet unsettled matters of modern macroinference \cite{cox1988,efron1996empirical}. Our key insight is the realization that the `science of combining' critically depends on the \textit{shape} of the estimated prior. One interesting and commonly encountered case is multimodal structure of the learned prior. In such situations, instead of the prior-mean summary, we recommend group-specific modes. Our algorithm is also capable of finding data-driven clusters of the partially exchangeable studies in a fully automated manner.

\begin{figure}[t]
\vskip1em
\centering
\begin{subfigure}{.44\textwidth}
  \centering
  \includegraphics[width=\linewidth,trim=1cm .5cm 0cm 1cm]{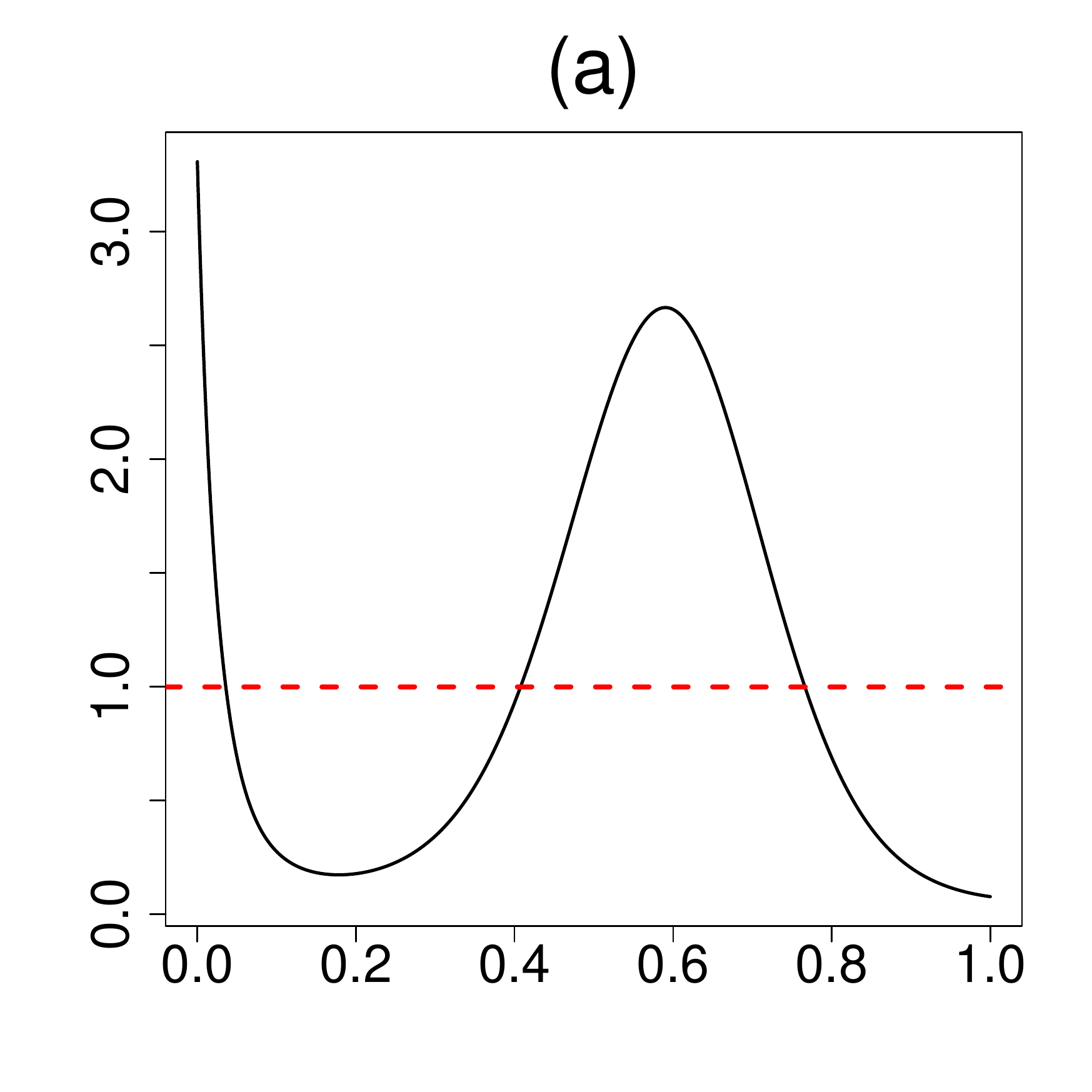}
\end{subfigure}\hspace{8mm}%
\begin{subfigure}{.44\textwidth}
  \centering
\includegraphics[width=\linewidth,trim=.4cm .5cm .5cm 1cm]{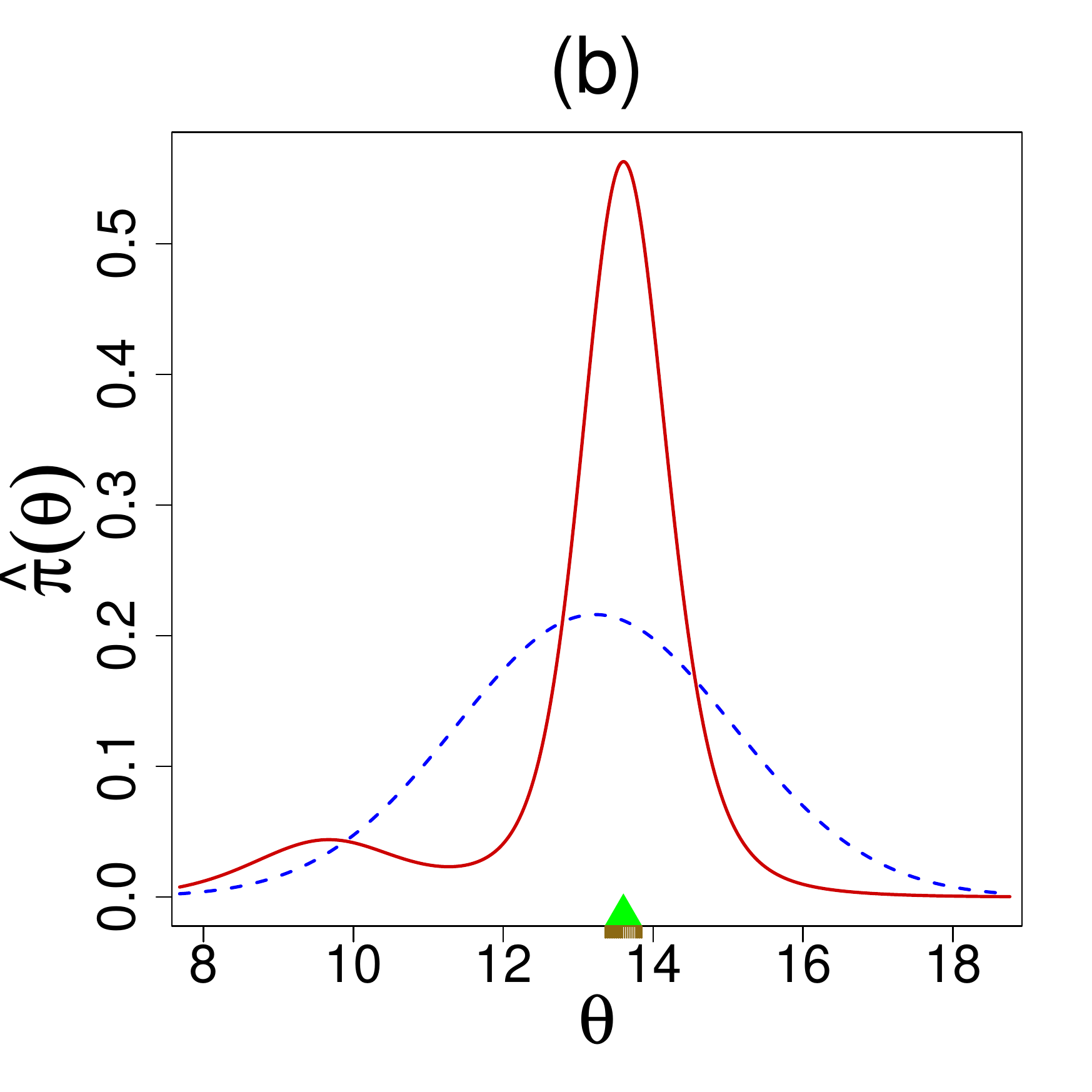}
\end{subfigure}
\caption{Panel (a) shows the U-function, while panel (b) compares the DS-prior $\hat{\pi}(\theta)$ (solid red) with the PEB prior $g(\theta; \alpha,\beta)$ (dashed blue) for the arsenic data. Based on the estimated macro-inference summary along with standard errors (using smooth bootstrap), the best consensus value is the mode $13.6\ (\pm 0.242)$.}
\label{fig:ars_ufunc_ds}
\vspace{-.5em}
\end{figure}

\vskip.5em
\subsection{Learning From Uncertain Data}\label{sec:InfLearningArs}
An important problem of measurement science that routinely appears in metrology, chemistry, physics, biology, and engineering can be stated as follows: measurements are made by $k$ different laboratories in the form of $y_1,\ldots,y_k$ along with their estimated standard errors $s_1,\ldots,s_k$. Given this uncertain data, a fundamental problem of interest is inference concerning: (i) estimation of the consensus value of the measurand, and (ii)  evaluation of the associated uncertainty. The data in Table \ref{tbl:ars_data} are an example of such an inter-laboratory study involving $k=28$ measurements for the level of arsenic in oyster tissue. The study was part of the National Oceanic and Atmospheric Administration’s National Status and Trends Program Ninth Round Intercomparison Exercise \cite{willie1995ninth}. 
\begin{table}[H]
\setlength{\tabcolsep}{5pt}
\centering
\caption{\label{tbl:ars_data} Measurements (sorted) along with their uncertainty from different laboratories in arsenic data.} 
\begin{tabular}{lcccccccccc}
  \toprule
Laboratory & 1 & 2 & 3 & 4 & 5 & $\cdots$ & 25 & 26 & 27 & 28 \\
  \midrule
Measurement ($y_i$) & 9.78 & 10.18 & 10.35 & 11.60 & 12.01 & $\cdots$ & 14.70 & 15.00 & 15.10 & 15.50 \\
Uncertainty ($s_i$) & 0.30 & 0.46 & 0.07 & 0.78 & 2.62 & $\cdots$ & 0.30 & 1.00 & 0.20 & 1.60 \\ 
   \bottomrule
\end{tabular}
\end{table}
{\bf Arsenic data analysis}. We start with the DS-measurement model: $Y_i|\Te_i=\te_i \sim \cN(\te_i,s_i^2)$ and $\Theta_i \sim \DS(G,m)$ $(i=1,\ldots, 28)$ with $G$ being $\cN(\mu,\tau^2)$. The shape of the estimated U-function in Fig. \ref{fig:ars_ufunc_ds}(a) indicates that the pre-selected  prior $\cN(\hat \mu=13.22, \hat \tau^2=1.85^2)$ is clearly unacceptable for arsenic data, thereby disqualifying the classical Gaussian random effects model \cite{rukhin1998}. The DS-corrected $\widehat \pi$ shows some interesting asymmetric pattern with two-bumps. The left-mode represents measurements from three laboratories that are unlike the majority. The result of our macro-inference is shown in Fig. \ref{fig:ars_ufunc_ds}(b), which delivers the  consensus value $13.6 \pm 0.24$. This is clearly far more resistant to fairly extreme low measurements and surprisingly, also more accurate when compared to the parametric EB estimate $13.22 \pm 0.26$. Most importantly, our scheme provides an automated solution to the fundamental problem of \textit{which (as well as how)} measurements from the participating laboratories should be combined to form a best consensus value. Possolo \cite{possolo2013five} fits a Bayesian hierarchical model with prior as Student’s $t_\nu$, where the degrees of freedom was also treated as a random variable over some arbitrary range $\{3,\ldots, 118\}$. Although a heavy-tailed Student's t-distribution is a good choice to `robustify' the analysis, it fails to capture the inherent asymmetry and the finer modal structure on the left. Distinguishing long-tail from bimodality is an important problem of applied statistics by itself.  
\vskip.2em
To summarize, there are several attractive features of our general approach: (i) it adapts to the structure of the data, yet (ii) allows the use of expert opinion to go from knowledge-based prior to statistical prior; (iii) if multiple expert opinions are available, one can also use the U-diagnostic for reconciliation--exploratory uncertainty assessment; (iv) it avoids the questionable exercise of detecting and discarding apparently unusual measurements \cite{possolo2009lab}, and finally (v)  our theory is still applicable for very small number of parallel cases (cf. Fig. \ref{fig:DC_prior_all}(c)), a situation which is not uncommon in inter-laboratory studies.
\subsection{MicroInference}\label{sec:micro}
The objective of microinference is to estimate a specific microlevel $\te_i$ given $y_i$. Consider the rat tumor example where, along with earlier $k=70$ studies, we have an additional current experimental data, that shows $y_{71}=4$ out of $n_{71}=14$ rats developed tumors. How can we estimate the probability of a tumor for this new clinical study? There could be at least three ways to answer this question:
\begin{itemize}[itemsep=1.24pt,topsep=1.24pt]
\item Frequentist MLE estimate: An obvious estimate would be the sample proportion $\wtte_i:$ $y_{71}/n_{71}=0.286$. This operates in an isolated manner, completely ignoring the additional historical information of $k=70$ studies.
\item Parametric empirical Bayes estimate: It is reasonable to expect that the historical data from earlier studies may be related to the current $71$st study, thus borrowing information can result in improved estimator of $\te_{71}$. Bayes posterior mean estimate $\widecheck{\te}_i=\Ex_G[\Te_i|y_i]$ operationalizes this heuristic, which in the Binomial case takes the following form:
\beq \label{eq:ss}
\wcte_i=\dfrac{n_i}{\al+\be+n_i} \, \wtte_i\,+\,\dfrac{\al+\be}{\al+\be+n_i} \Ex_G[\Te].\eeq
This is famously known as Stein's shrinkage formula \cite{stein1956,efron1975}, as it pulls the sample proportions toward the \textit{overall} mean of the prior $\frac{\al}{\al+\be}$. For smaller ($n_i$) studies, shrinkage intensity is higher, which allows them to learn from other experiments. 
\item Nonparametric Elastic-Bayes estimate: Is it a wise strategy to shrink all $\wtte_i$'s toward the grand mean $0.14$? Interestingly, this shrinking point is near the valley between the twin-peaks of the rat tumor prior density estimate (verify from Fig. \ref{fig:sec5_mode_int}(a)) and therefore may not represent a preferred location. Then, \textit{where to shrink?} Ideally, we want to learn only from the \textit{relevant} subset of the full dataset--\textit{selective shrinkage}, e.g., for rat data, it would be the group 2 of Table \ref{tbl:rat_cluster}. This brings us to the question: how to rectify the parametric empirical Bayes estimate $\wcte_i $? The formula \eqref{eq:pm} gives us the required (nonlinear) adjusting factor:
\beq \label{eq:LPstein}
\hte_i \,= \,\frac{ \wcte_i +\sum_j \widehat{\LP}[j;G,\Pi]\,\Ex_G[\Theta_i T_j(\Theta_i;G)| y_i]}{1+\sum_j\widehat{\LP}[j;G,\Pi]\, \Ex_G[T_j(\Te_i;G)|y_i]},
\eeq
dictating the magnitude and direction of shrinkage in a completely data-driven manner via LP-Fourier coefficients. Note that when $d \equiv 1$, i.e., all the $\LP[j;G,\Pi]$ are zero, \eqref{eq:LPstein} reproduces the parametric $\wcte_i$. Due to its flexibility and adaptability, we call this the Elastic-Bayes estimate. This can be considered as a nonparametric class of shrinkage estimators that starts with the classical Stein's formula and rectifies it by looking at the data. 
\end{itemize}
\begin{figure}[t]
\begin{subfigure}{.33\textwidth}
\vskip.65em
  \centering
  \includegraphics[width=\linewidth,trim=1cm 0cm 0cm 1cm]{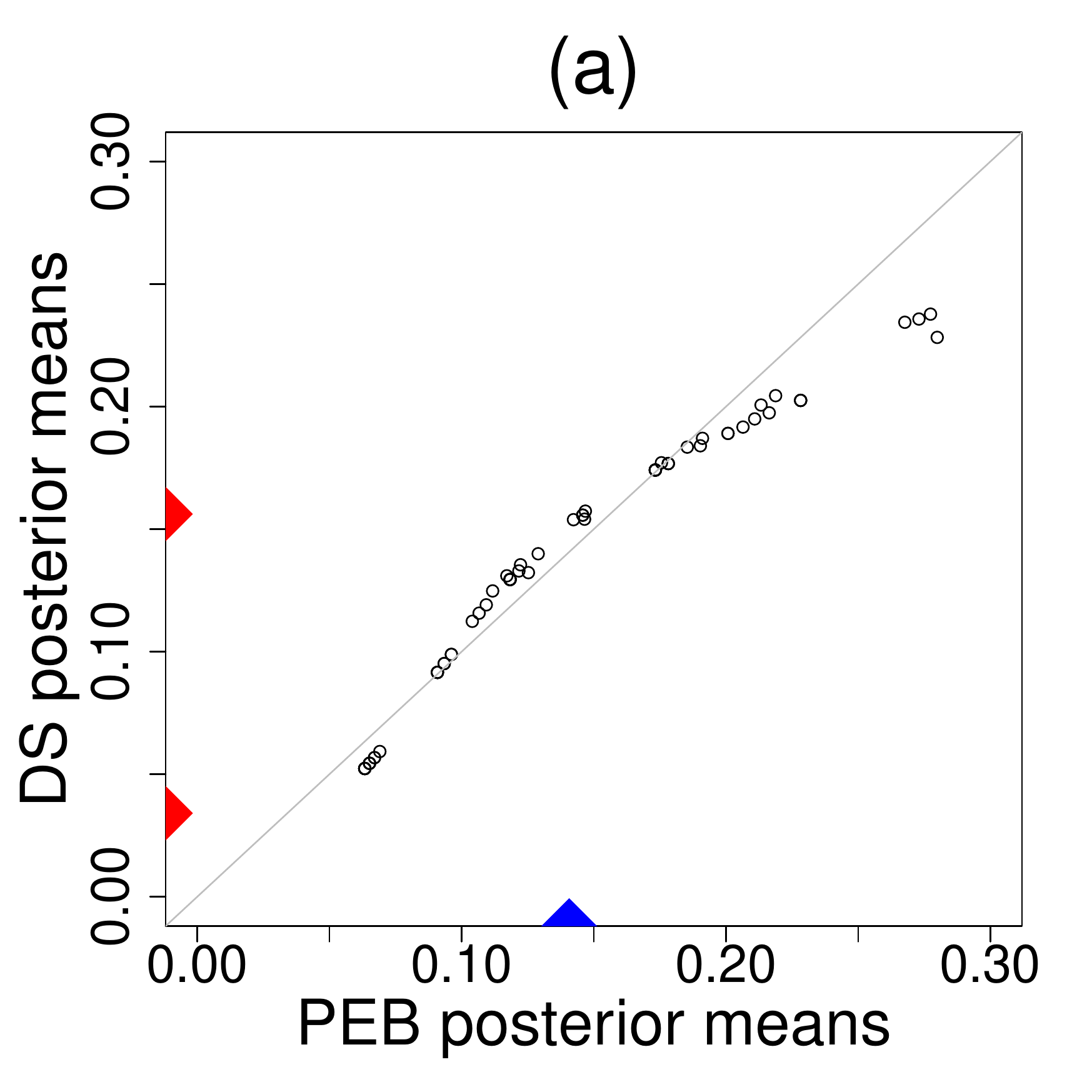}
\end{subfigure}\hspace{2mm}%
\begin{subfigure}{.33\textwidth}
  \centering
\includegraphics[width=\linewidth,trim=.5cm 0cm .5cm 1cm]{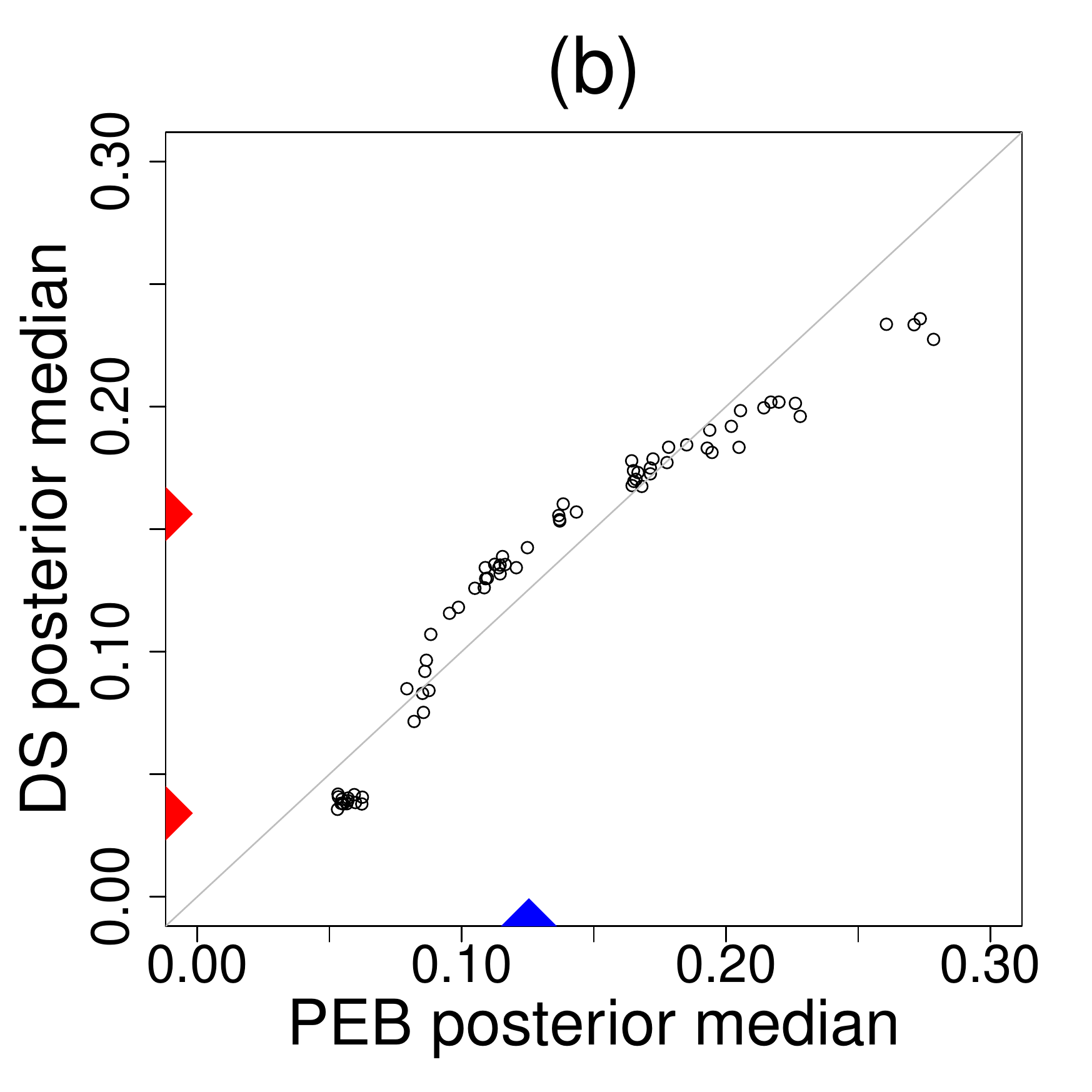}
\end{subfigure}\hspace{2mm}%
\begin{subfigure}{.33\textwidth}
  \centering
  \includegraphics[width=\linewidth,trim=0cm 0cm 1cm 1cm]{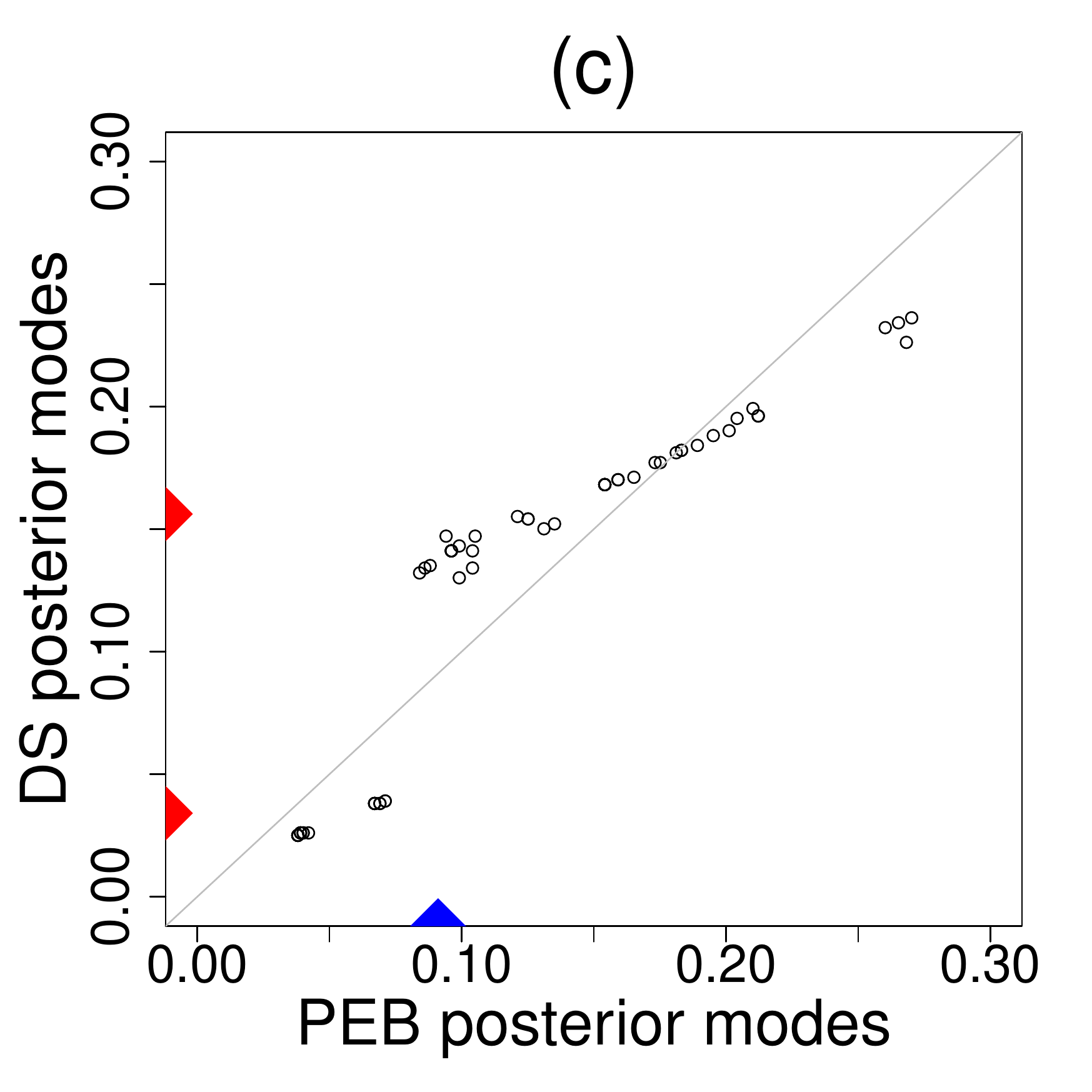}
\end{subfigure}
\caption{Comparisons of DS Elastic-Bayes and PEB posterior predictions of the rat tumor data: (a) posterior means, (b) posterior medians, and (c) posterior modes.  The vertical red triangles indicate the location of the modes on the DS prior; the blue triangles respectively denote the mean, median, and mode of the parametric ${\rm Beta}(\hat \al=2.3,\hat \be=14.08)$.}
  \label{fig:rat_pospred3}
\end{figure}
{\bf Rat tumor example}. Figure \ref{fig:rat_pospred3} compares Stein's empirical Bayes estimate with our Elastic-Bayes estimate for the all $k=70$ tumor rates. Posterior mean, median, and mode of $\te_j$'s are shown side by side in three plots. The departure from the $45^{\circ}$ reference line is a consequence of ``adaptive shrinkage.'' Elastic-Bayes automatically shrinks the empirical $\wtte_i$ towards the representative modes ($0.034$ and $0.156$), whereas the Stein's PEB estimate uses the grand mean ($\approx 0.14$) as the shrinking target for \textit{all} the tumor rates. This is particularly prominent in Fig. \ref{fig:rat_pospred3} (c) for maximum a posteriori (MAP) estimates. As before, for heterogeneous population, we prescribe posterior mode as the final prediction. 
\vskip.25em
{\bf The Pharma-example}.  Our DS Elastic-Bayes estimate is especially powerful in the presence of prior-data conflict. To illustrate this point, we report a small simulation study.  The goal is to compare MSE for frequentist MLE, parametric empirical Bayes, and nonparametric Elastic-Bayes estimates for a new study $y_{\text{new}}$ in various levels of prior-data conflict.  To capture the prior-data conflict, we consider the following model for $\pi(\theta)$ and $y_{\text{new}}$:
\begin{align*}
\pi(\theta) &= \eta \text{Beta}(5,45) + (1 - \eta) \text{Beta}(30,70)\\
y_{\text{new}} &\sim \text{Bin}(50, 0.3).
\end{align*}
The parameter $\eta$ varies from 0 to 0.50 in increments of 0.05; as $\eta$ increases we introduce more heterogeneity into the true prior distribution and exacerbate the prior-data conflict between $\pi(\theta)$ and $y_{\text{new}}$; see Fig. \ref{fig:SEC4_MSE_Ratio}(a).  We simulated $k=100$ $\theta_i$ from $\pi(\theta)$, with which we generate $y_i | \theta_i \sim \text{Bin}(60,\theta_i)$.  Using the Type-II MoM algorithm on the simulated data set, we found $\hat{\pi}$. After generating $y_{\text{new}}$, we then determined the frequentist MLE, parametric EB (PEB), and the nonparametric elastic Bayes estimates of the mode.  For each value of $\eta$, we repeated this process $250$ times and found the mean squared error (MSE) for each estimate.  To better illustrate the impact of prior-data conflicts, we used ratio of PEB MSE to frequentist MSE and PEB MSE to DS MSE.  The results are shown in Fig. \ref{fig:SEC4_MSE_Ratio} (b).

\begin{figure}[t]
\centering
\begin{subfigure}{.41\textwidth}
  \centering
  \includegraphics[width=\linewidth,trim=1cm .5cm 0cm 1cm]{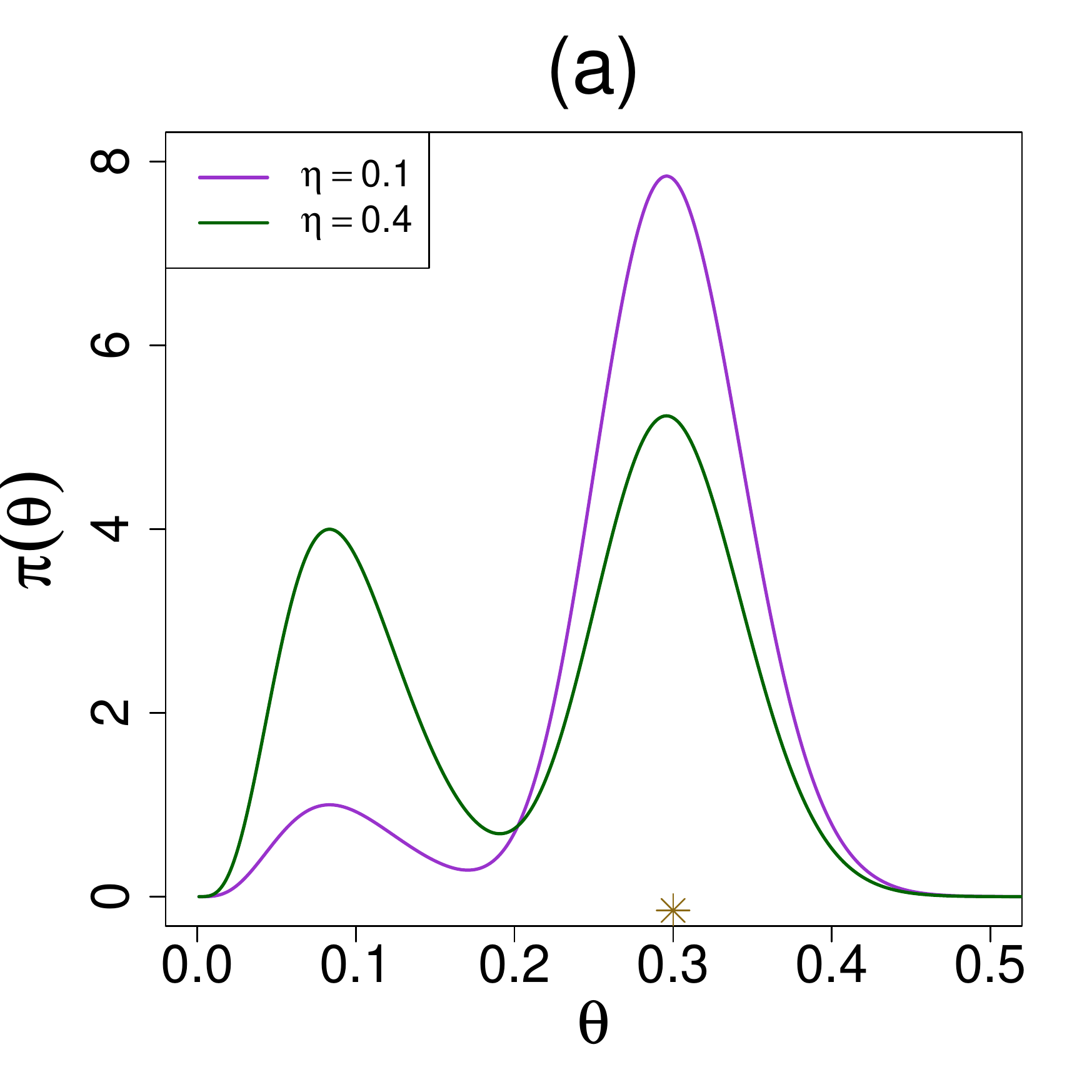}
\end{subfigure}\hspace{10mm}%
\begin{subfigure}{.41\textwidth}
  \centering
\includegraphics[width=\linewidth,trim=.5cm .5cm .5cm 1cm]{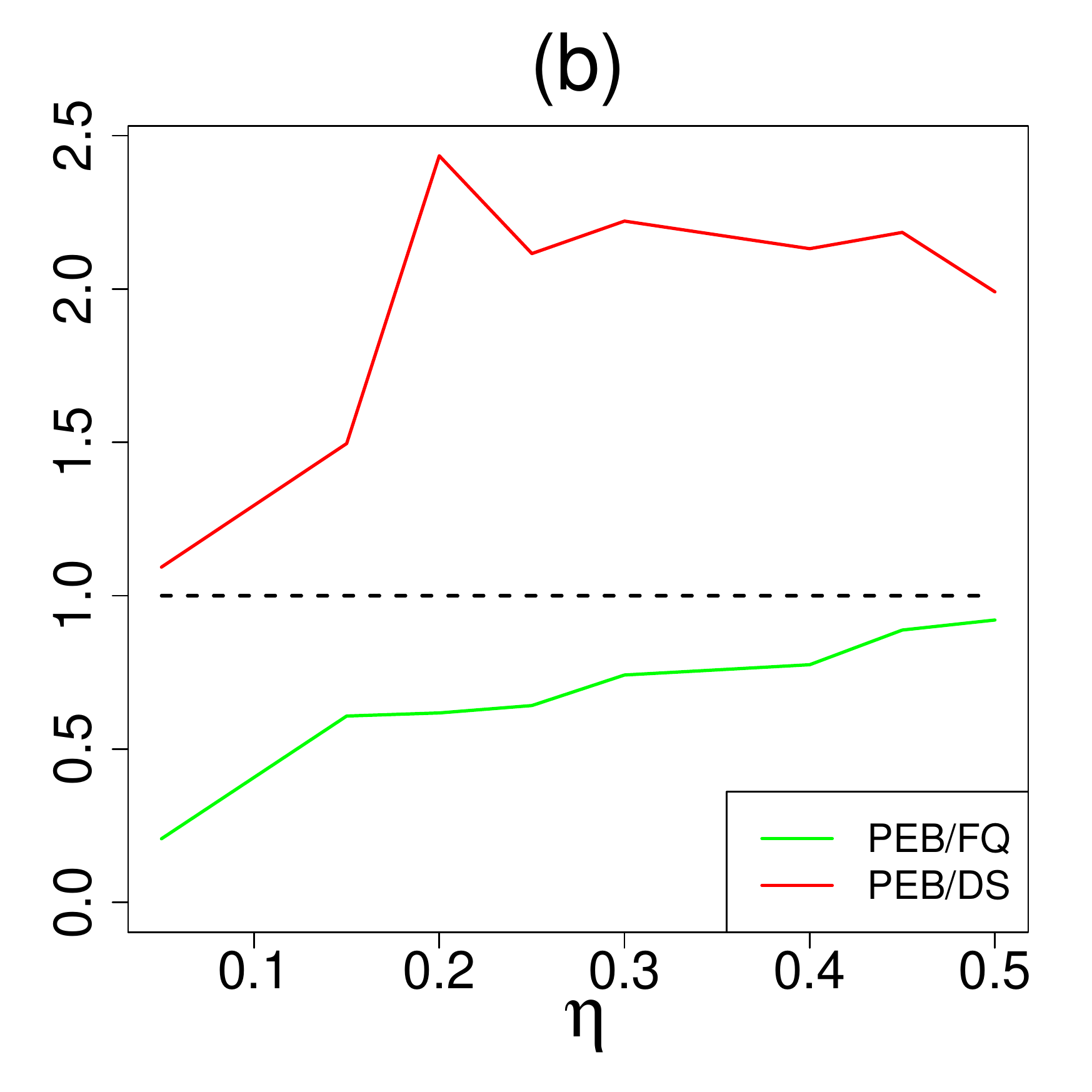}
\end{subfigure}\hspace{1.5mm}%
  \caption{Panel (a) illustrates the prior-data conflict for $\eta = 0.1$ versus $\eta = 0.4$; `*' denotes $0.3$, the true mean of $y_{\text{new}}$. Panel (b) shows the MSE ratios for PEB to Frequentist MLE (PEB/FQ; green) and PEB to DS (PEB/DS; red) with respect to $\eta$.  Notice that as more prior-data conflict is introduced, DS outperforms PEB while frequentist MLE performance improves.}
\label{fig:SEC4_MSE_Ratio}
\end{figure}
The Elastic-Bayes estimate outperforms the Stein's estimate for all $\eta$. More importantly the efficiency of our estimate continues to increase with the heterogeneity. This is happening because elastic Bayes performs \textit{selective} shrinkage of sample proportion towards the appropriate mode (near $0.3$) and thus gains ``strength'' by combining information from `similar' studies even when the contamination in the study population increases.  An interesting observation is the performance of the frequentist MLE estimate; as the data becomes more heterogeneous, the frequentist MLE shows improvement with respect to the  Stein's PEB estimate.  Our simulation depicts a scenario that is very common in historic-controlled clinical trials, where the heterogeneity arises due to changing conditions. Additional comparisons with other empirical Bayes procedures can be found in Supplementary Appendix G.

\begin{figure}[t]
\begin{subfigure}{.32\textwidth}
  \centering
  \includegraphics[width=\linewidth,trim=1cm 0cm 0cm 1cm]{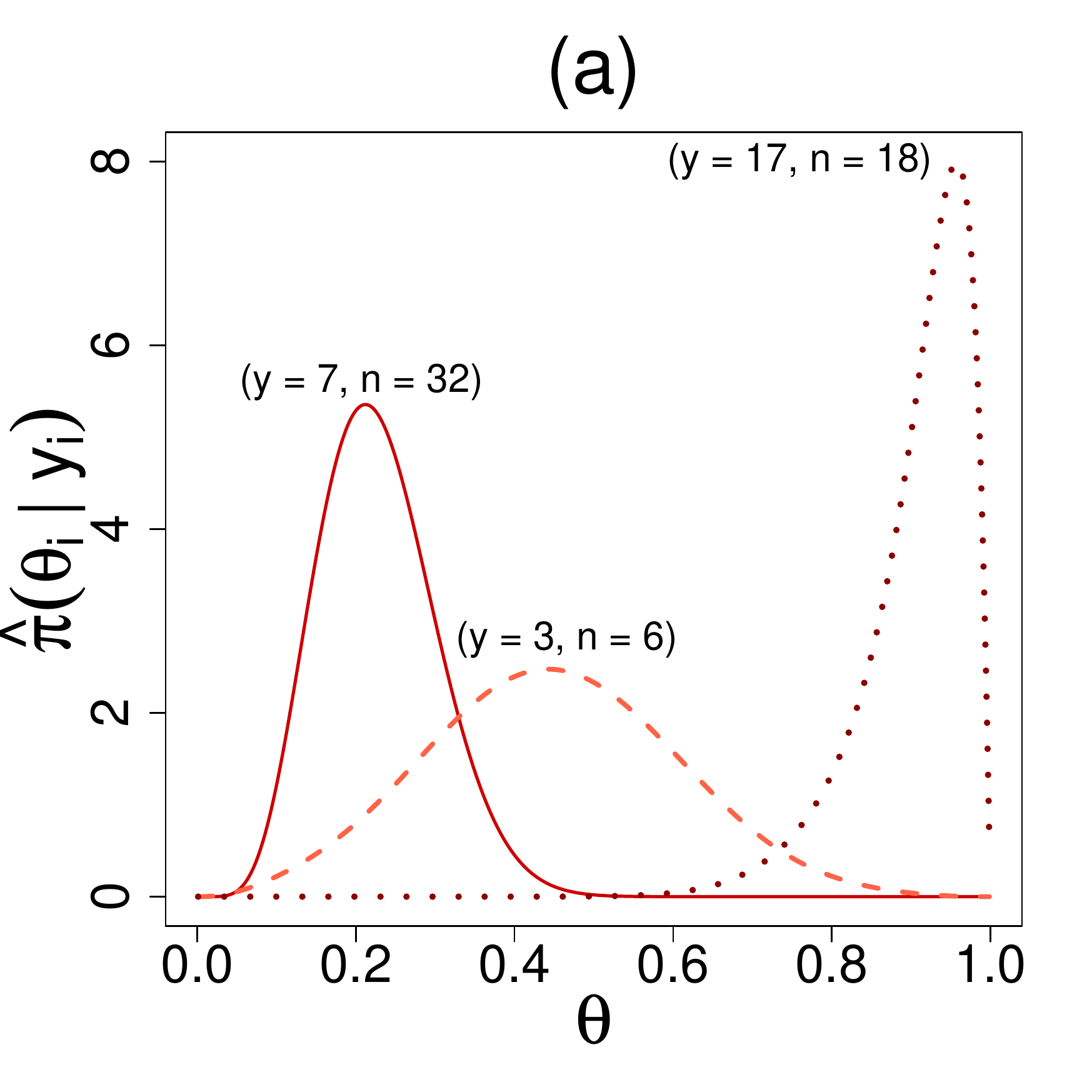}
\end{subfigure}  \hspace{.25mm}
\begin{subfigure}{.32\textwidth}
  \centering
\includegraphics[width=\linewidth,trim=.5cm 0cm .5cm 1cm]{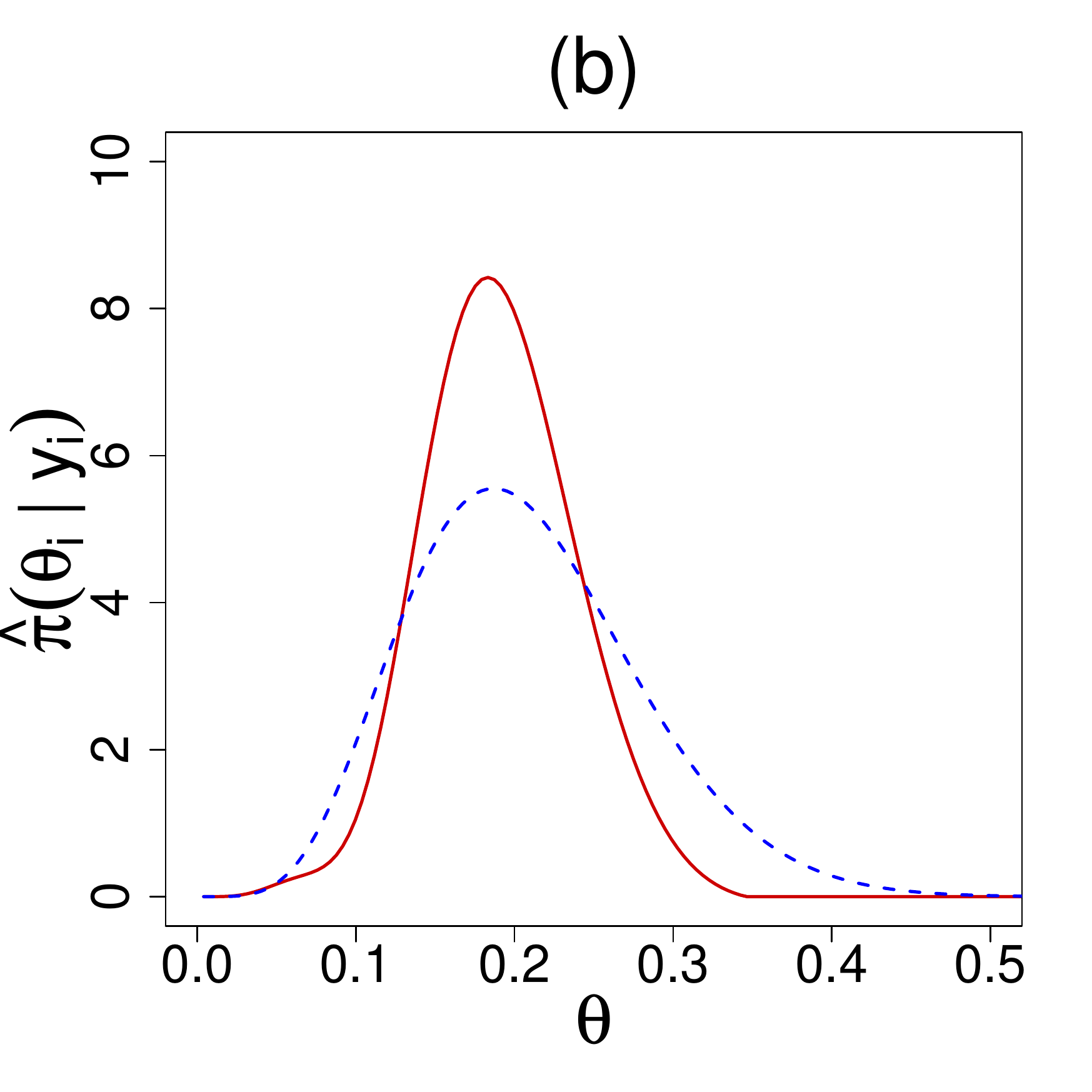}
\end{subfigure}  \hspace{.5mm}
\begin{subfigure}{.32\textwidth}
  \centering
  \includegraphics[width=\linewidth,trim=.25cm 0cm .75cm 1cm]{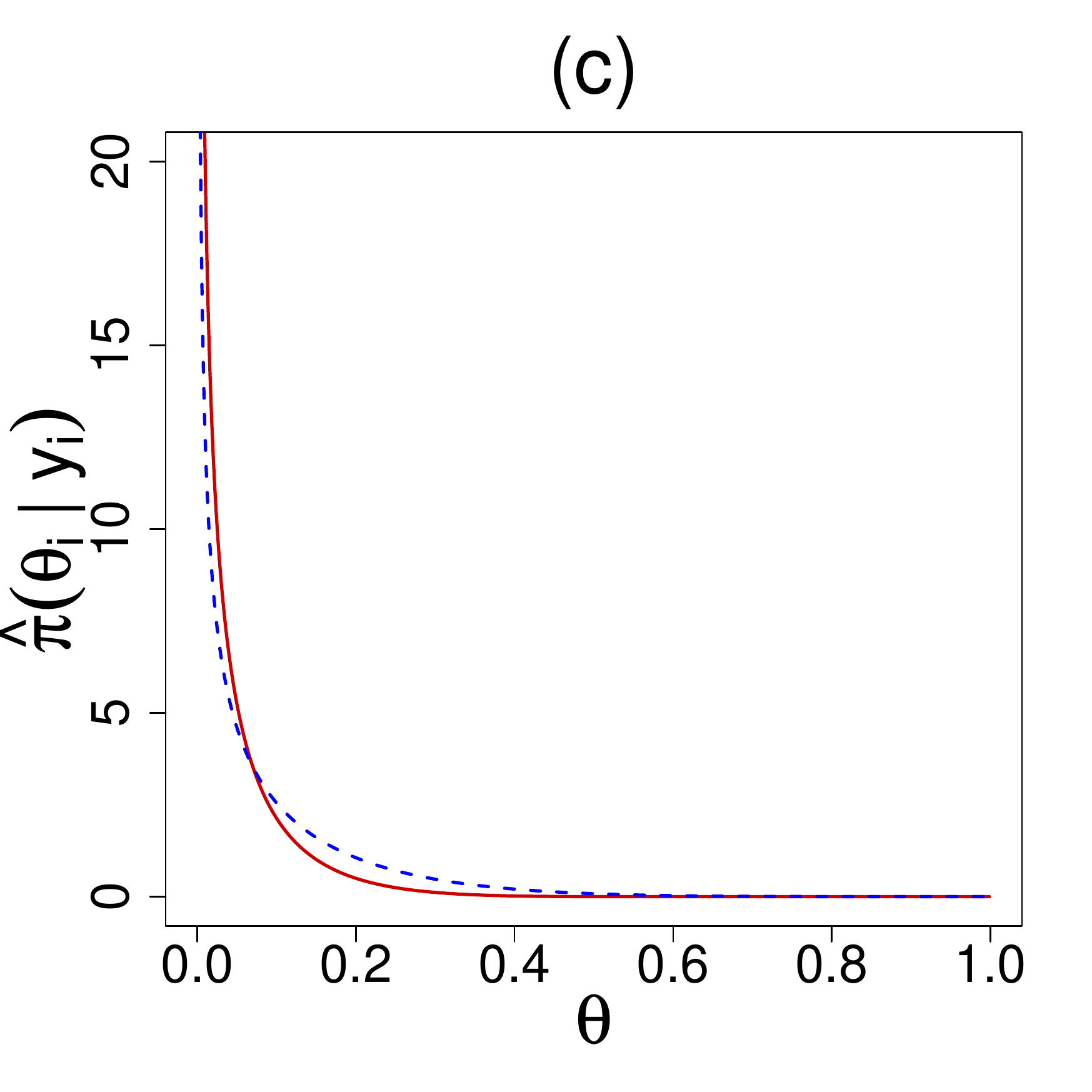}
\end{subfigure}
\begin{subfigure}{.32\textwidth}
  \centering
  \includegraphics[width=\linewidth,trim=1cm 1cm 0cm 0cm]{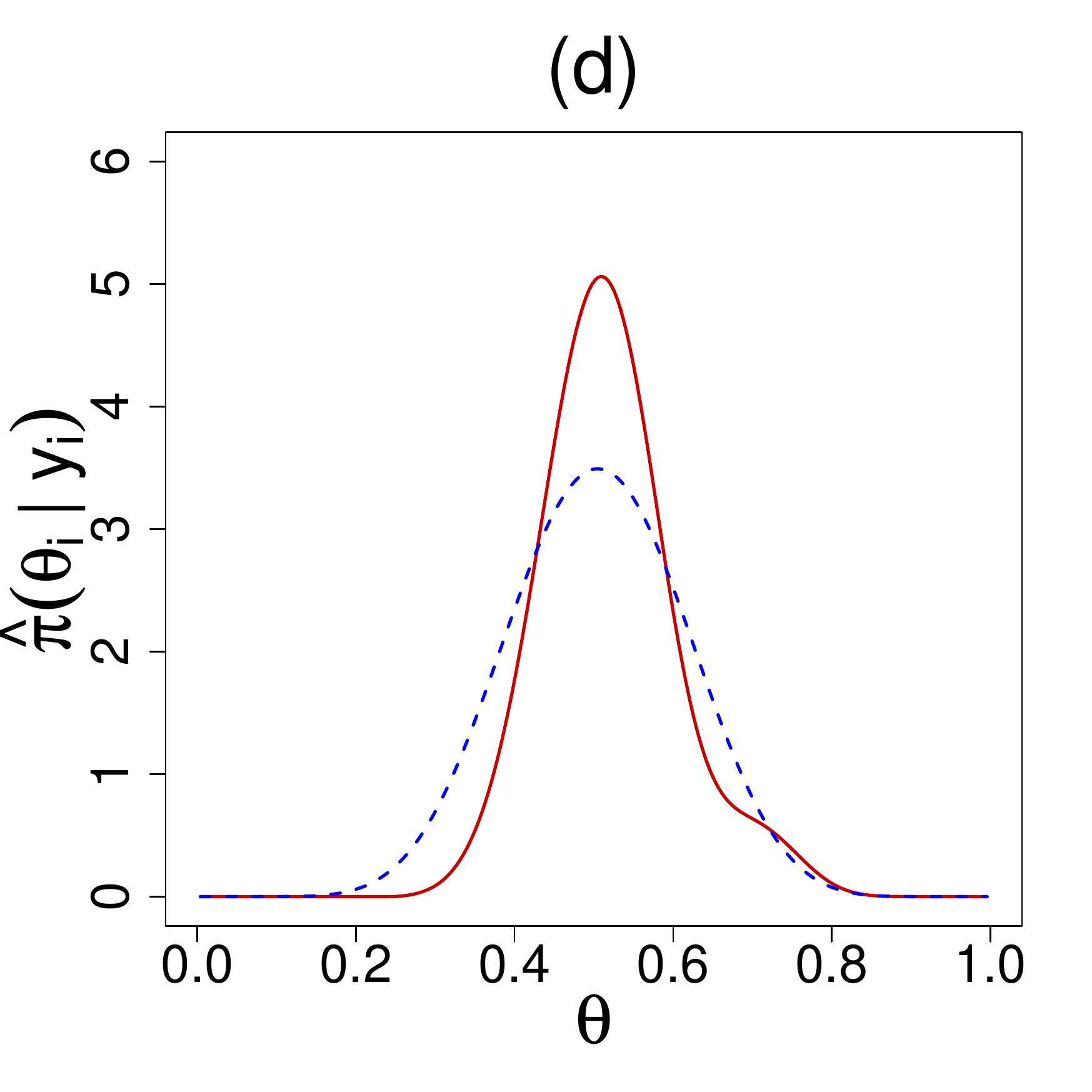}
\end{subfigure} \hspace{.5mm}
\begin{subfigure}{.32\textwidth}
  \centering
  \includegraphics[width=\linewidth,trim=.5cm 1cm .5cm 0cm]{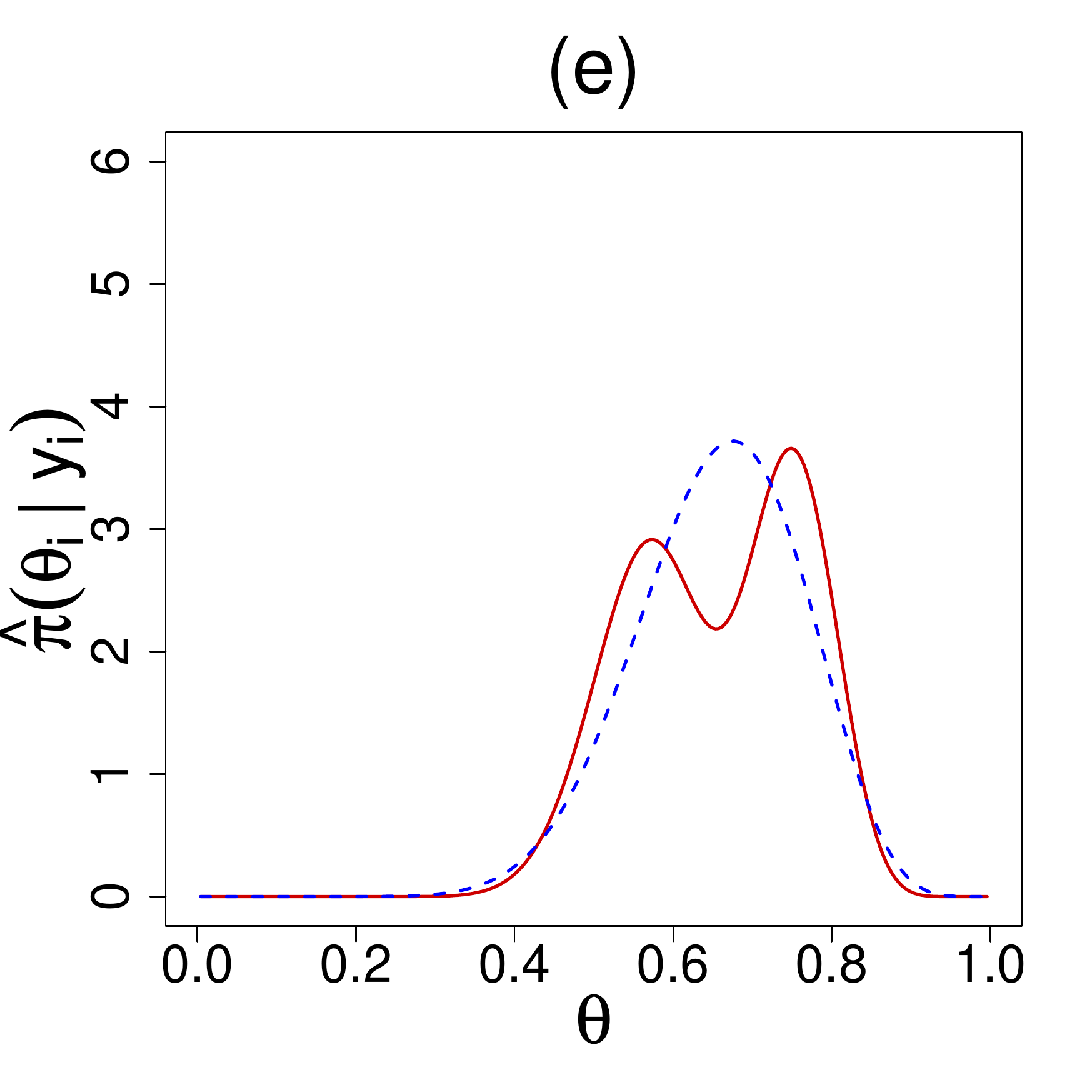}
\end{subfigure} \hspace{.9mm}
\begin{subfigure}{.32\textwidth}
  \centering
  \includegraphics[width=\linewidth,trim=0cm 1cm 1cm 0cm]{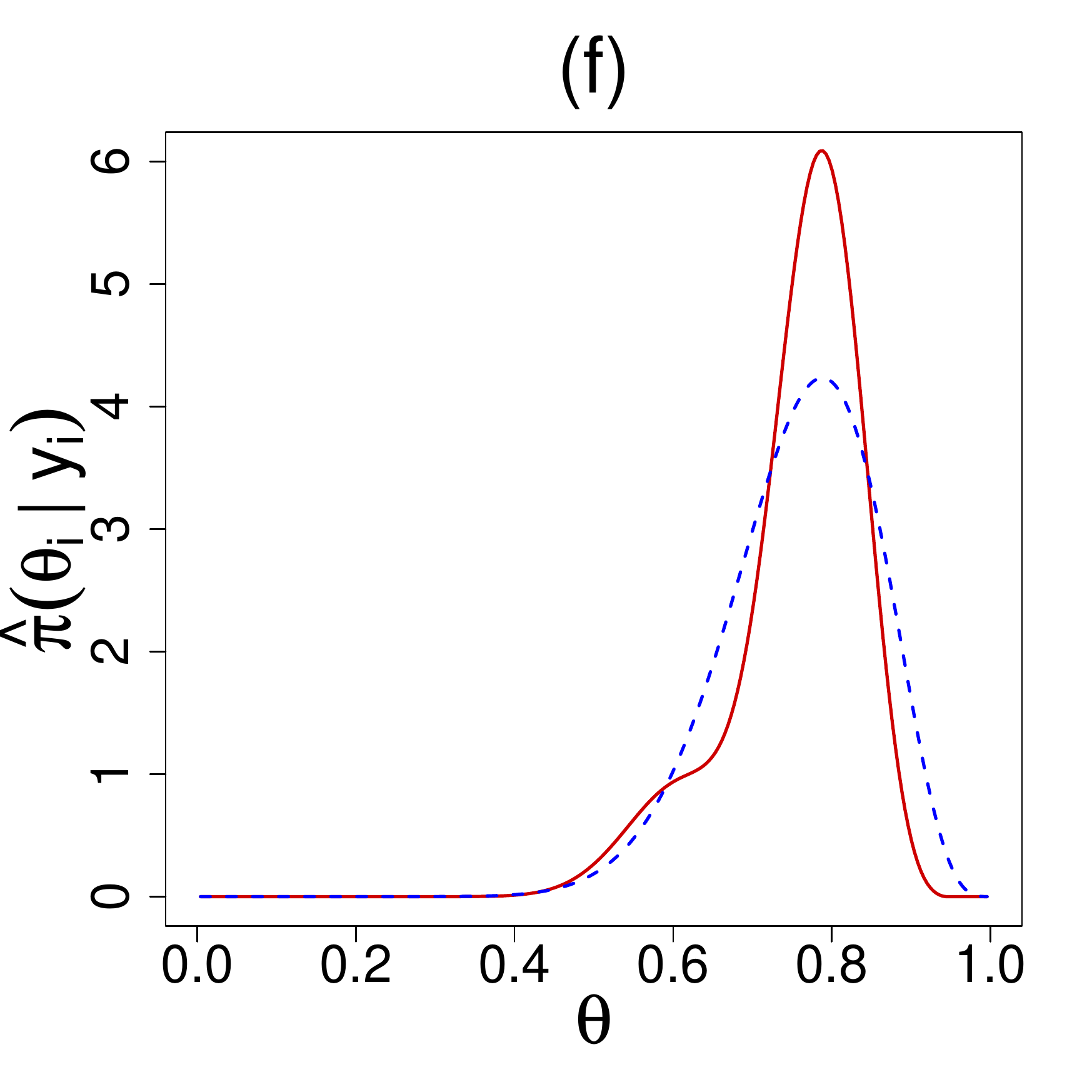}
\end{subfigure}

\caption{Panel (a) shows DS posterior plots of three observations from the surgical node data: $(y=7,n=32)$, $(y=3,n=$ $6)$, and $(y=17,n=18)$. For panels (b) through (f), red denotes the DS posterior and blue dashed is the PEB posterior. Panel (b) is $\hat{\pi}(\theta_{71}|y_{71}=4)$ for the rat tumor data.  Panel (c) displays $\hat{\pi}(\theta_{6}|y_{6}=0)$ for the Navy shipyard data.  The second row shows the posterior distributions of (d) $y_i = 3$, (e) $y_i = 6$, and (f) $y_i = 8$ from the rolling tacks data.}
\label{fig:SEC6_Post_Plot}
\end{figure} 

\vskip.25em
{\bf Three additional real examples.} Figure \ref{fig:SEC6_Post_Plot} shows the posterior plots for specific studies in four of our data sets: surgical node, rat tumor, Navy shipyard, and rolling tacks.  In studies like the surgical node data, personalized predictions are typically valuable.  Figure \ref{fig:SEC6_Post_Plot}(a) shows posterior distributions for three selected patients, which are indistinguishable from Efron's deconvolution answer \cite[Fig. 4]{cox2017statistical}; the patient with $n_i=32$ and $y_i=7$ shows almost certainly $\te_i > 0.5$, i.e., he or she is highly prone to positive lymph nodes, and thus should be referred to follow-up therapy. With regard to the rat tumor data, Fig. \ref{fig:SEC6_Post_Plot}(b) depicts the DS-posterior distribution of $\theta_{71}$ along with its parametric counterpart $\pi_G(\te_{71}|y_{71},n_{71})$. Interestingly, the DS nonparametric posterior shows less variability;  this possibly has to do with the selective learning ability of our method, which learns from similar studies (e.g. group 2), rather than the whole heterogeneous mix of studies.  We see similar phenomena in the rolling tacks data, where panel (d): $y_i = 3$, is more reflective of the first mode and panel (f): $y_i = 8$, of the second.  Panel (e) shows the bimodal posterior for $y_i = 6$ case.  Finally, the Navy shipyard data (Fig. \ref{fig:SEC6_Post_Plot} (c)) exhibits another advantage of DS priors: it works equally well for small $k$. The DS-posterior mean estimate for $y_6=0$ is $0.0471$, which is consistent with the findings of Sivaganesan and Berger \cite[p. 117]{sivaganesan1993robust}.

\subsection{Poisson Smoothing: The Two Cultures}\label{sec:PoiSmo}
We consider the problem of estimating a vector of Poisson intensity parameters $\te=(\te_1,\ldots,\te_k)$ from a sample of $Y_i|\te_i \sim \mbox{Poisson}(\te_i)$, where the Bayes estimate is given by: 
\vskip.1em
\beq \label{eq:PM} \Ex[\Theta|Y=y]\,=\,\dfrac{\int_0^\infty \te \big[ e^{-\te} \te^y/y! \big] \pi(\te) \dd \te}{\int_0^\infty \big[ e^{-\te} \te^y/y! \big] \pi(\te) \dd \te};~~~y=0,1,2,\ldots.\eeq
Two primary approaches for estimating \eqref{eq:PM}:
\begin{itemize}[itemsep=1.24pt,topsep=1.24pt]
\item Parametric Culture \cite{fisher1943,maritz1969}: If one assumes $\pi(\te)$ to be the parametric conjugate Gamma distribution $g(\te;\al,\be)=\frac{1}{\be^\al \Gamma (\al)}\te^{\al-1}$ $e^{-\te/\be}$, then it is straightforward to show that Stein's estimate takes the following analytical form $\wcte_i= \frac{y_i+\al}{\be^{-1}+1}$, weighted average of the MLE $y_i$ and the prior mean $\al \beta$.
\item Nonparametric Culture \cite{robbins1956,efron2003robbins,koenker2016}:
This was born out of Herbert Robbins' ingenious observation that \eqref{eq:PM} can alternatively be written in terms of marginal distribution $(y+1) \frac{f(y+1)}{f(y)}$, and thus can be estimated non-parametrically by substituting empirical frequencies. This remarkable ``prior-free'' representation, however, does not hold in general for other distributions. As a result, there is a need to develop methods that can bite the bullet and estimate the prior $\pi$ from the data. Two such promising methods are Bayes deconvolution \cite{efron2003robbins} and the Kiefer-Wolfowitz non-parametric MLE (NPMLE) \cite{kiefer1956,koenker2016}. Efron's technique can be viewed as \textit{smooth} nonparametric approach, whereas NPMLE generates a discrete (atomic) probability measure. For more discussion, see Supplementary Appendix A2.
\end{itemize}
\vskip.2em
{\bf The Third Culture}. Each EB modeling culture has its own strengths and shortcomings. For example, PEB methods are extremely efficient when the true prior is Gamma.  On the other hand, the NEB methods possess extraordinary robustness in the face of a misspecified prior yet they are inefficient when in fact $\pi \equiv {\rm Gamma}(\al,\be)$. Noticing this trade-off, Robbins raised the following intriguing question \cite{robbins1980}: \textit{how can this efficiency-robustness dilemma be resolved in a logical manner?} To address this issue, we must design a data analysis protocol that offers a mechanism to answer the following \textit{intermediate} modeling questions (before jumping to estimate $\widehat \pi$): Can we assess whether or not a Gamma-prior is adequate in light of the sample-information? In the event of a prior-data conflict, how can we estimate the `missing shape' in a completely data-driven manner? All of these questions are at the heart of our `Bayes \textit{via} goodness-of-fit' formulation, whose goal is to develop a third culture of generalized empirical Bayes (gEB) modeling by uniting the parametric and non-parametric philosophies. Compute the DS Elastic-Bayes estimate by substituting $\wcte_i= \frac{y_i+\al}{\be^{-1}+1}$ in the Eq. \eqref{eq:LPstein}, which reduces to the PEB answer when $d(u;G,\Pi) \equiv 1$ (i.e, the true prior is a Gamma) and modifies non-parametrically, only when needed; thereby turning Robbins' vision into action (see Supplementary Appendices A and G for more discussions on this point).
\vskip.5em
\begin{table}[ht]
\setlength{\tabcolsep}{9pt}
\renewcommand*{\arraystretch}{1.24}
\centering
\caption{\label{tbl:InsMicro} For the insurance data set, estimates for the number of claims expected in the following year by an individual who made $y$ claims during the present year, $\hat{\Ex}(\theta|Y=y)$, by five different methods.}
{\footnotesize
\begin{tabular}{lcccccccc}
  \toprule
Claims $y$  & 0 & 1 & 2 & 3 & 4 & 5 & 6 & 7  \\ 
  \midrule
Counts & 7840 & 1317 & 239 & 42 & 14 & 4 & 4 & 1 \\ [.2em]
Gamma PEB& 0.164 & 0.398 & 0.633 & 0.87 & 1.10 & 1.34 & 1.57 & 1.80\\ [.2em]
Robbins' EB& 0.168 & 0.363 & 0.527 & 1.33 & 1.43 & 6.00 & 1.75 & {\bf ---}\\ [.2em]
Deconvolve & 0.164 & 0.377 & 0.642 & 1.14 & 2.13 & 3.45 & 4.47 & 5.08\\ [.2em]
NPMLE &  0.168 & 0.362 & 0.534 & 1.24 & 2.21 & 2.53 & 2.58 & 2.58\\ [.2em]
DS Elastic-Bayes &  0.156  & 0.322  &  0.517 & 0.744  & 1.02 &  1.56 & 3.01 & 5.24\\ [.2em]
\bottomrule
\end{tabular}
}
\end{table}
\vskip.5em
{\bf The insurance data}. Table \ref{tbl:InsMicro} reports the Bayes estimates $\Ex[\te|Y=y]$ for the insurance data. We compare five methods: parametric Gamma, classical Robbins' EB, Efron's Deconvolve, Koenker's NPMLE, and our procedure. The raw-nonparametric Robbins' estimator is clearly erratic at the tail due to data-sparsity. The PEB estimate overcomes this limitation and produces a stable estimate; but \textit{is it dependable?} Should we stop here and report this as our final result? Our exploratory U-diagnostic tells that (consult Sec \ref{sec:EstResults}) the PEB estimate needs a second-order correction to resolve the discrepancy between the Gamma prior and data. The improved LP-Stein estimates are shown in the last row of Table \ref{tbl:InsMicro}.
\vskip.5em
\begin{figure}[ht]
\centering
\vskip.65em
\begin{subfigure}{.44\textwidth}
  \centering
  \includegraphics[width=\linewidth,trim=1cm .5cm 0cm 1cm]{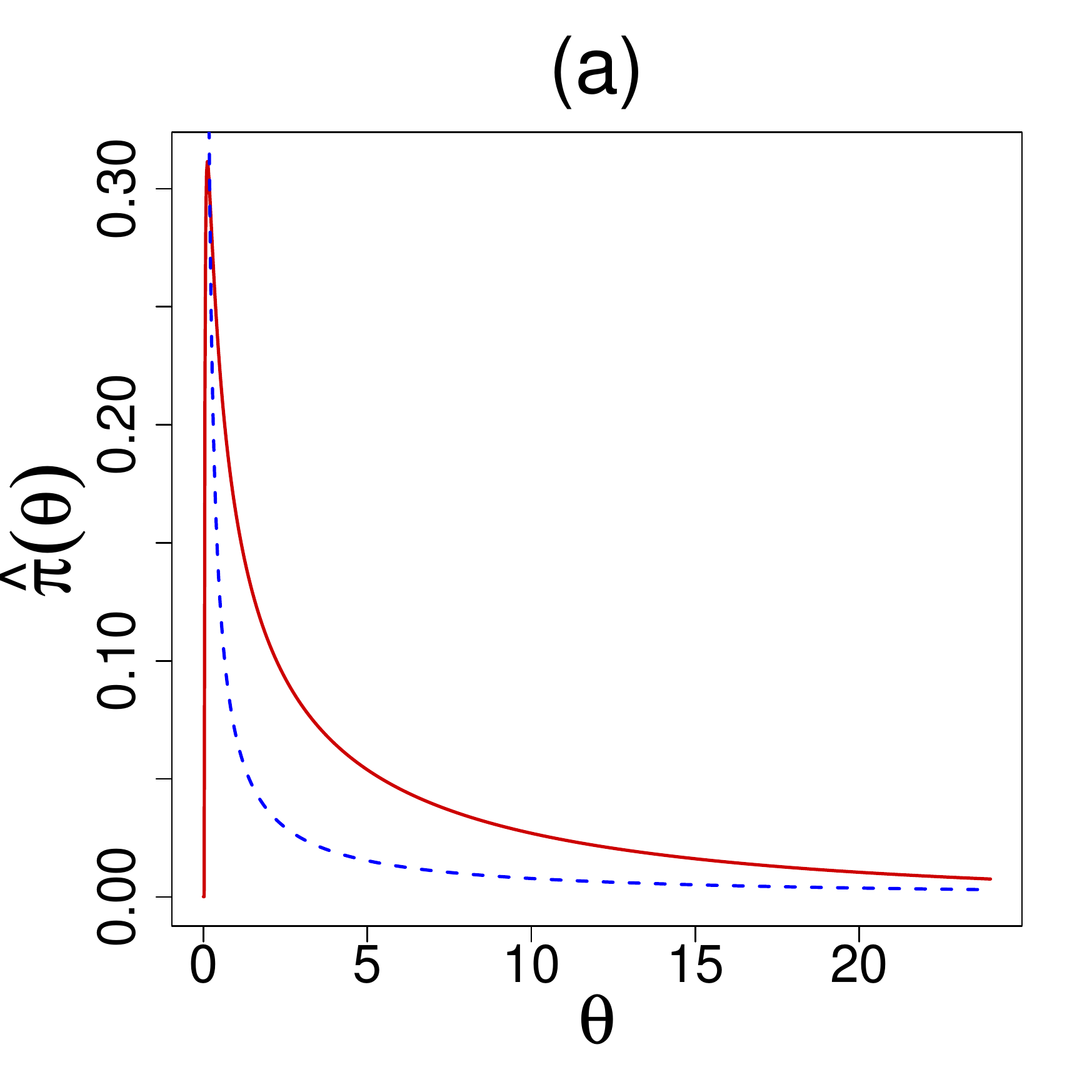}
\end{subfigure}\hspace{6mm}%
\begin{subfigure}{.44\textwidth}
  \centering
\includegraphics[width=\linewidth,trim=.4cm .5cm .5cm 1cm]{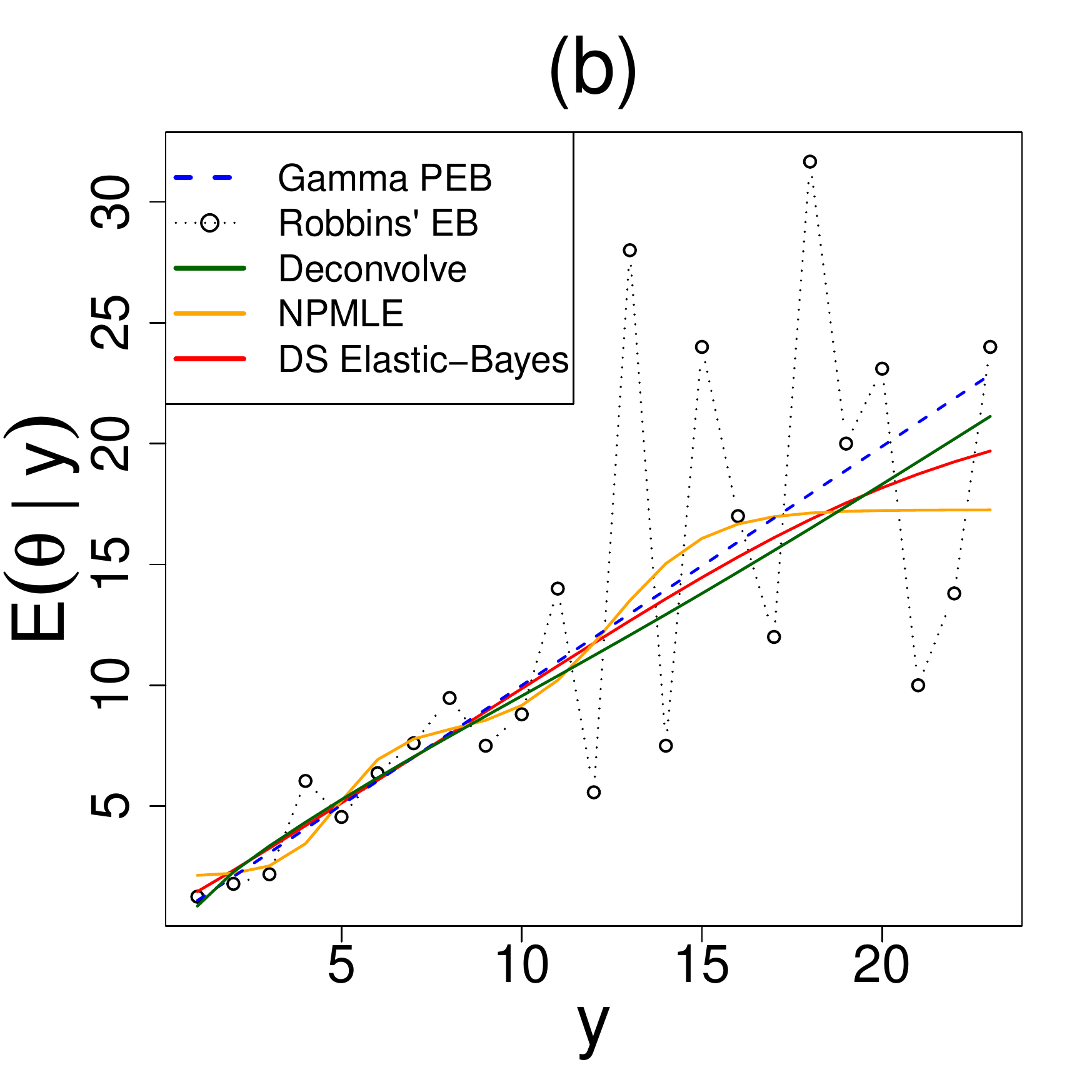}
\end{subfigure}
\caption{Panel (a) displays the estimated $\DS(G,m=4)$ prior (solid red) with the PEB Gamma prior $g(\theta; \alpha,\beta)$ (dashed blue) for the butterfly data; these results indicate that Fisher's Gamma-prior guess required some correction. Panel (b) shows estimates for the number of butterfly species caught in the following year $\hat{\Ex}(\theta \mid x)$ by the Gamma PEB, Robbins' formula, Bayesian deconvolution, NPMLE, and our Elastic-Bayes estimate.}
\label{fig:bfly4_ufunc_ds}
\vspace{-.5em}
\end{figure}

{\bf The butterfly data}. The next example is Corbet's Butterfly data \cite{fisher1943}-- one of the earliest examples of empirical Bayes. Alexander Corbet, a British naturalist, spent two years in Malaysia trapping butterflies in the 1940s. The data consist of the number of species trapped exactly $y$ times in those two years for $y=1,\ldots, 24$. Figure \ref{fig:bfly4_ufunc_ds}(b) plots different Bayes estimates. The Robbins’ procedure suffers from similar `jumpiness.' The blue dotted line represents the linear PEB estimate with $\al=0.104$ and $\be=89.79$ (same as of Efron and Hastie \cite[Eq. 6.24]{efron2016computer}) estimated from the zero-truncated negative binomial marginals. Our DS-estimate is almost sandwiched between the PEB and Deconvolve answer. The NPMLE method (the orange curve) yields some strange looking sinusoidal pattern, probably due to overfitting. In conclusion, we must say that the triumph of our procedure as compared to the other Bayes estimators lies in its automatic adaptability that Robbins alluded in his 1980 article \cite{robbins1980}.

\section{Discussions}\label{sec:Disc}
We laid out a new mechanics of data modeling that effectively consolidates Bayes and frequentist, parametric and nonparametric, subjective and objective, quantile and information-theoretic philosophies. However, at a practical level, the main attractions of our ``Bayes \textit{via} goodness-of-fit'' framework lie in its (i) ability to quantify and protect against prior-data conflict using exploratory graphical diagnostics; (ii) theoretical simplicity that lends itself to analytic closed-form solutions, avoiding computationally intensive techniques such as MCMC or variational methods. 

We have developed the concepts and principles progressively through a range of examples, spanning application areas such as clinical trials, metrology, insurance, medicine, and ecology,  highlighting the core of our approach that gracefully combines Bayesian way of thinking (parameter probability where prior knowledge can be encoded) with a frequentist way of computing via goodness-of-fit (evaluation and synthesis of the prior distribution). If our efforts can help to make Bayesian modeling more attractive and transparent for practicing statisticians (especially non-Bayesians) by even a tiny fraction, we will consider it a success.
\section*{Data availability} All datasets and the computing codes are available via free and open source \texttt{R}-software package \texttt{BayesGOF}. The online link: \mbox{https://CRAN.R-project.org/package=BayesGOF}
\putbib[ref-Doug]
\end{bibunit}
\vskip1em
\section*{Additional information}

{\bf Supplementary information}: It includes (i) connection with other major Bayesian modeling cultures, (ii) Details of \texttt{BayesGOF} R-Software together with additional numerical illustrations, (iii) important extensions to examples with covariates and (iv) Maximum-entropy $\DS(G,m)$ modeling.
\vskip1em
{\bf \noindent Competing Interests}: The authors declare no competing interests.

\newpage
\renewcommand{\baselinestretch}{1.24}
\setlength{\parskip}{1.4ex}
\begin{center}
{\Large {\bf Supplementary Material for ``Bayesian Modeling via Goodness-of-fit''}}\\[.15in] %
Subhadeep Mukhopadhyay$^*$, Douglas Fletcher\\  
Temple University, Department of Statistical Science \\ Philadelphia, Pennsylvania, 19122, U.S.A. \\[1em]
$^*$ To whom correspondence should be addressed; E-mail: deep@temple.edu\\[2.5em]
\end{center}

\setcounter{figure}{8}   
\setcounter{table}{4}
This supplementary document contains nine Appendices, organized as follows:
\begin{itemize}[noitemsep,topsep=1.24pt]
\item Appendix A: Connections with other Bayesian modeling cultures.
\item Appendix B: More insights into the LP-basis functions.
\item Appendix C: The DS$(G,m)$ sampler.
\item Appendix D: Other practical considerations.
\item Appendix E: Software.
\item Appendix F: Data Catalogue.
\item Appendix G: The Robbins' puzzle.
\item Appendix H: Example with covariates.
\item Appendix I: Maximum-Entropy enhancement.
\end{itemize}

\begin{bibunit}[naturemag-doi]
\vskip1em
\begin{center}
{\large A. CONNECTIONS WITH OTHER BAYESIAN MODELING CULTURES}
\end{center}

In this section, we explore the relationship of our approach with other existing Bayesian data modeling cultures from philosophical and computational perspective. We will show that our formulation can be interpreted from surprisingly diverse perspectives. 

\subsection*{A1. Robust Bayesian Methods} 
Our view of going from a unique prior assumption to a class of priors for robust Bayesian modeling was shaped by the Jim Berger's outstanding article \cite{Berger1994robust}. In the same spirit of the $\epsilon$-contamination class \cite{berger1986robust}, our U-function $d(u;G,\Pi)$ can be thought of as an automatic robustifier for standard (conjugate) priors. Thus, our approach may attain similar goals in a more computationally friendly way. Finally, we completely agree with Berger \cite{Berger1994robust} that `The major objection of non-Bayesians to Bayesian analysis is uncertainty in the prior, so eliminating this concern can make Bayesian methods considerably more appealing.'

\subsection*{A2. Empirical Bayes Methods}
Empirical Bayes approaches use data to determine the prior. While parametric empirical Bayes [PEB] \cite{morris1983parametric} fixes the hyperparameters based on the data, nonparametric empirical Bayes [NEB] \cite{efron2003robbins} makes no assumptions on the prior's form and develops it based solely on the data. In particular, Brad Efron \cite{efron1996empirical,efron2014bayes} advocate a \textit{smooth} nonparametric exponential family model: $\log \pi(\te)=\sum_{j=0}^m \be_j \te^j$ for the prior distribution where ${\bm \be}=(\be_0,\ldots,\be_m)$ is estimated by maximizing the marginal log-likelihood function. 
\vskip1em

{\bf Example 1}. The dotted line in Figure \ref{fig:app_connections}(a) denotes the non-parametrically estimated Efron's $\widehat{\pi}$ based on two-dimensional sufficient vector $S = (\theta, \theta^2)$ for the ulcer data \cite{efron1996empirical}. At a first glance, it appears strikingly close to the conjugate normal prior $\cN(-1.17, 0.98)$, marked as the bold red line. Perhaps the reader may be curious to know whether `$\pi(\te) \equiv$ PEB Normal' here? This is indeed the case, as already shown in Figure \ref{fig:sec3_3CD}(b) of the main paper. Our generalized empirical Bayes (gEB) framework automatically reduces to PEB when the data is consistent with the assumed parametric prior and modifies it non-parametrically otherwise. The output of the combined inference from $k=40$ clinical trials is shown as a green triangle $-1.17 \pm 0.197$, which is quite close\footnote[2]{The slight gain in accuracy for our method lies in the style of estimation that proceeds \textit{via} goodness-of-fit. Constructing prior by validating its credibility (using frequentist criterion) may also strengthen the Bayesian objectivity that Brad Efron \cite{efron1986isn} alluded to his article ``Why isn't everyone a Bayesian?''} to the Efron's nonparametric answer \cite{efron1996empirical} $-1.22 \pm 0.26$. The negative macro-estimate of the log-odds ratio parameters suggests that the new surgical treatment for stomach ulcers is overall more effective than the existing one.
\vskip.25em
Another attractive NEB technique is based on non-parametric maximum likelihood estimate (NPMLE): maximize the log-likelihood $\sum_{i=1}^k \log \big\{ \int f(y_i|\te)\dd \Pi(\te) \big\}$ over the set of all $\pi(\te)$ on $\cR$, which is known to be a notoriously difficult problem. Thanks to Gu and Koenker \cite{koenker2016}, an approximate NPMLE can be estimated via convex optimization technique (interior point method) instead of classical EM (Expectation-Maximization) algorithm \cite{laird1978}, thereby making it a computationally feasible alternative.
\begin{figure}[t]
\begin{subfigure}{.32\textwidth}
  \centering
  \includegraphics[width=\linewidth,trim=1cm .5cm 0cm 1cm]{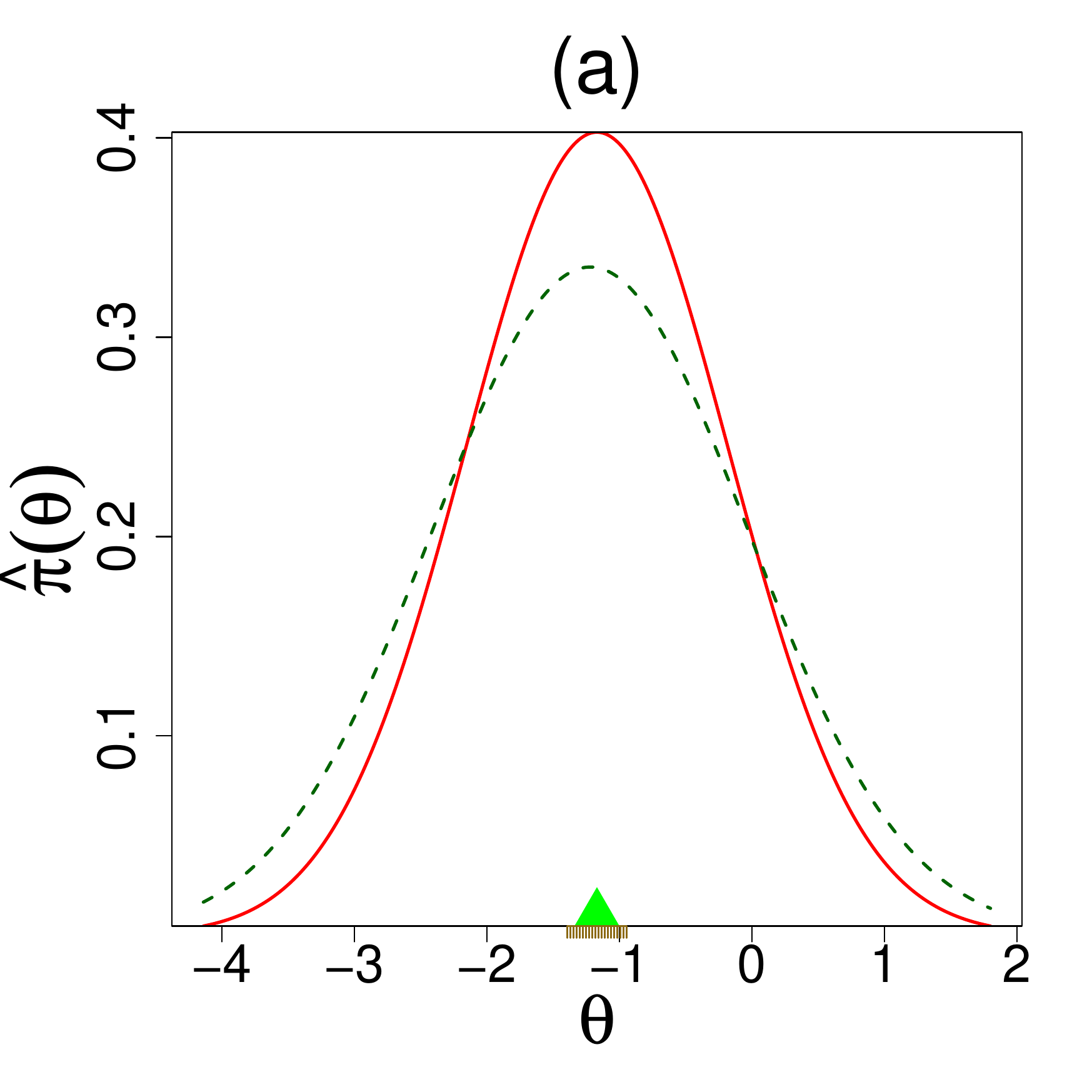}
\end{subfigure}\hspace{1.5mm}%
\begin{subfigure}{.32\textwidth}
  \centering
  \includegraphics[width=\linewidth,trim=1cm .5cm 0cm 1cm]{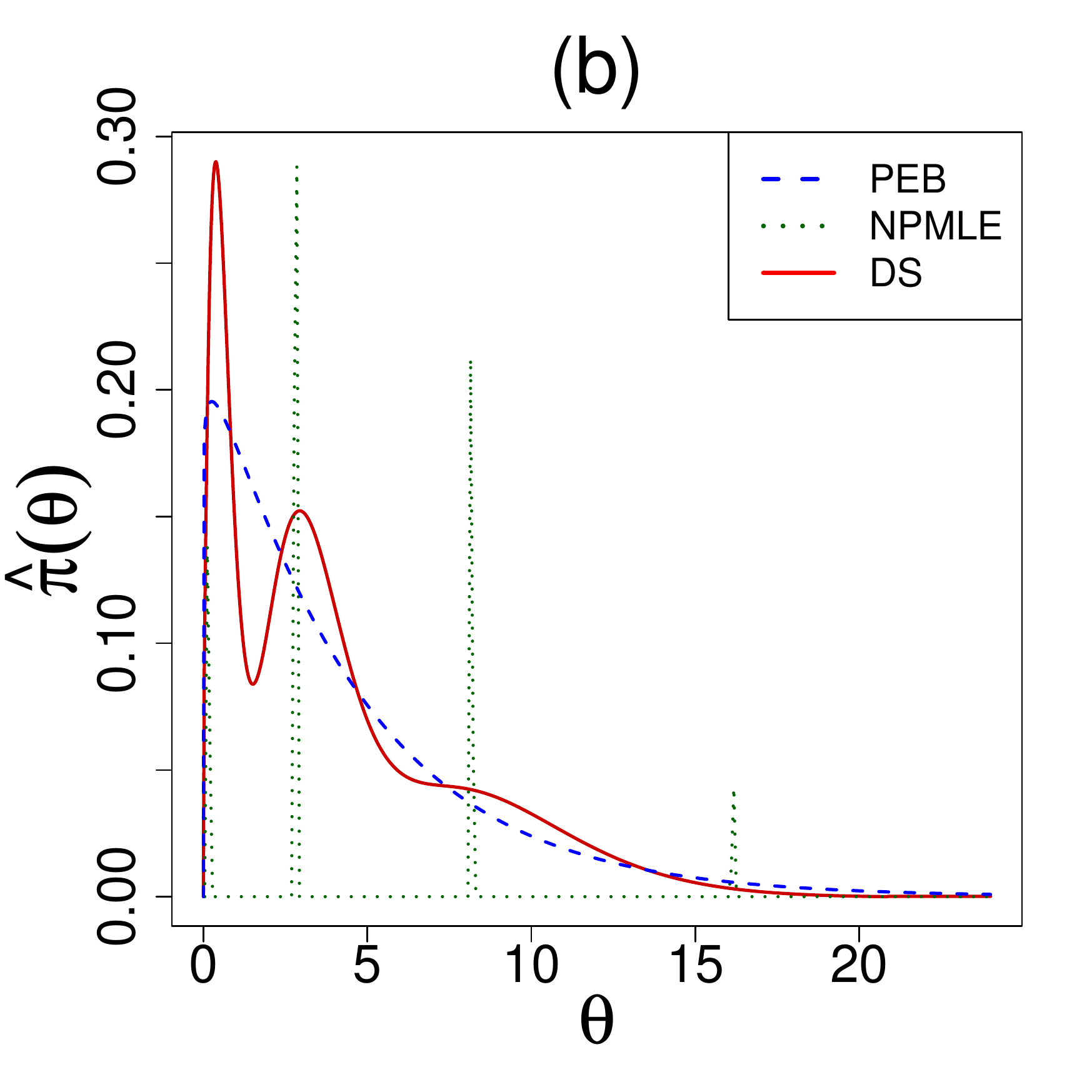}
\end{subfigure}\hspace{1.5mm}%
\begin{subfigure}{.32\textwidth}
  \centering
  \includegraphics[width=\linewidth,trim=0cm .5cm 1cm 1cm]{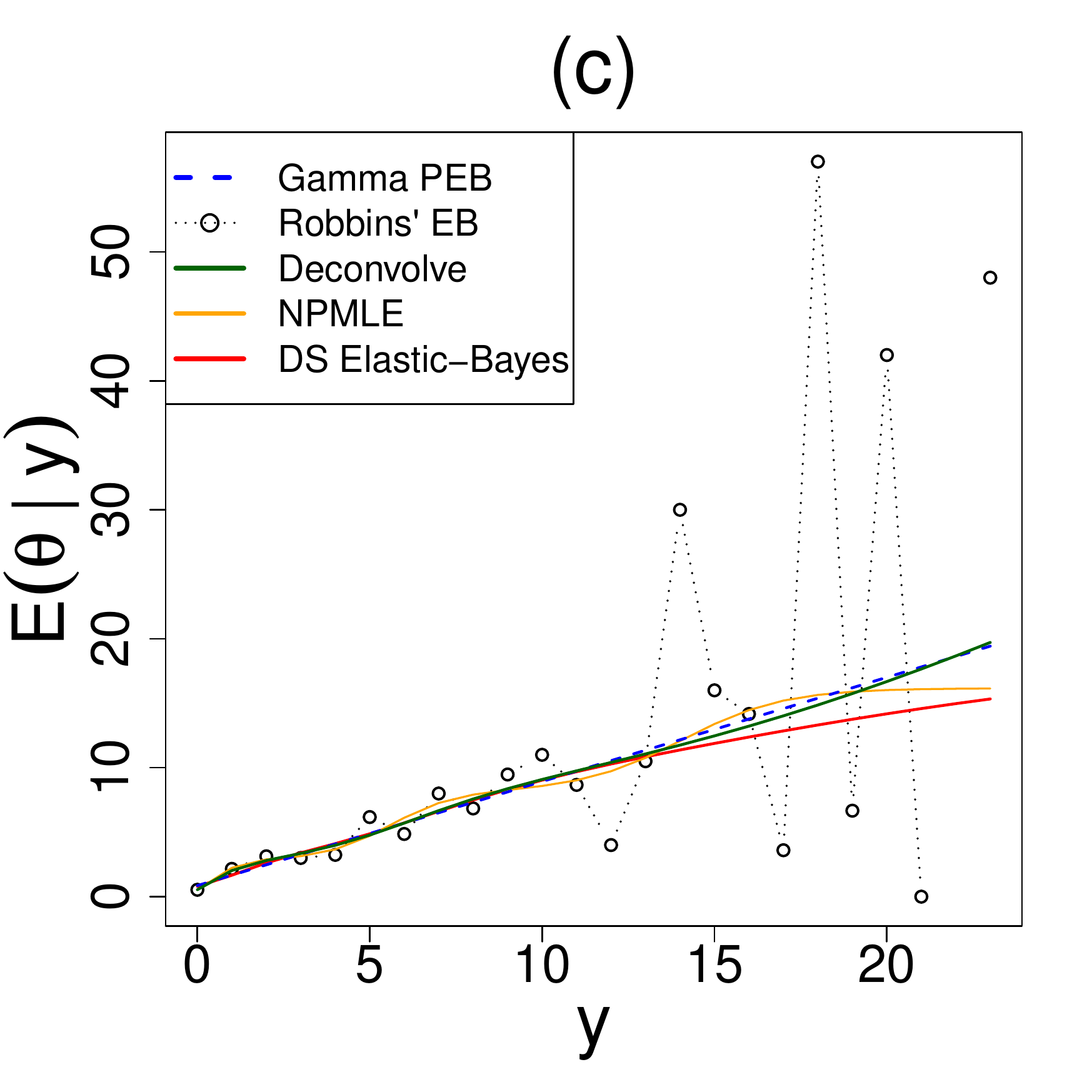}
\end{subfigure}
\caption{Comparisons of $\DS(G,m)$ (red) with other empirical Bayes modeling cultures (green): (a) The DS-estimated prior is compared with Efron's exponential prior model \cite{efron1996empirical}; (b) The DS distribution for the child illness data compared to NPMLE (the dotted line); (c) Estimates for the number of illnesses in the following year $\hat{\Ex}(\theta \mid x)$ by Gamma PEB, Robbins' formula,  Bayesian deconvolution, NPMLE, and our elastic-Bayes estimate.}
\label{fig:app_connections}
\end{figure}
\vskip1em 
{\bf Example 2}. NPMLE imposes no structural constraint and produces an estimated prior as discrete measure supported on at most $k$ points within the data range. Figure \ref{fig:app_connections}(b) shows its application to the child illness data \cite{wang2007fast}, which comes from a study that followed $k=602$ pre-school children in north-east Thailand from June 1982 through September 1985.  Researchers recorded the number of times ($y$) a child became sick during every 2-week period. Using the DS-Bayes method, we have $\hat{\pi}(\theta)$ where $g(\theta)$ is a gamma distribution with $\hat{\alpha} = 1.06$ and $\hat{\beta} = 4.19$ as
\begin{equation}\label{eq:ill_pi_hat}
\hat{\pi}(\theta) = {\rm Gamma}(\te;\alpha,\beta)\big[1 - 0.13T_3(\theta;G) - 0.28T_6(\theta;G) \big].
\end{equation} 
Our method produces a smooth, grid-free $\widehat \pi$ that accurately captures the overall shape. Figure \ref{fig:app_connections}(c) plots the Bayes estimates $\Ex[\Te_i|Y=y]$ for all competing methods. For Efron's \texttt{Deconvolve} we have used c0 = 2 and  pDegree = 25, which seems to produce a reasonable prior density estimate for this example. A careful look at the plot reveals an `oscillating' NPMLE Bayes estimates (orange curve), which many not be particularly desirable.
\begin{table}[t]
\centering
\def\arraystretch{.25}
\caption{\label{tbl:computation_time} Run-time comparisons between DS-Bayes and two other BNP methods: Dirichlet prior (DP), and Bernstein-Dirichlet (BDP) model. All methods were run using an Intel\textregistered Core\textsuperscript{TM} i5-7200 CPU @ 2.50GHz. \texttt{DPpackage} uses \texttt{C++} complier to speed-up, while ours is a prototype version implemented in \texttt{R}.}
\scalebox{.85}{
\begin{tabular}{lcccccc}
  \toprule
\multirow{2}{*}{Data Set}  &  \multirow{2}{*}{\# Studies ($k$)} & DS & DP & Ratio  & BDP & Ratio \\
& & Time & Time  & DP to DS & Time  & BDP to DS \\ 
  \midrule
 Rat Tumor & 70 & 1.83 & 10.42 & 5.69 & 3457.75 & 1889.5 \\[1em]
 Surgical Node & 844 & 30.95 & 189.3 & 6.12 & 45292.15 & 1463.4\\[1em]
 Terbinafine & 41 & 1.7 & 5.46 & 3.2 & 1883.18 & 1107.8\\[1em]
 Rolling Tacks & 320 & 8.27 & 59.16 & 7.15 & 16569.78 & 2003.6\\[1em]
 Arsenic & 28 & 0.47 & 13.09 & 27.8 & 433.29 & 254.9\\
\bottomrule
\end{tabular}
}
\end{table}
\subsection*{A3. Dirichlet-Process-based Approaches}
Bayesian nonparametric [BNP] technique assigns prior
distribution on infinite-dimensional spaces of probability models. The majority of work on Bayesian nonparametrics utilizes a Dirichlet process prior \cite{ferguson1973}. The computational cost of BNP is severe and produces prior on a set of discrete probability measures that demands an additional layer of smoothing. Figure \ref{fig:DP_rattack_comp} contrasts Dirichlet-process based Beta-Binomial models \cite{liu1996nonparametric} with our DS-Bayes model. There are few remarks warranted here:
\begin{itemize}[itemsep=1.24pt,topsep=1.24pt]
  \setlength{\itemsep}{1pt}
\item BNP method requires careful tuning of several hyper-priors values, which from our experience can be quite sensitive (see Figure \ref{fig:DP_rattack_comp}). Without practical guidance, this ``fishing expedition'' can potentially overwhelm one who seeks to confidently use it in practice. On the contrary, our method finds practically the same answer without adjusting multiple hyper-prior values. 
\item The posterior inferences of BNP are highly complex and require computationally expensive MCMC. In contrast, the beauty of our approach is that it provides compact analytical expressions that make the computation much more amicable.
\item The flexibility of BNP comes with the heavy task of estimating a massive number of parameters--``massively parametric Bayes.'' Contrast this with $\DS(G,m)$ model, which provides a reduced-dimensional characterization of the prior distribution with a closed form solution that is computationally efficient (see Table \ref{tbl:computation_time}) and produces smooth estimates in one-shot. For additional comments see the `Critical Appraisal' section.
\end{itemize}

\begin{figure}[t]
\centering
\begin{subfigure}{.35\textwidth}
  \centering
  \includegraphics[width=\linewidth,trim=1cm 0cm 1cm 1cm]{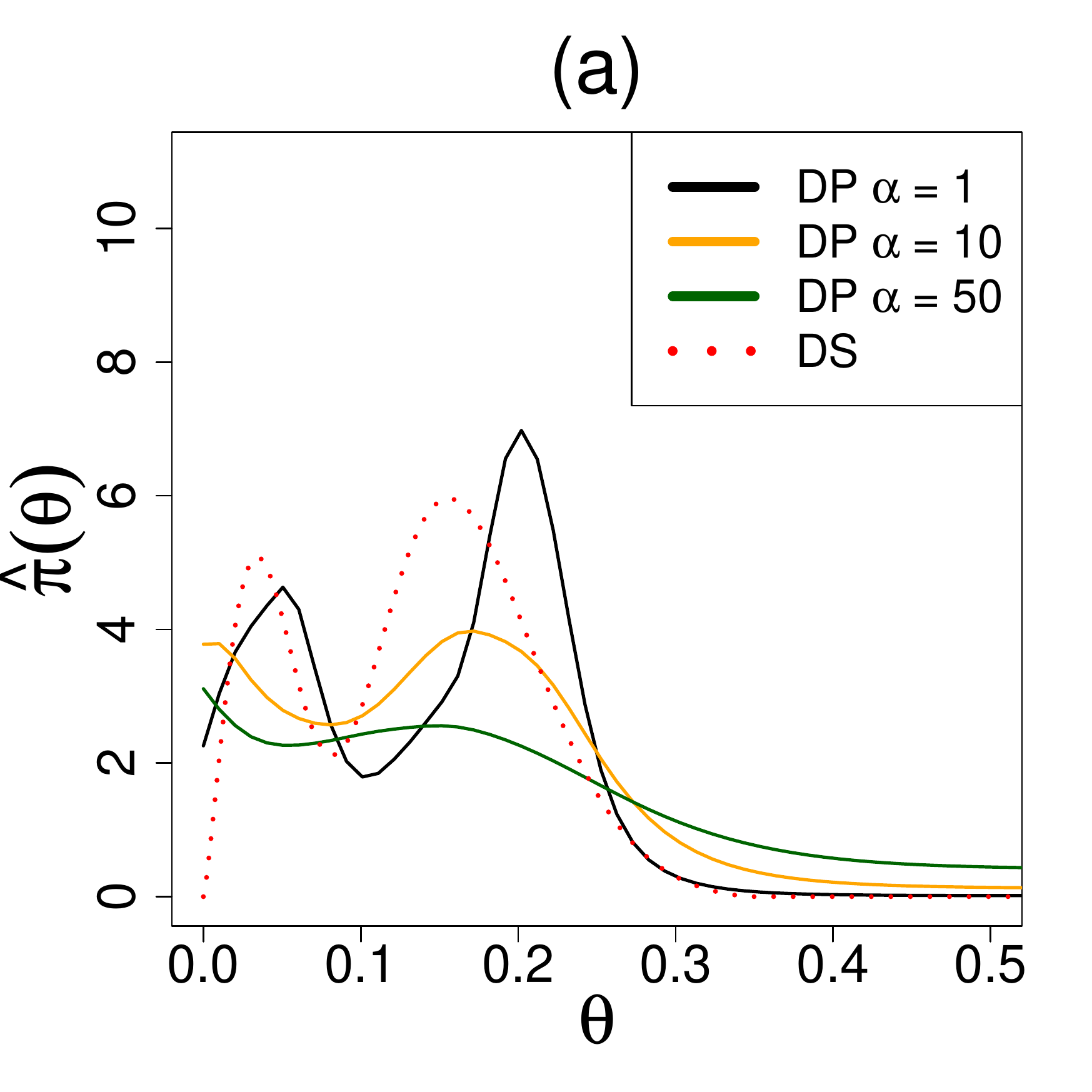}
\end{subfigure}\hspace{1.75cm}%
\begin{subfigure}{.35\textwidth}
  \centering
\includegraphics[width=\linewidth,trim=1cm 0cm 1cm 1cm]{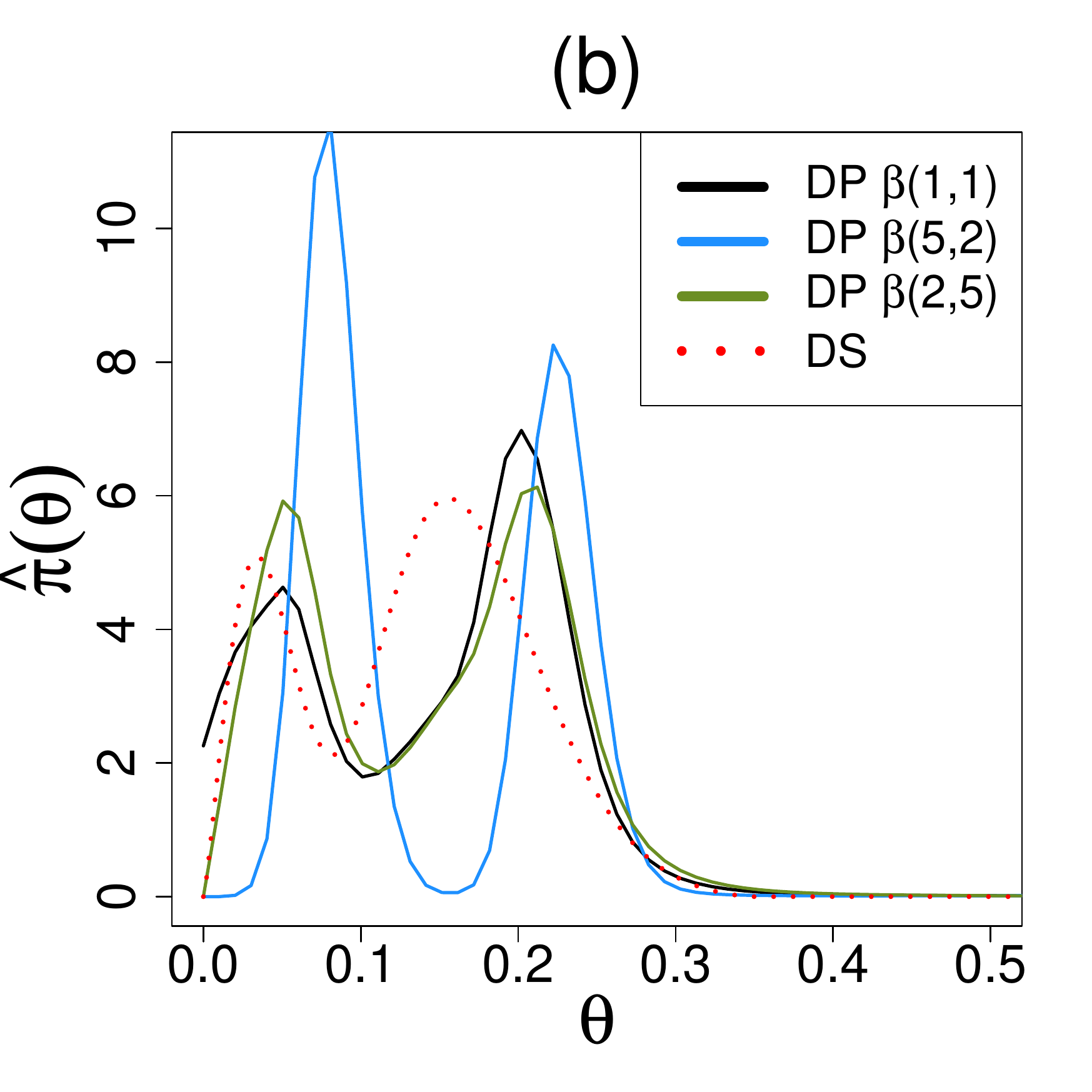}
\end{subfigure}\hspace{1mm}%
\begin{subfigure}{.35\textwidth}
  \centering
  \includegraphics[width=\linewidth,trim=1cm 1cm 1cm 0cm]{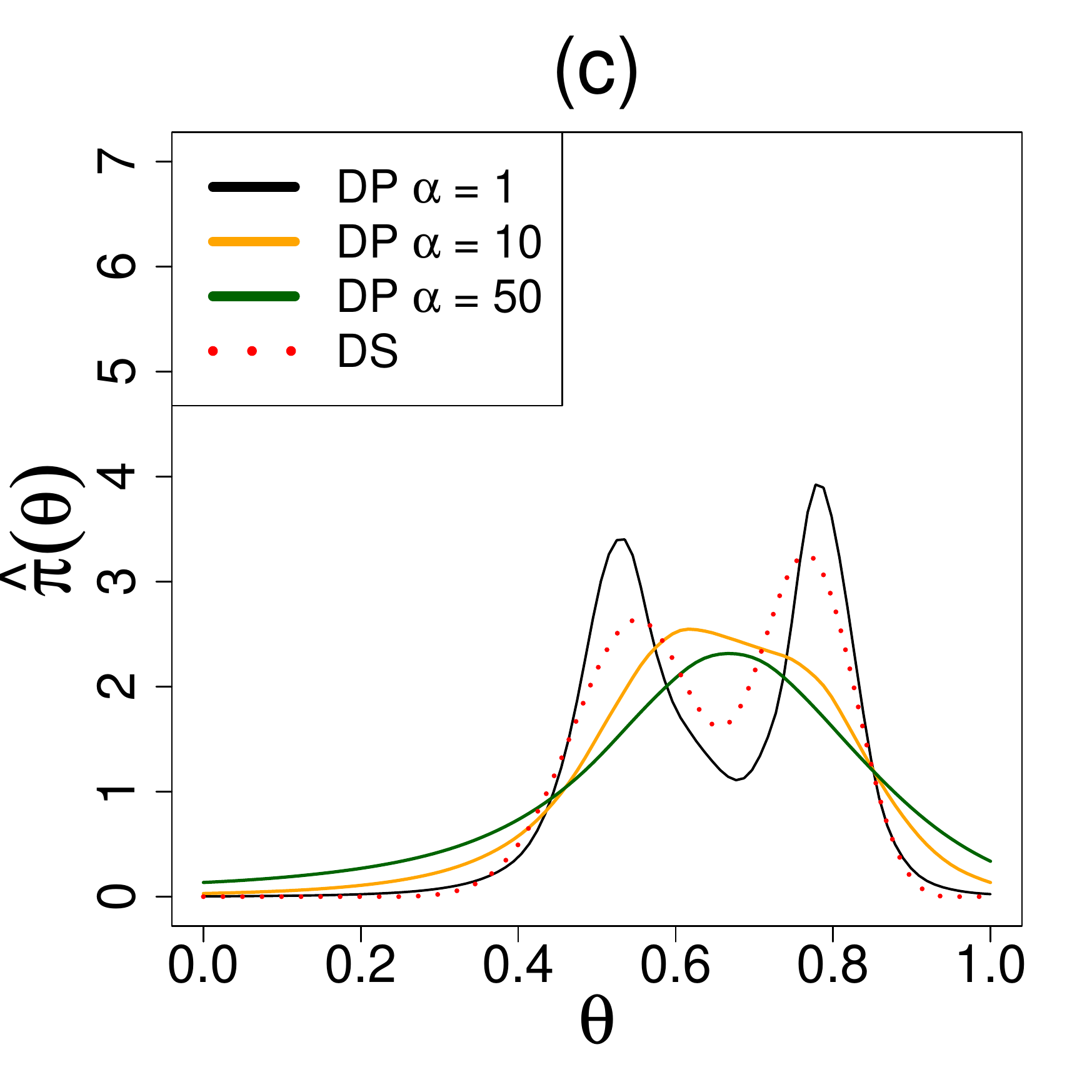}
\end{subfigure}\hspace{1.75cm}%
\begin{subfigure}{.35\textwidth}
  \centering
  \includegraphics[width=\linewidth,trim=1cm 1cm 1cm 0cm]{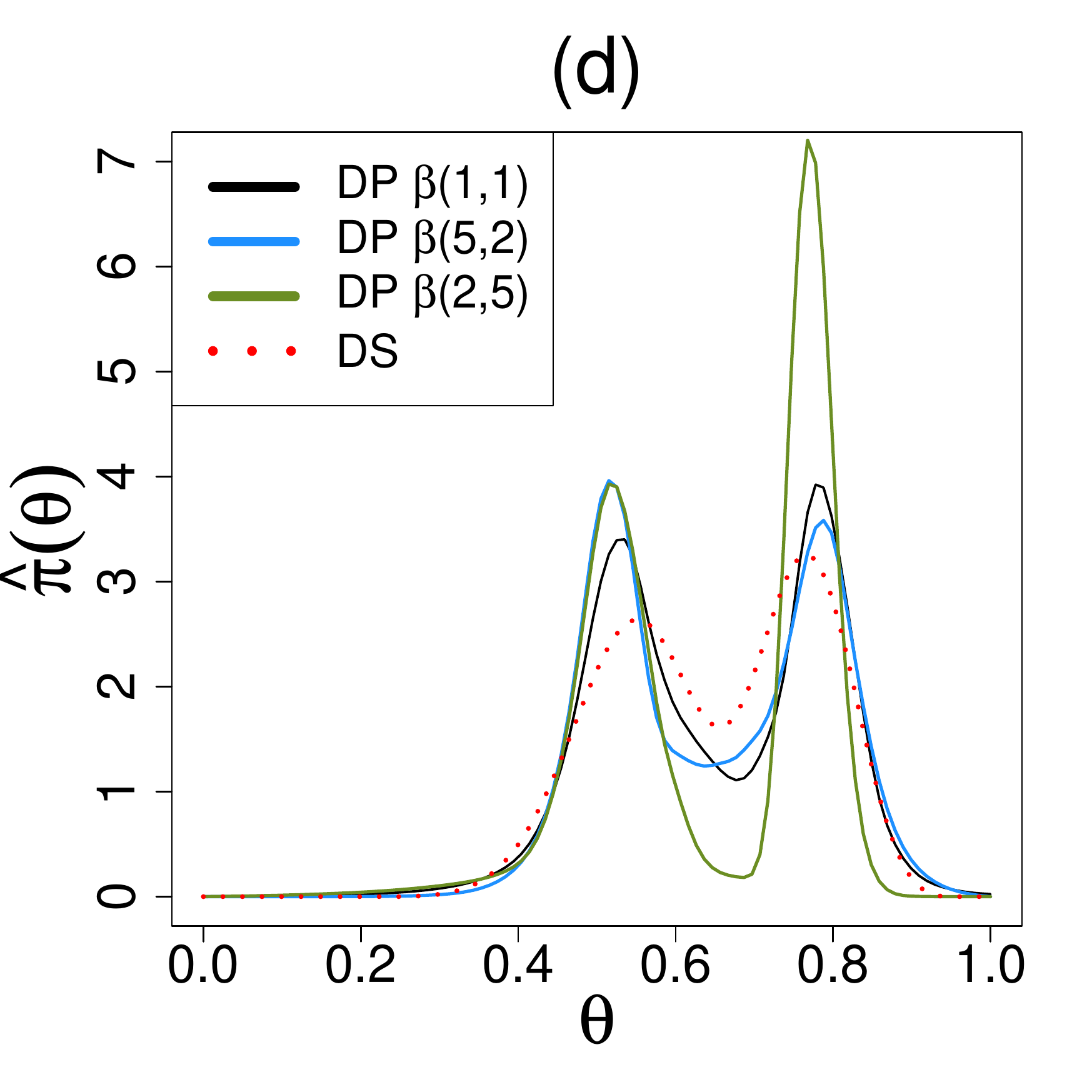}
\end{subfigure}
\vskip.4em
\caption{Illustrations of the different settings for BNP modeling with a Dirichlet process prior. Panel (a) displays results for the rat tumor data using uniform base prior while varying $\alpha$.  Panel (b), also for the rat tumor data, fixes $\alpha = 1$ and varies the base prior between uniform, ${\rm Beta}(5,2)$ and ${\rm Beta}(2,5)$.  Panels (c) and (d) use the same settings as (a) and (b), but applied to the rolling tacks data.}
\label{fig:DP_rattack_comp}
\end{figure} 
\subsection*{A4. Weakly Informative Priors}
A weakly informative prior [WIP] is a proper prior that intentionally provides less information than available prior knowledge.  This lies somewhere between a fully subjective and a fully objective prior \cite{gelman2008weakly, gelman2013bayesian}.

One can also view our approach from a WIP-angle where $d(u;G,\Pi)$ acts as a ``spreading/weakening function'' of the subjective prior $g(\te)$, which we \textit{learn from the data}. In the $\DS(G,m)$ language: $m$ is the radius of spread; the larger the $m$, the greater possibility you allow for changing the shape (the process of weakening) of the presumed scientific prior distribution $g(\te)$. These analogies suggest that our concepts and notations might provide a systematic way to formulate the WIP philosophy by addressing the debates around ``WIP is a subjective prior with ad hoc large but bounded support.'' This reformulation can also bring some tangible computational gain.
\vskip1em
\subsection*{A Critical Appraisal} We close this section by highlighting some of the unique aspects and practical advantages of our technique: 
\vskip1em
\begin{itemize}[topsep=3pt]
  \setlength{\itemsep}{4pt}
\item  \textit{Clarifying the Motivation}: Let's start by reminding ourselves  that the core motivation behind the `Bayes \textit{via} goodness-of-fit' is more than just another recipe for estimating the prior from data. To understand the mysterious prior in a transparent and definitive way, it is critical to ask: How can we provide automatic protection from unqualified specifications of prior distribution? How do we assess the prior-uncertainty using exploratory graphical tools? How can we prescribe a revised statistical-prior starting from the user-specified scientific-prior?  As it stands, these fundamental questions are usually left unanswered in traditional Bayes framework and create a major obstacle for non-Bayesian practitioners to confidently use Bayesian tools. Consequently, there is a need to address these issues in a formal manner to bring much-needed transparency. This paper has taken some solid steps toward this goal with a methodology that is readily usable for wide-range of applied problems. We believe that our technology can become an integral part of applied Bayesian modeling.
\item \textit{Theoretical Novelty}: Our proposed theory, which is general enough to include almost all commonly-used models, yields analytic closed-form solutions for posterior modeling. This is noteworthy for the simple reason that none of the nonparametric methods mentioned above can stand by this claim.
\item \textit{Theoretical Simplicity}: The whole `Bayes \textit{via} Goodness-of-fit' framework can be developed starting from a few basic principles, without requiring any exotic theoretical treatment. This could add invaluable transparency to the theory and practice of (empirical) Bayesian statistics. 
\item \textit{Exploratory Side}: Our approach brings a distinct exploratory flavor into the empirical-Bayes modeling. It encourages interactive data analysis rather than blindly `turning the crank.' Through numerous examples, we demonstrated how this mode of operation often leads to more insights into the data that are typically infeasible under a business-as-usual Bayesian modus operandi.
\item \textit{Computational Side}: Simplicity of implementation and computational ease are the two hallmarks of our method. No expensive MCMC or even sophisticated optimization routines are required! We made a sincere effort to design a practical Bayesian data analysis tool that is both simpler to comprehend and easy to implement.
\item \textit{A Third Empirical Bayes Culture}. Our empirical Bayes approach is neither parametric nor nonparametric. As argued in Section \ref{sec:PoiSmo} (of the main paper), our algorithmic approach blends conventional PEB and Robbins-style full-fledged NEB. Our goal is to combine the best of both worlds, in the sense that the prior reduces to PEB (ulcer data example) when in fact the default parametric $g$ is appropriate, while in the event of prior-data conflict (rat tumor or child illness data), it automatically produces reliable nonparametric procedures.  And in this whole story, the U-function $d(u;G,\Pi)$ acts as the ``connector'' between these two extreme philosophies. Overall, we are hopeful that our Generalized EB (gEB) modeling framework might expedites the development of a new \textit{genre} of `unified' Bayesian algorithms \cite{berger2000bayesian} by leveraging the rich interplay between two extreme EB philosophies.

\end{itemize}
\vskip2em
\begin{center}
{\large B.  MORE INSIGHTS INTO THE LP-BASIS FUNCTIONS}
\end{center}
Here we will show the shapes of the LP-polynomials, focusing only the Binomial case. It works similarly for other families. 
\vskip.55em
The $\{T_j(\theta; G_{\al,\be})\}_{j\geq 1}$ denotes the class of orthonormal polynomials of the beta distribution with parameters $\alpha$ and $\beta$.  Let $T_j(\theta;G_{\al,\be}) = \Leg_j\{G_{\al,\be}(\te)\}$ and $G_{\al,\be}(\te) = \frac{1}{\B(\alpha,\beta)}\int_0^{\theta} \phi^{\alpha-1}(1-\phi)^{\beta-1} d\phi$.  Figure \ref{fig:App_scr_func} displays the shapes of top four LP polynomials for three different sets of parameters. We generate these polynomials with the following R code: 
\vskip1em
\begin{tcolorbox}[colback=gray!5!white,colframe=black]
\begin{verbatim}
LP.basis.beta <- function(y, g.par, m){
#######################################
##  g.par: parameters for the beta distribution
#######################################
	require(orthopolynom)
	u <- pbeta(y, g.par[1], g.par[2]) # computes G(y)
	poly <-  slegendre.polynomials(m,normalized=TRUE) 
	TY <- matrix(NA,length(u),m)
	for(j in 1:m) TY[,j] <- predict(poly[[j+1]],u)
	return(TY)}
\end{verbatim}
\end{tcolorbox}

\begin{figure}[t]
\begin{subfigure}{.32\textwidth}
  \centering
  \includegraphics[width=\linewidth,trim=1cm .5cm 0cm 1cm]{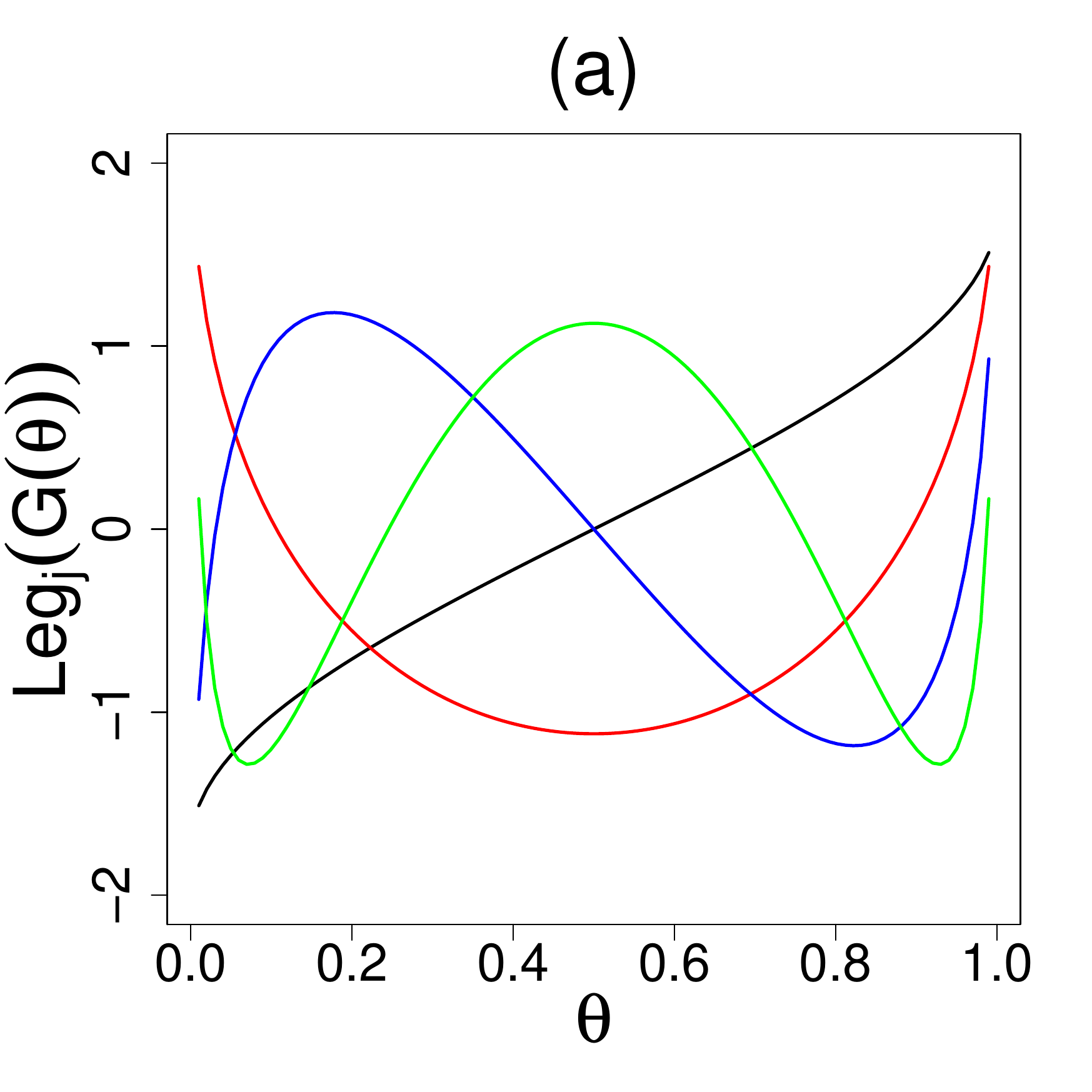}
\end{subfigure}\hspace{1.5mm}%
\begin{subfigure}{.32\textwidth}
  \centering
\includegraphics[width=\linewidth,trim=.5cm .5cm .5cm 1cm]{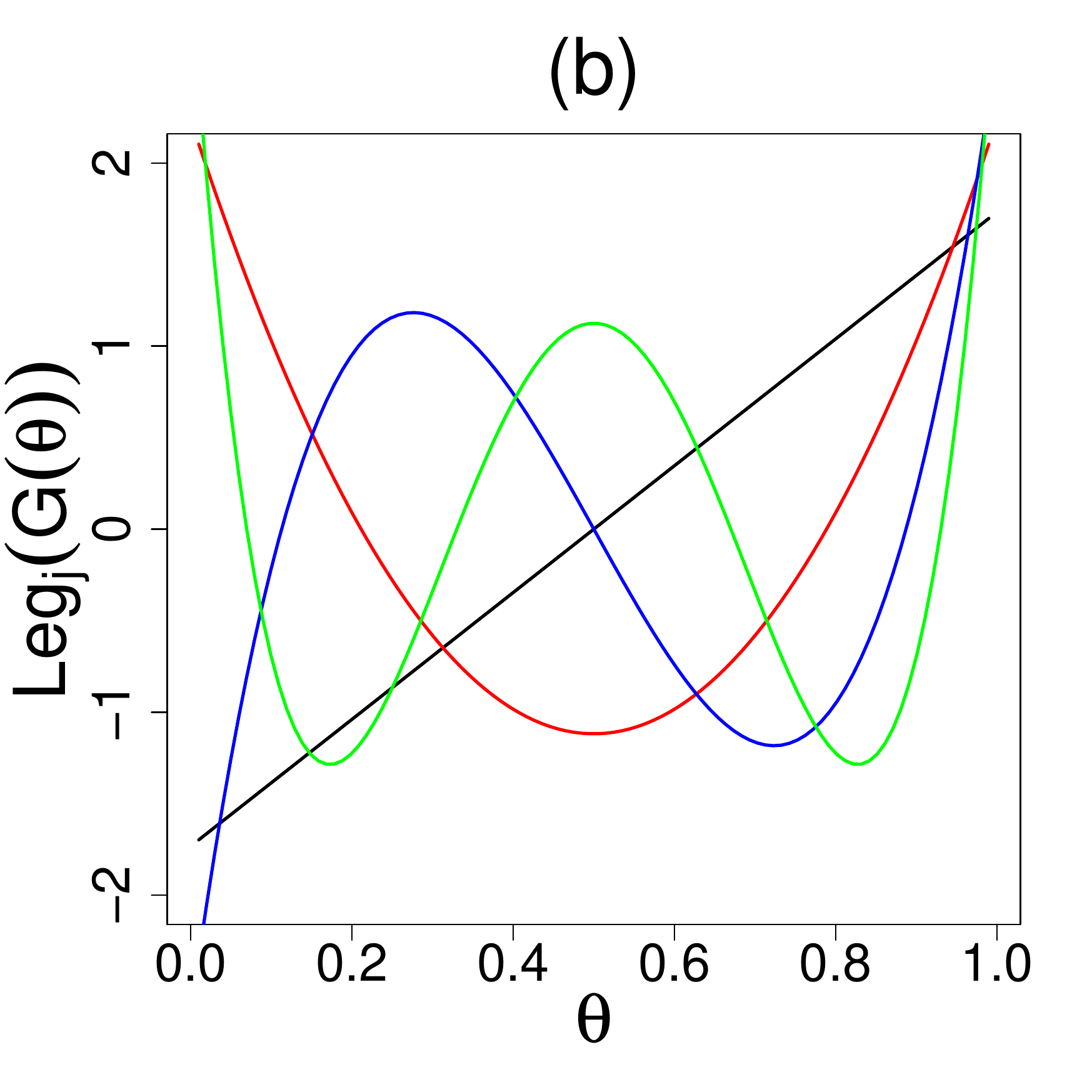}
\end{subfigure}\hspace{1.5mm}%
\begin{subfigure}{.32\textwidth}
  \centering
  \includegraphics[width=\linewidth,trim=0cm .5cm 1cm 1cm]{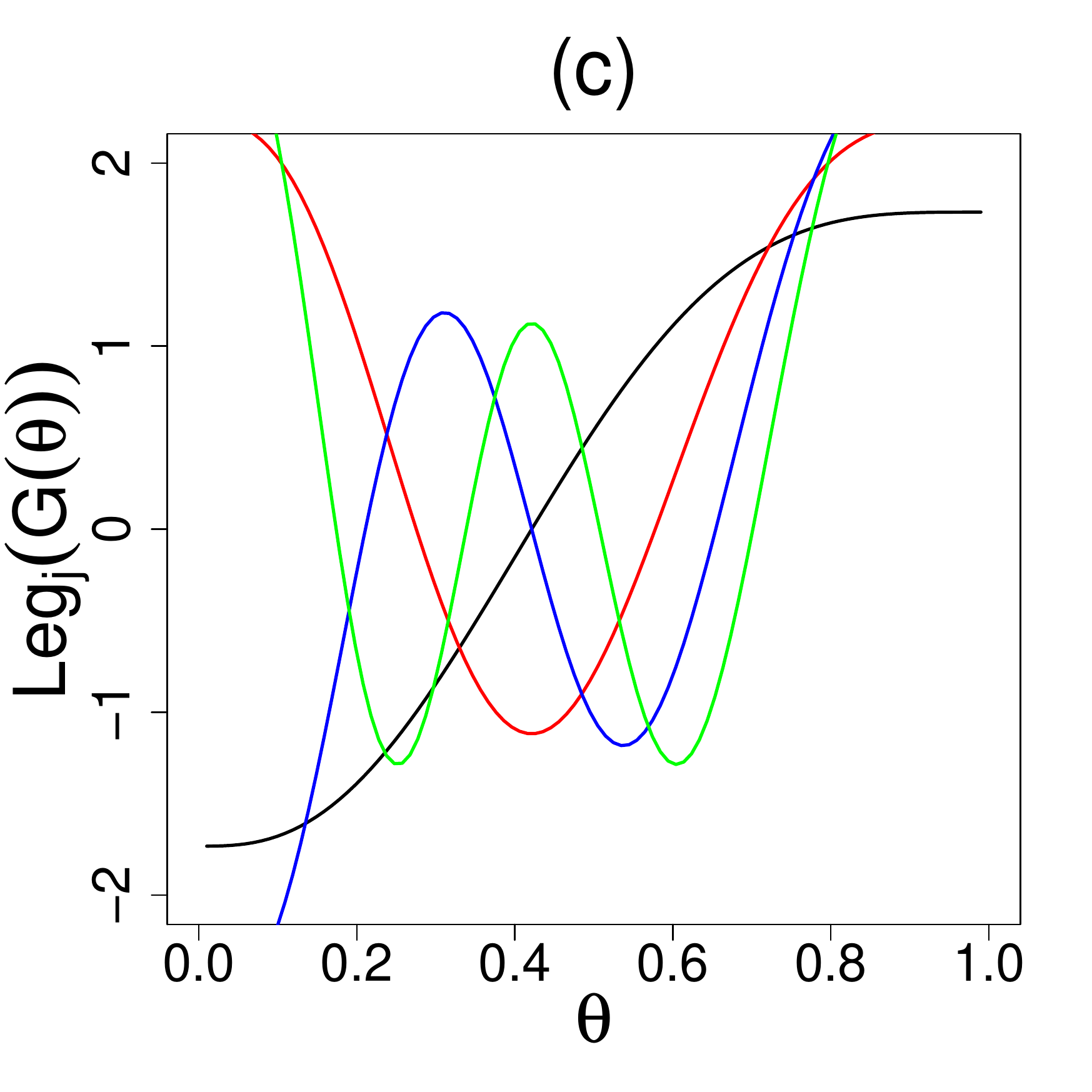}
\end{subfigure}
\vskip.3em
\begin{subfigure}{\textwidth}
  \centering
  \includegraphics[width=\linewidth,trim=0cm 5.2cm 0cm 4.7cm]{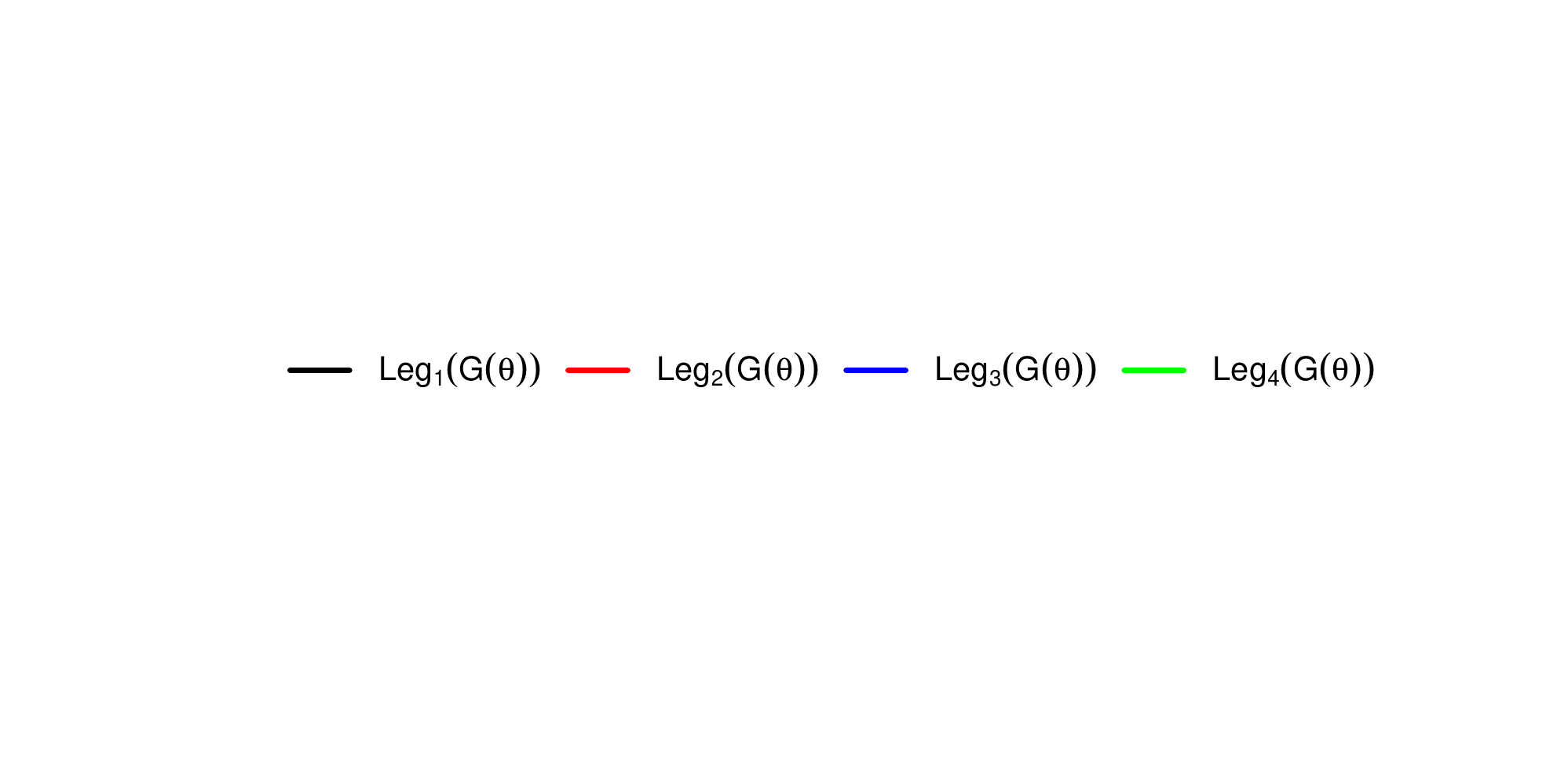}
\end{subfigure}
\vskip.4em
\caption{LP-polynomials $T_j(\theta;G_{\al,\be})$ for \texttt{family= "beta"} with the following $(\alpha, \beta)$ choices: (a) Jeffrey's prior ($\alpha = \beta = 0.5$), (b) uniform prior ($\alpha = \beta = 1$), and for (c) ${\rm Beta}(\alpha = 3,\beta = 4)$. Note that for $U[0,1]$ (the middle panel): $T_j \equiv \Leg_j$, as $G(\te)$ is simply $\te$ in this case.}
\label{fig:App_scr_func}
\end{figure}
\vspace{-.5em}
\begin{center}
{\large C. THE $\DS(G,m)$ SAMPLER}
\end{center}
The following algorithm generates samples from the $\DS(G,m)$ model via accept/reject scheme.
\vskip.5em
\makebox[\textwidth]{\textbf{$\DS(G,m)$ Sampling Algorithm}}
\rule{\textwidth}{.8pt}
\texttt{Step 1.} Generate $\Te$ from $g$; independent of $\Te$, generate $U$ from ${\rm Uniform}[0,1]$.
\vskip.4em
\texttt{Step 2.}  Accept and set $\Te^* = \Te$ if
\begin{equation*}
\widehat{d}[ G(\te); G,\Pi] > U \max_u\{\widehat{d}(u;G,\Pi) \};
\end{equation*}
otherwise, discard $\Te$ and return to Step 1.
\vskip.4em
\texttt{Step 3.} Repeat until simulated sample of size $k$, $\{\te_1^*,\te_2^*,\cdots,\te_k^*\}$.\\
\rule{\textwidth}{.8pt}
\vskip1em
Note that when $\widehat d \equiv 1$ then the $\DS(G,m)$ automatically samples from parametric $G$.  
\vskip2em
\begin{center}
{\large D. OTHER PRACTICAL CONSIDERATIONS}
\end{center}

In the event that \textit{no} prior knowledge is available, selecting the parametric conjugate prior $G$ with empirically estimated $\al,\be$ in conjunction with our Type-II Method of Moments algorithm (sec. \ref{sec:EstAlgo}) will provide a quick estimate of the oracle $\pi$. The algorithm finds the `best' approximating prior model given \texttt{m.max}: the maximum complexity that the subject-matter experts want to entertain. From our experience with $\DS(G,m)$ model, we found $\texttt{m.max}=8$ works satisfactorily well in practice (in fact in all our examples $8$ was our default choice), which encompasses the space of reasonable priors around $G$. Given this maximum radius, our method generates a deviance plot, where the  ``elbow'' shape (see Figure \ref{fig:appD_sim} (a)) denotes the most likely model dimension. This procedure is fully incorporated into our algorithm so that practitioners can use it in a completely automatic manner. 
\vskip.6em
\begin{figure}[t]
\begin{subfigure}{.32\textwidth}
  \centering
  \includegraphics[width=\linewidth,trim=1cm .5cm 0cm 1cm]{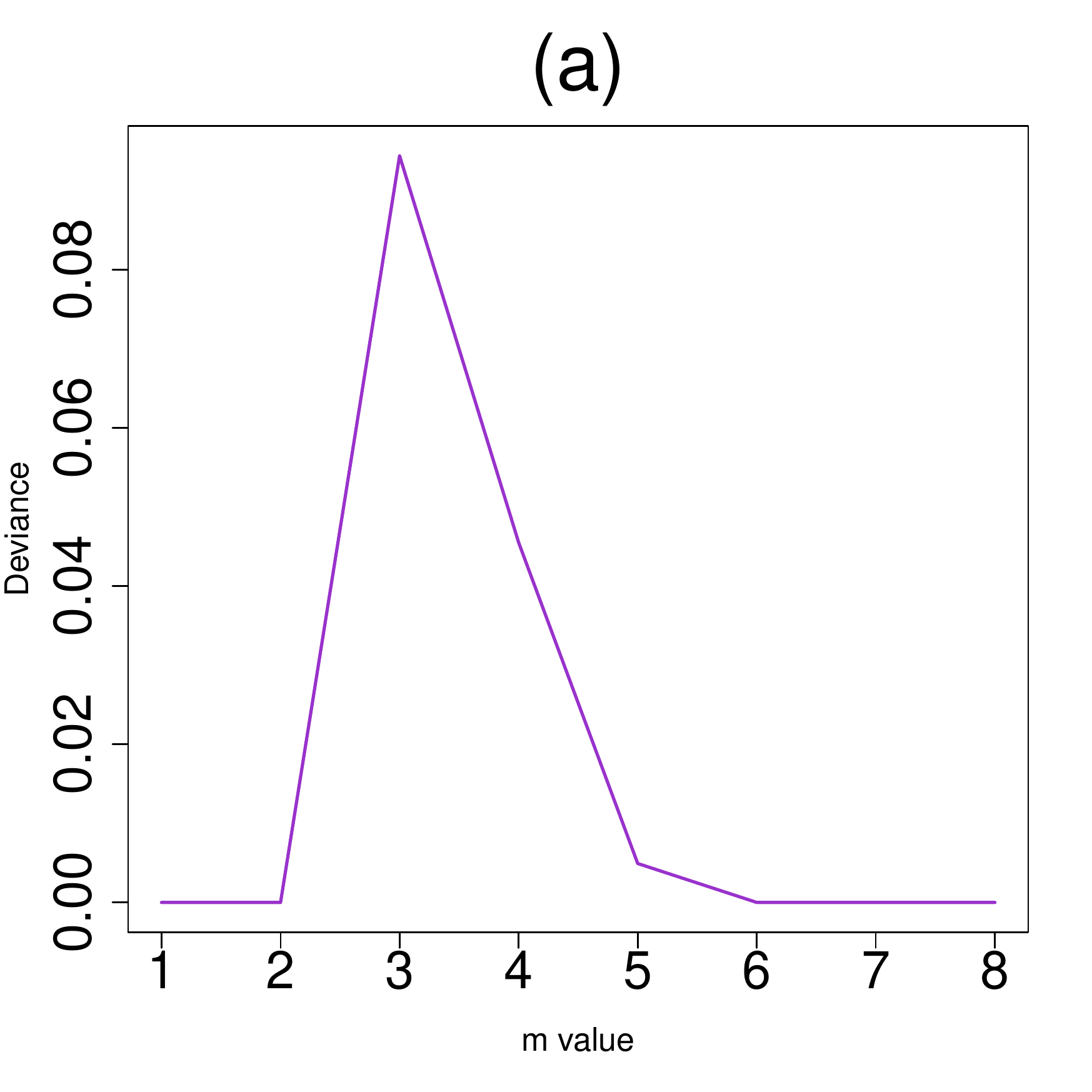}
\end{subfigure}\hspace{1.5mm}%
\begin{subfigure}{.32\textwidth}
  \centering
\includegraphics[width=\linewidth,trim=.5cm .5cm .5cm 1cm]{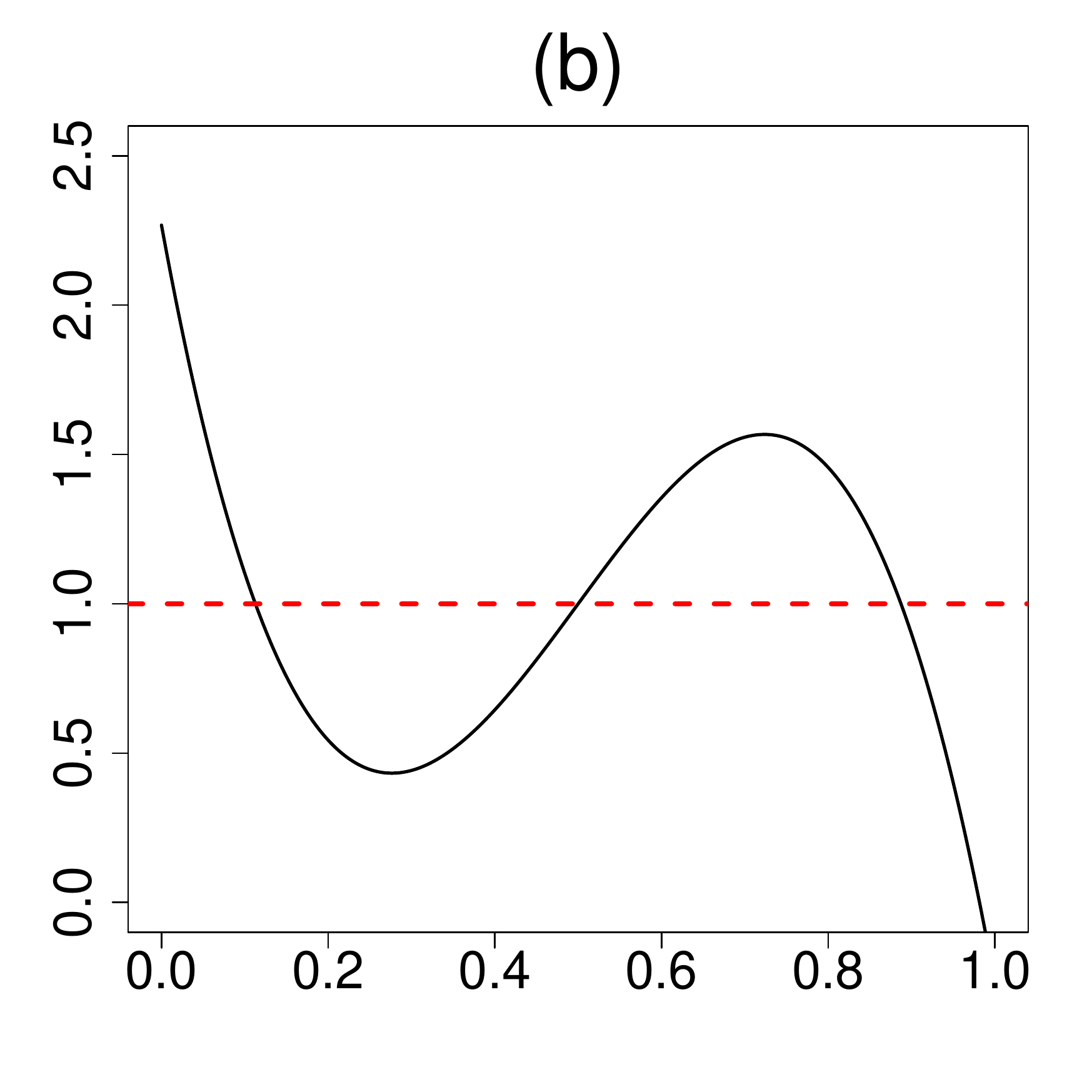}
\end{subfigure}\hspace{1.5mm}%
\begin{subfigure}{.32\textwidth}
  \centering
  \includegraphics[width=\linewidth,trim=0cm .5cm 1cm 1cm]{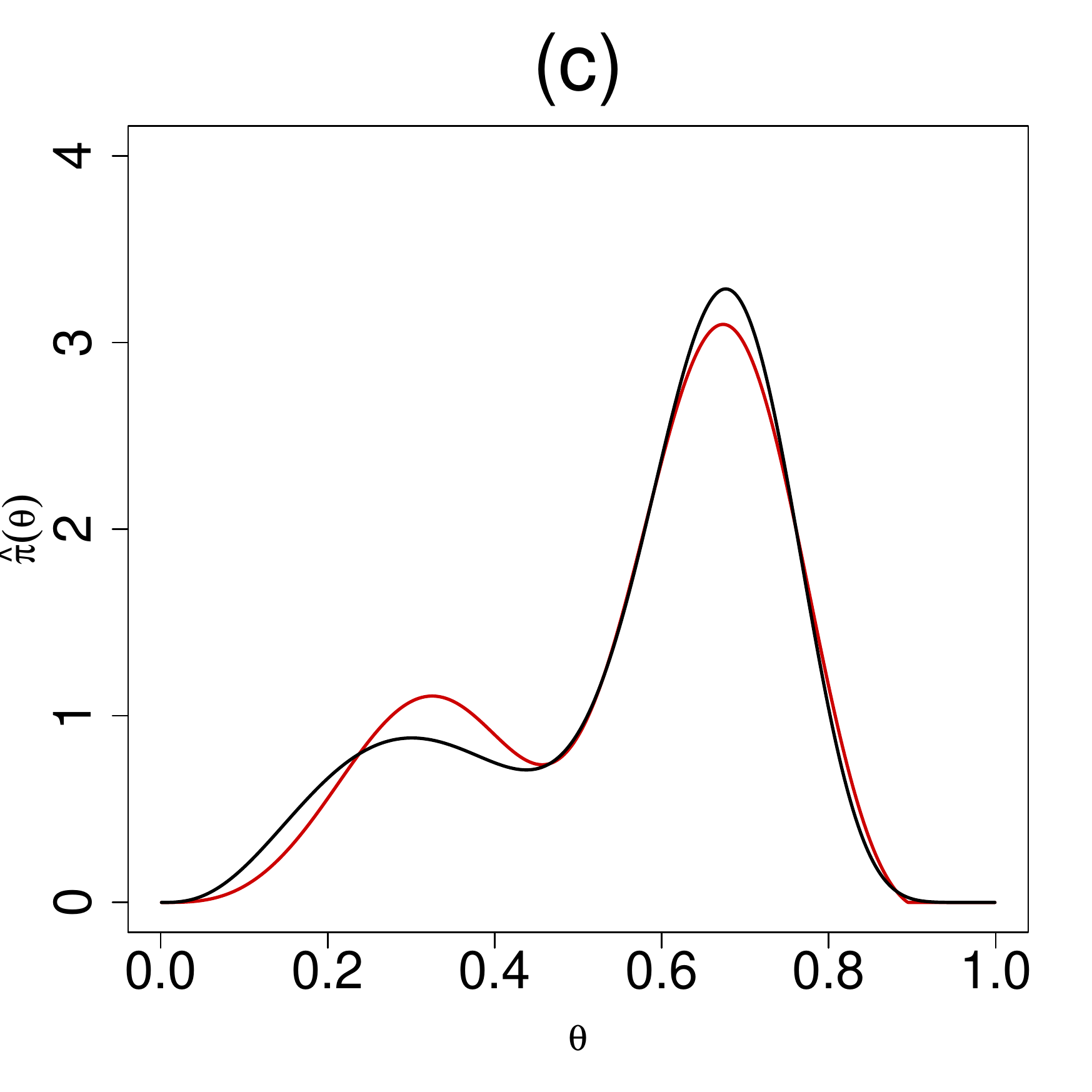}
\end{subfigure}
\caption{Analysis for simulated data based on Type-II Method of Moments algorithm. The first panel (a) finds the ``elbow'' in the ${\rm BIC}(m)$ deviance plot at $m=3$; (b) shows the U-function, while (c) plots the true $\pi(\theta)$ (black) along with the estimated DS prior (red) $\hat{\pi}(\theta) = g(\theta; \hat \alpha,\hat\beta)\big[1 - 0.48T_3(\theta;G) \big]$ with MLE $\hat \al=4.16$ and $\hat\be=3.04$.}
\label{fig:appD_sim}
\end{figure}

{\bf Illustration}. Consider the model: $y_i | \theta_i \sim \text{Binomial}(50,\te_i)$ with $i=1,\ldots,k=90$ and the true prior distribution $\pi(\te)=.3{\rm Beta}(4,6) +.7{\rm Beta}(20,10)$. Our goal is to see how well we can approximate the unknown $\pi$ without any prior knowledge of its shape. The following R code can be used to reproduce our findings reported in Figure \ref{fig:appD_sim}.
\vskip1em
\begin{tcolorbox}[colback=gray!5!white,colframe=black]
 \begin{verbatim}
set.seed(8701)
k <- 90
n.i <- 50
n.vec <- rep(n.i,k)
k1 <- ceiling(.7*k)
#Test Simulation: Mixed beta Distribution
theta.sim <- c(rbeta(k1,20,10), rbeta( (k-k1),4,8))
y.sim <- sapply(theta.sim, rbinom, size = n.i, n = 1)
sim.df <- data.frame(y = y.sim, N = n.vec)
##Run Type II MoM Algorithm
sim.start <- gMLE.bb(sim.df$y,sim.df$N)$estimate 
sim.LP.par <- DS.prior(sim.df, g.par = sim.start, family = "Binomial")
\end{verbatim}
\end{tcolorbox}
\vskip1em
The \texttt{sim.start} object holds the MLE estimate for the initial parameters for $G$.  From the \texttt{sim.LP.par} object, we generate diagnostic and analysis plots for appropriate $m$, U-function, and the $\DS(G,m)$ estimate, as shown in Figure \ref{fig:appD_sim}.




\begin{center}
{\large E. SOFTWARE} 
\end{center}
We provide an \texttt{R} package, \texttt{BayesGOF} \cite{bayesgofR} to perform all the tasks outlined in the paper.  We now summarize the main functions and their usage for the Rat binomial data example: 
\vskip.75em
\begin{tcolorbox}[colback=gray!5!white,colframe=black]
 \begin{verbatim}
 #Phase I: Modeling
library("BayesGOF")
data(rat)
rat.start <- gMLE.bb(rat$y, rat$n)$estimate
rat.ds <- DS.prior(rat, g.par = rat.start, family = "Binomial")
plot(rat.ds, plot.type = "Ufunc") # Figure 1(a)
plot(rat.ds, plot.type = "DSg") # Figure 2(a)
\end{verbatim}
\end{tcolorbox}
\vskip.4em
The package also provide functionalities for Macro and MicroInference:  
\vskip.9em
\begin{tcolorbox}[colback=gray!5!white,colframe=black]
 \begin{verbatim}
 #Phase II: Inference
rat.ds.macro <- DS.macro.inf(rat.ds, num.modes = 2, method = "mode")
plot(rat.ds.macro) # Figure 3(a)
rat.ds.pos <- DS.micro.inf(rat.ds, y.0 = 4, n.0 = 14)
plot(rat.ds.pos)  # Figure 5(b)
\end{verbatim}
\end{tcolorbox}
\vskip.5em
We hope this software will encourage applied data scientists to apply our method for their real problems. 
\vskip.8em
\begin{center}
{\large F. DATA CATALOGUE}
\end{center}
\begin{table}[H]
\vskip.5em
\setlength{\tabcolsep}{18pt}
\def\arraystretch{1.15}
\centering
{\footnotesize
\caption{\label{tbl:DataSummary} List of datasets by distribution family and sources. They are sorted first by family, then according to $k$: from large to small-scale studies.} \vskip.25em
\begin{tabular}{lccl}
  \toprule
Dataset  & \# Studies ($k$)&Family~~ & ~~~Sources\\
  \midrule
 Surgical Node & 844 & Binomial &~~ Efron (2016) \cite{efron2014bayes}\\
 Rolling Tacks & 320 & Binomial &~~ Beckett and Diaconis (1994) \cite{beckett1994spectral}\\
Rat Tumor & 70 & Binomial  &~~ Gelman et al. (2013, Ch. 5) \cite{gelman2013bayesian}\\
 Terbinafine &  41 &  Binomial &~~ Young-Xu and Chan (2008) \cite{young2008pooling} \\ 
 Naval Shipyard & 5 & Binomial &~~ Martz et al. (1974) \cite{martz1974empirical} \\ \hdashline
Galaxy & 324 & Gaussian &~~~De Blok et al.(2001) \cite{de2001high}\\
 Ulcer & 40 & Gaussian &~~~Sacks et al.(1990) \cite{sacks1990endoscopic}\\ 
 Arsenic & 28 & Gaussian & ~~~Willie and Berman (1995) \cite{willie1995ninth}\\\hdashline
Insurance & 9461 & Poisson & ~~~Efron and Hastie (2016) \cite{efron2016computer}\\
Child Illness & 602 & Poisson & ~~~Wang (2007) \cite{wang2007fast} \\
 Butterfly & 501 & Poisson & ~~~Fisher et al. (1943) \cite{fisher1943}\\
 Norberg & 72 & Poisson & ~~~Norberg(1989) \cite{norberg1989experience}\\
\bottomrule
\end{tabular}
}
\end{table}
\vskip1.5em

\begin{center} 
{\large G. THE ROBBINS' PUZZLE}
\end{center}
In Section \ref{sec:micro} of the main paper, we presented a simulated scenario ( Pharma-example) that demonstrated the power of the DS Elastic-Bayes estimate when there is significant prior-data conflict.  Here we include further comparisons with two recent methods:  Efron's Bayesian deconvolution (implemented in the \texttt{deconvolveR} package), and Koenker's NPMLE (implemented in the \texttt{REBayes} package). 
\begin{figure}[H]
\vskip.65em
  \centering
  \begin{subfigure}{.44\textwidth}
  \includegraphics[width=\linewidth,trim=1cm .5cm 0cm 1cm]{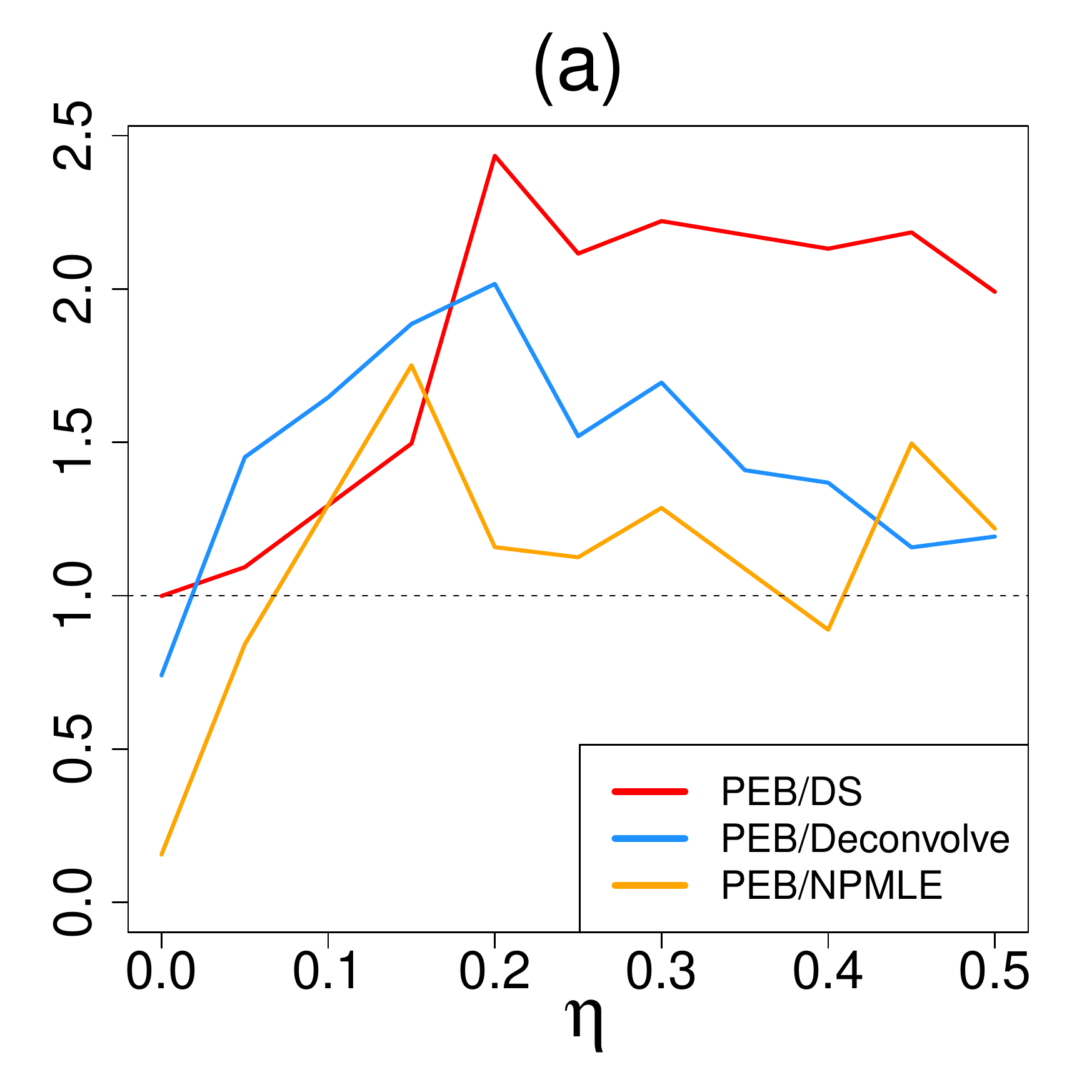}
  \end{subfigure}\hspace{12mm}%
  \begin{subfigure}{.44\textwidth}
    \includegraphics[width=\linewidth,trim=1cm .5cm 0cm 1cm]{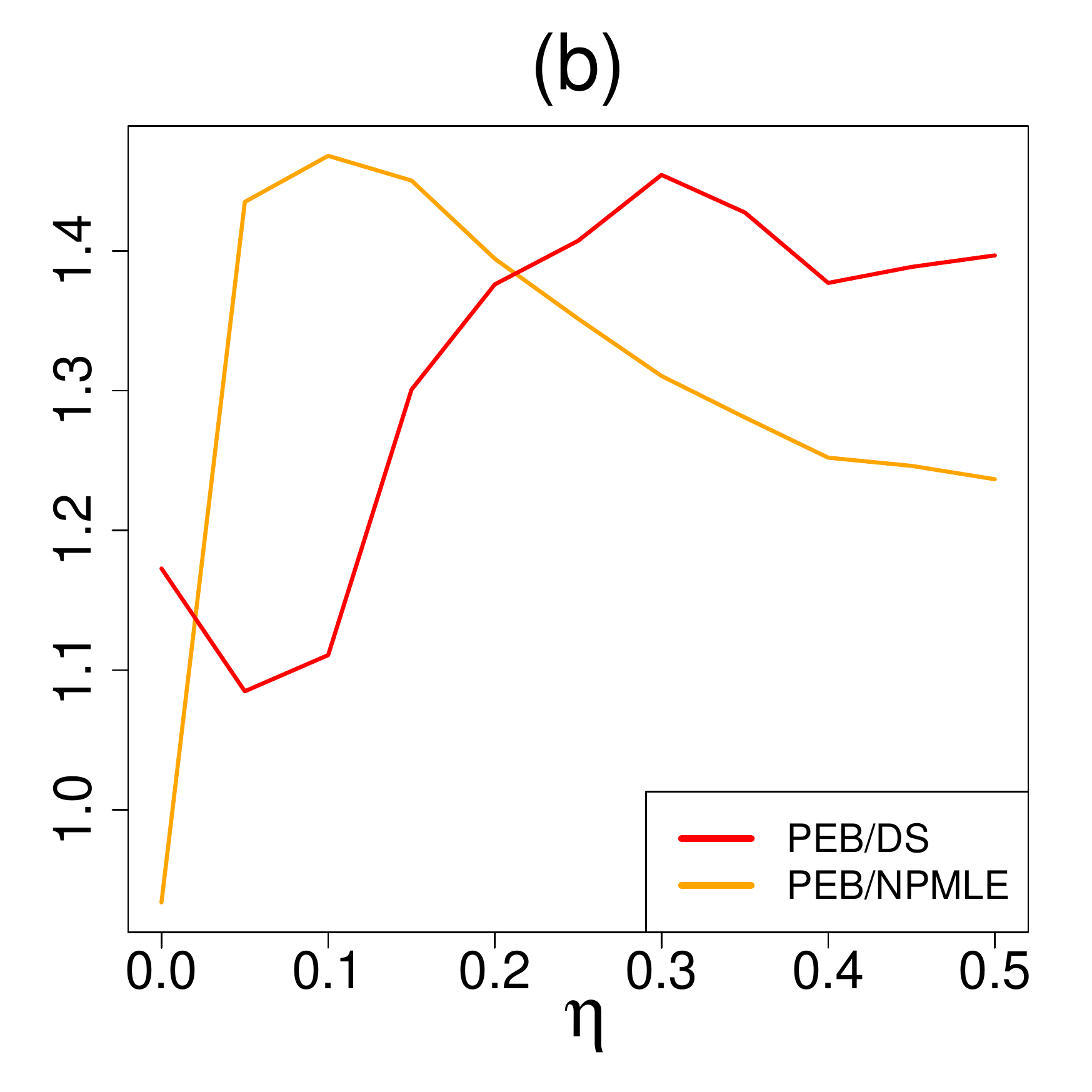}
  \end{subfigure}
  \vskip.35em
\caption{Results of two separate simulations comparing DS with other methods.  In (a), the MSE ratios for PEB to empirical Bayes deconvolution (PEB/Dec; blue), PEB to Kiefer-Wolfowitz NPMLE using REBayes \texttt{Bmix} (PEB/NPMLE; orange) and PEB to DS (PEB/DS; red) with respect to $\eta$. Panel (b) shows the ratio of empirical risks after applying both DS and NPMLE methods to Robbins' `compound decision' problem.}
\label{fig:appG_PDsim}
\end{figure}

{\bf Example 1}. 
Here we will operate under the exact settings presented in Section \ref{sec:micro}. Figure \ref{fig:appG_PDsim}(a) shows that as $\eta$ increases, DS tends to outperform the other two methods, although Deconvolve performs superbly for $\eta$ smaller than $0.15$. Two specially interesting extreme cases are $\eta = 0$ and $\eta = 0.5$. The first scenario describes the situation when the underlying parametric $beta$ distribution is the right choice for the prior where, as expected, the Stein's parametric shrinkage estimator dominates other nonparametric approaches. On the other hand, the $\eta = 0.5$ is a complicated situation where $\pi(\te)=\frac{1}{2} \text{Beta}(5,45) + \frac{1}{2}  \text{Beta}(30,70)$, and consequently, the parametric EB [PEB] is less efficient compared to the nonparametric ones. The most interesting and surprising result, however, comes from DS Elastic-Bayes, which acts like the Stein prediction formula when underlying parametric assumption is correct (the null $\eta = 0$ case) but adapts itself non-parametrically in a completely automated manner when the true $\pi(\te)$ deviates from the assumed $g$, thereby elegantly addressing the robustness-efficiency puzzle of Robbins \cite{robbins1980}. 

\vskip.5em
{\bf Example 2}. 
Next, we investigate the prediction problem originally introduced by Robbins \cite{robbins1985asymptotically} and discussed in Gu and Koenker \cite{koenker2016}. We observe $Y_i = \theta_i + \epsilon_i$, $i = 1 \cdots k$, where $\epsilon_i \overset{{\rm ind}}{\sim} {\rm Normal}(0,1)$, and $\theta_i =\pm 1$ with probability $\eta$ and $1-\eta$ respectively. Our goal is to estimate the $k$-vector $\theta \in \{-1,1\}^k$ under the loss $\mbox{L}(\hat{\theta},\theta) = k^{-1}\sum_{i=1}^k |\hat{\theta_i}-\theta_i|$. For comparison purpose, we computed the ratio of PEB empirical risk\footnote[2]{Mean loss is computed over $500$ replications.} to the the DS method (EB/DS) and to the NPMLE estimator (EB/KW) for $k = 1000$.  Figure \ref{fig:appG_PDsim}(b) shows a very interesting result:  Kiefer-Wolfowitz NPMLE method performs significantly better than the DS-elastic Bayes when $0< \eta < 0.2$. While for other values of $\eta$, including $\eta$ equals to zero point, our micro-estimation procedure demonstrates tremendous promise. This further validates the flexibility and adaptability of our technique even in the discrete settings. 

\begin{center} 
{\large H. EXAMPLE WITH COVARIATES}
\end{center}
The `Bayes via goodness-of-fit' methodology can easily accommodate additional covariates.  We demonstrate this capability using the following example.

{\bf The Norberg Example}. The Norberg insurance dataset \cite{norberg1989experience} consists of $k=72$ Norwegian occupational categories, where $y_i$ denotes the number of claims made against a policy.  Additionally, we have the total number of years each group was exposed to risk $E_i$; when normalized by a factor of 344, $E_i$ gives the expected number of claims during a contract period.  Similar to Norberg \cite{norberg1989experience}, we assume $Y_i \sim {\rm Poisson}(\theta_i E_i)$. Given the normalized $E_i$, we interpret $\theta_i$ as the occupational-specific rate of risk.  
\vskip.8em
DS-Bayes analysis yields the following estimated prior, where $g$ is the conjugate gamma prior with MLE $\alpha = 6.02$ and $\beta = 0.20$:
\begin{equation}\label{eq:nor_pi_hat}
\widehat{\pi}(\theta)\, = \,g(\theta; \alpha,\beta)\big[1 - 0.70T_1(\theta;G) + 0.83T_2(\theta;G)- 0.53T_3(\theta;G)\big].
\vspace{-.1em}
\end{equation} 
In Figure \ref{fig:appH_covar}(a), the U-function clearly indicates potential prior-data conflict when using $\pi(\te)={\rm Gamma}(6.02,0.20)$. Figure \ref{fig:appH_covar}(b) displays the DS prior (red) along with the parametric EB (blue) and the Kiefer-Wolfowitz NPMLE estimate (green).  We see a definite bimodality for $\hat{\pi}(\theta)$, indicating that there are two distinct groups of risk profiles.  The macroinference plot in Figure \ref{fig:appH_covar}(c) reinforces the structured heterogeneity of the data.  In terms of risk-profile, we consider the mode at $0.59$ as occupational categories with comparatively lower risk; these are occupations less likely to make a claim based on their risk exposure.  The mode at $1.46$ represents those occupations at a higher risk, thus more likely to make a claim based on their exposure. 
\begin{figure}[t]
  \centering
  \begin{subfigure}{.31\textwidth}
    \includegraphics[width=\linewidth,trim=1cm .5cm 0cm 1cm]{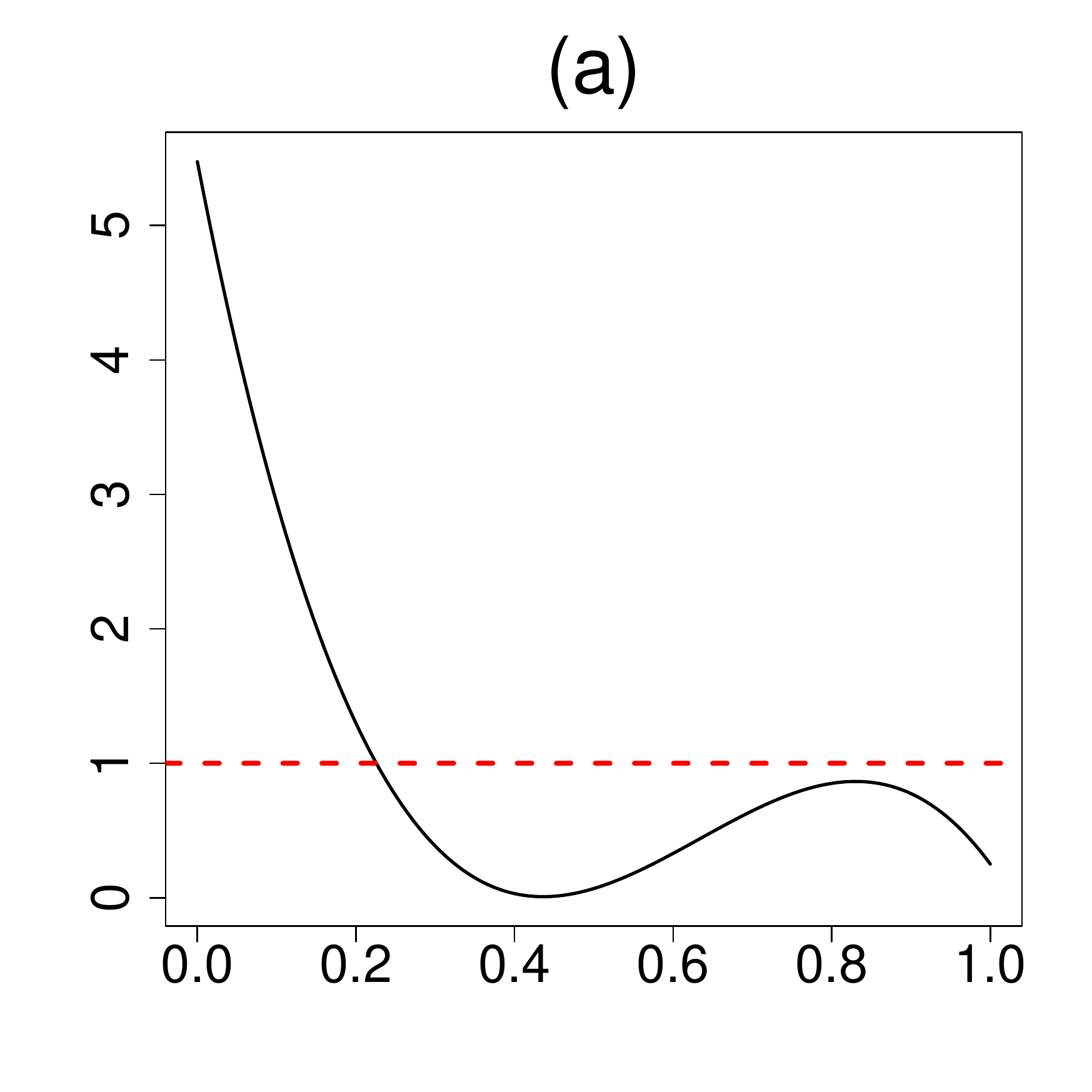}
  \end{subfigure}\hspace*{2.5mm}
\begin{subfigure}{.31\textwidth}
  \includegraphics[width=\linewidth,trim=1cm .5cm 0cm 1cm]{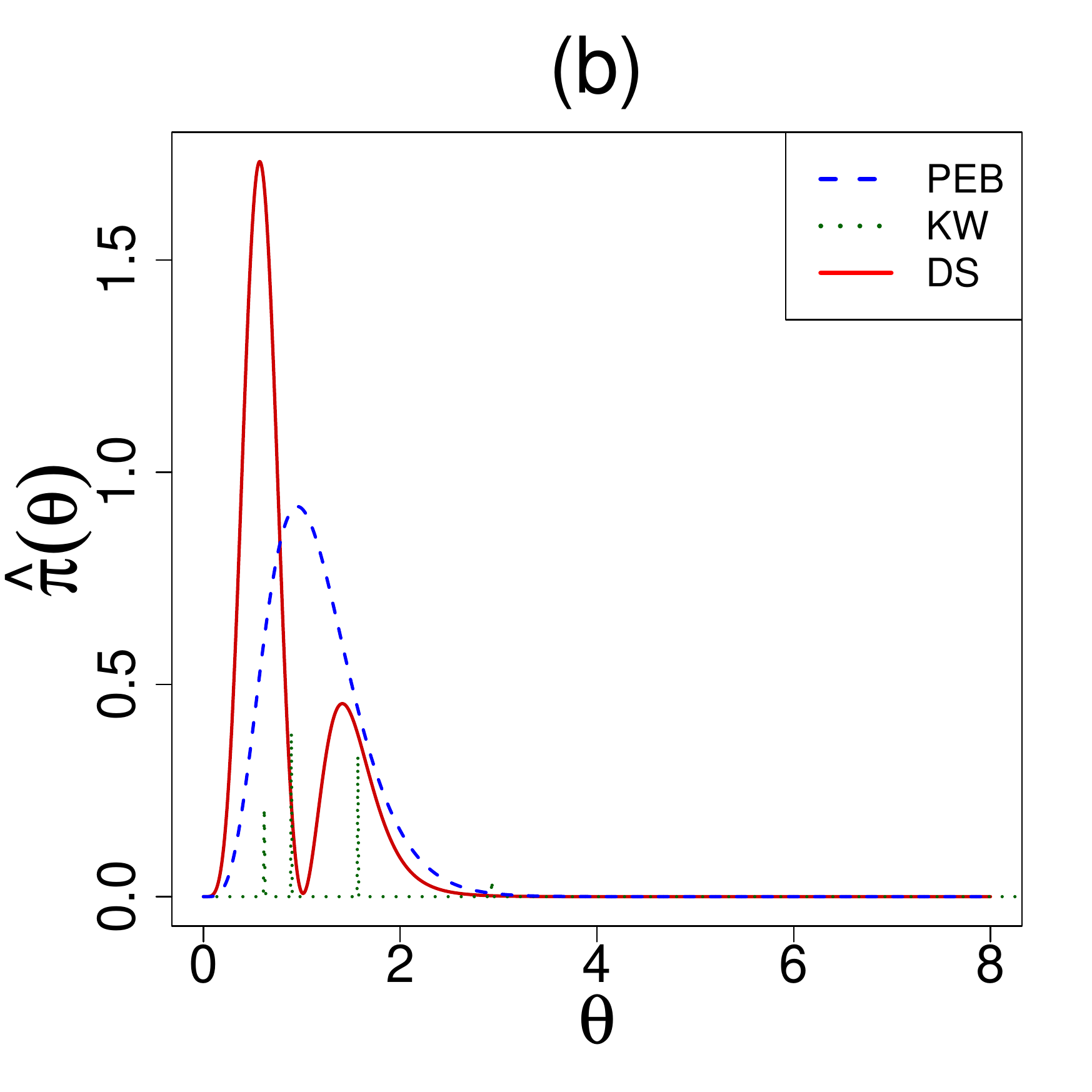}
  \end{subfigure}\hspace*{2.5mm}
  \begin{subfigure}{.31\textwidth}
    \includegraphics[width=\linewidth,trim=1cm .5cm 0cm 1cm]{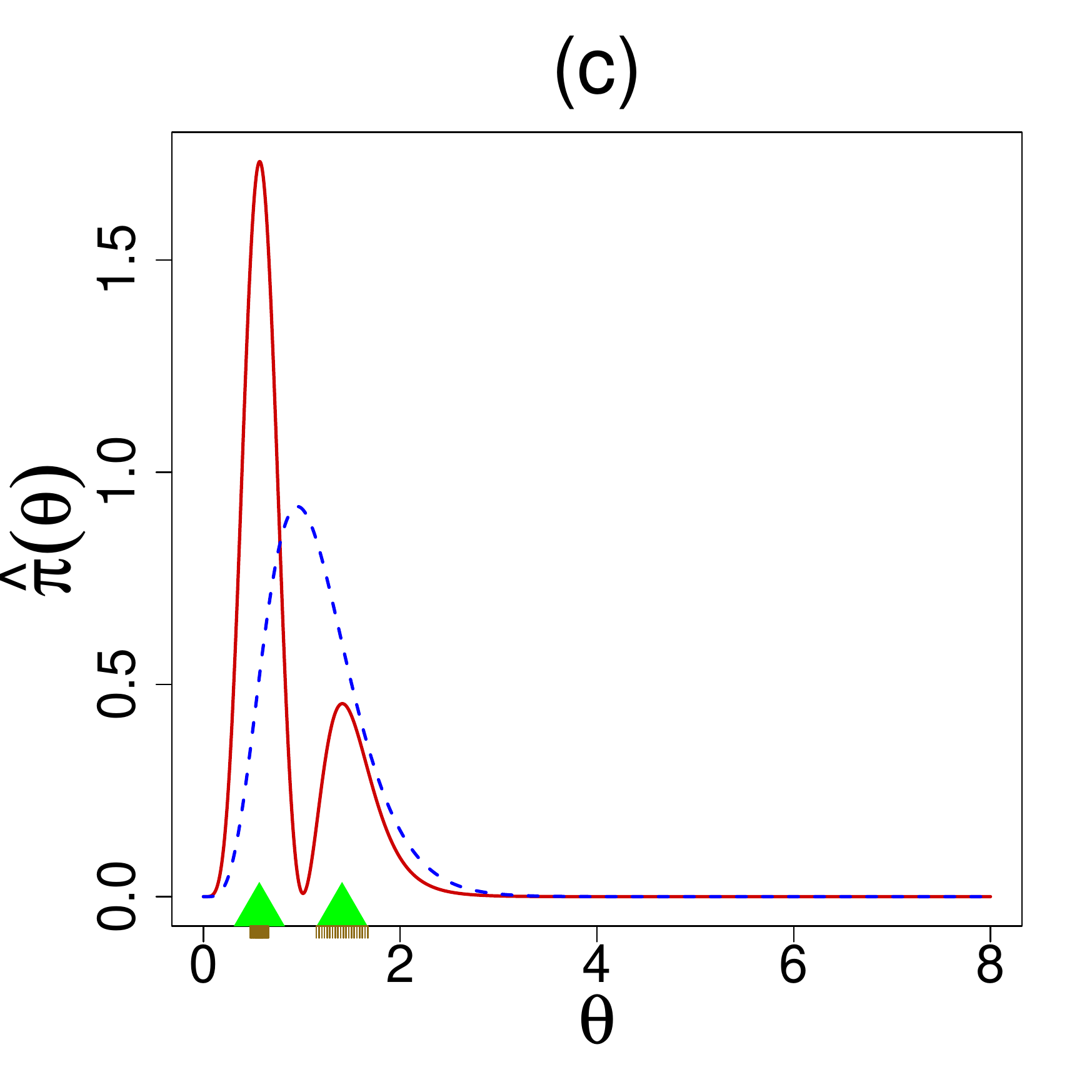}
  \end{subfigure}
  \par\medskip
    \begin{subfigure}{.31\textwidth}
    \includegraphics[width=\linewidth,trim=1cm .5cm 0cm 1cm]{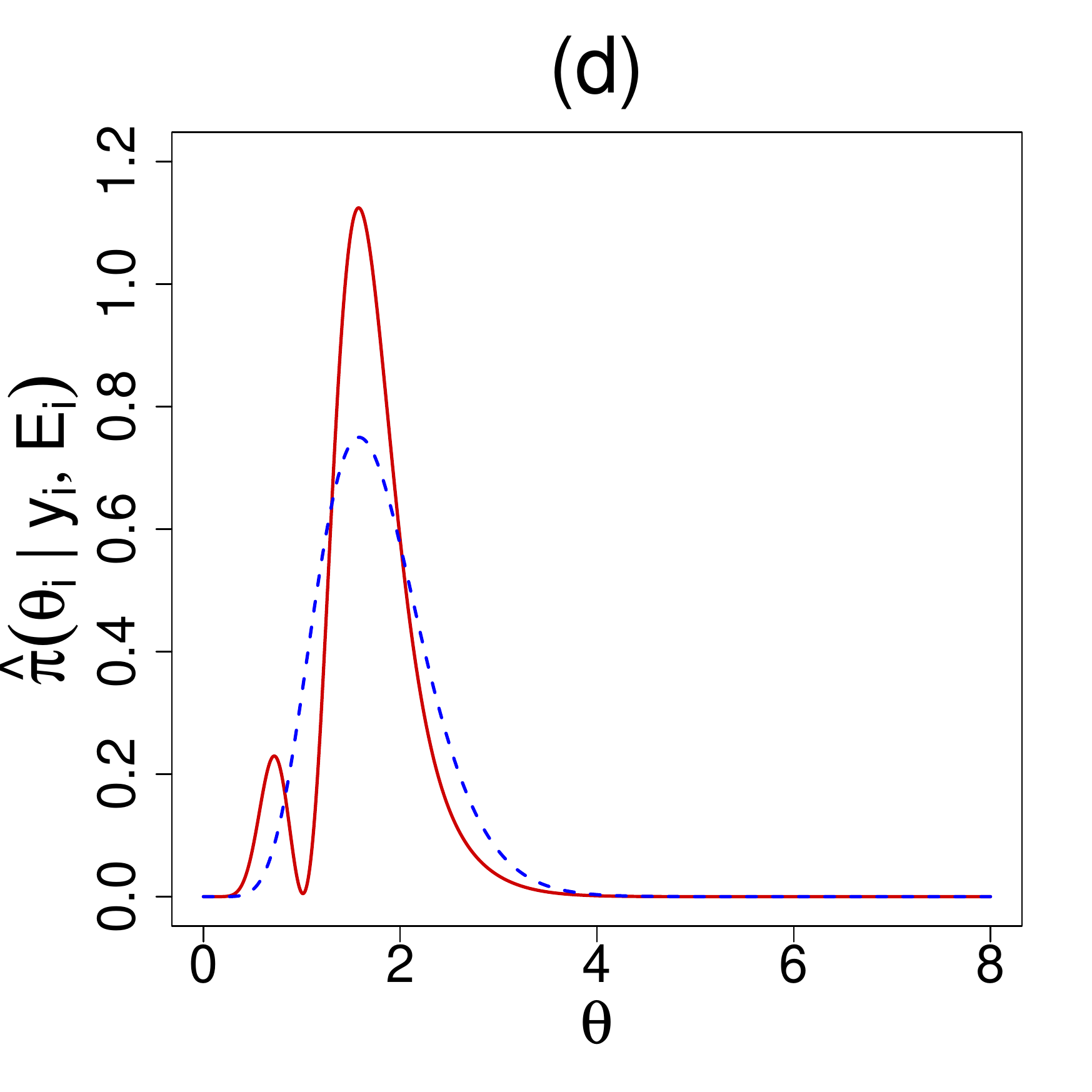}
  \end{subfigure}\hspace*{2.5mm}
\begin{subfigure}{.31\textwidth}
  \includegraphics[width=\linewidth,trim=1cm .5cm 0cm 1cm]{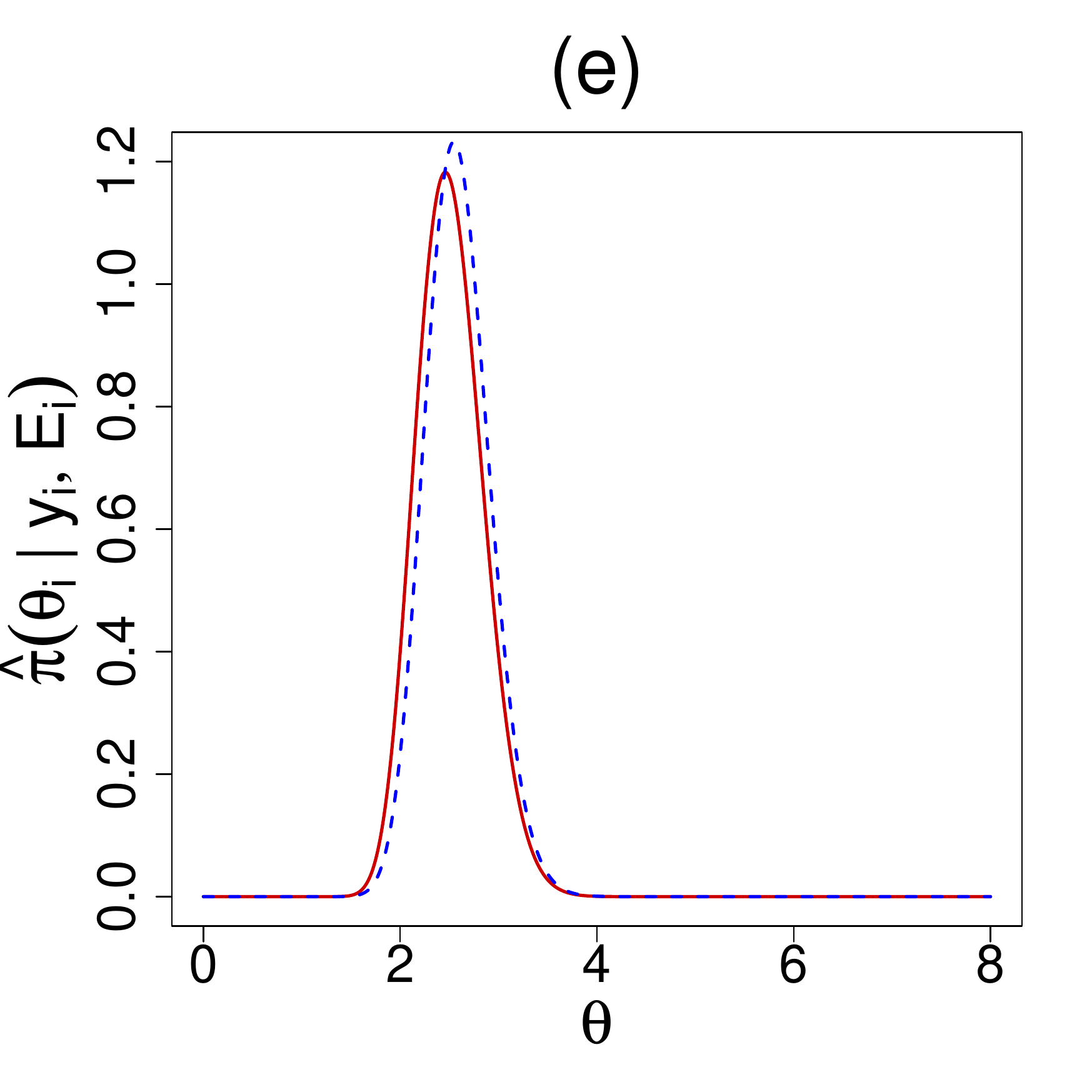}
  \end{subfigure}\hspace*{2.5mm}
  \begin{subfigure}{.31\textwidth}
    \includegraphics[width=\linewidth,trim=1cm .5cm 0cm 1cm]{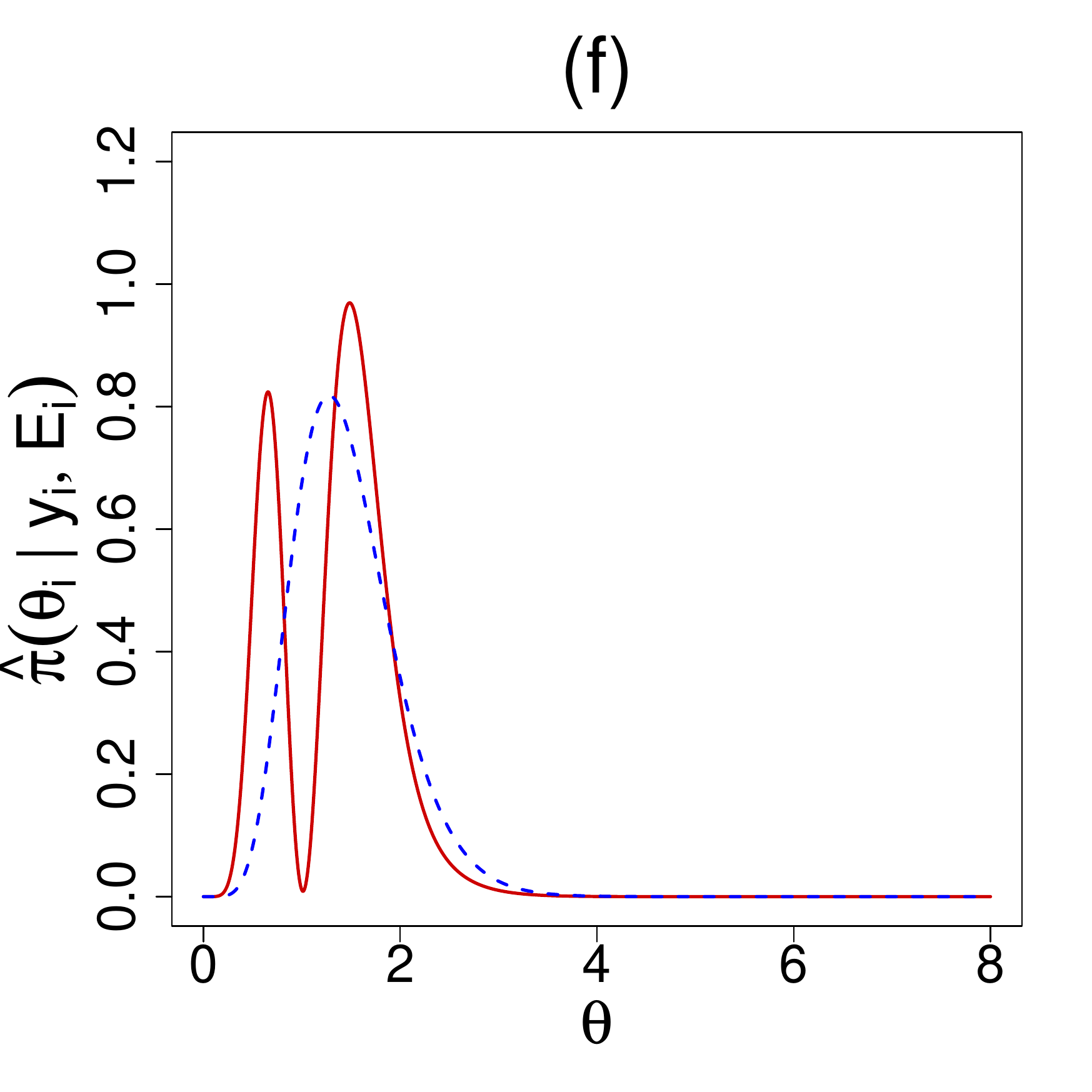}
  \end{subfigure}
  \vskip.4em
\caption{Demonstration of DS-Bayes with covariates on the Norberg insurance dataset.  In (a), we display the U-function.  Panel (b) shows the DS-prior (red), the PEB prior (blue) and the Kiefer-Wolfowitz NPMLE prior (green). Panel (c) shows the macroinference with standard errors (using smooth bootstrap): two modes located at $0.57 (\pm 0.094)$ and $1.41 (\pm 0.261)$. Panels (d) through (f) show microinference for occupational groups 13, 22, and 53 (respectively).}
\label{fig:appH_covar}
\end{figure}
Of particular interest are panels (d), (e), and (f).  These panels show the microinference for three specific occupational groups: group 13 ($Y_{13} = 4$, $E_{13} = 0.45$), group 22 ($Y_{22} = 57 $, $E_{22} = 19.1$), and group 53 ($Y_{53} = 2$, $E_{53} = 0.25$).  In Figure \ref{fig:appH_covar}(d), we have an occupational category that identifies as higher risk with a small lower risk component.  The unimodality in Figure \ref{fig:appH_covar}(e) clearly indicates that category is a higher risk of claim based on exposure. Finally, the occupational category in Figure \ref{fig:appH_covar}(f) is tricky.  Here, we have bimodality with an almost equal probability of being a high or low-risk occupation.  While the other two groups provide clear alternatives for an insurance company, the occupational group $53$ needs the company's judgment in assigning the policy.  
\begin{center} 
{\large I. MAXIMUM-ENTROPY ENHANCEMENT}
\end{center}
For more enhanced result, we offer an extension to maximum entropy $\DS(G,m)$ model, which assumes the following representation of the prior distribution: 
\beq \label{eq:LPme}
\breve{\pi}(\te)\,=\,g(\te;\al,\be)\, \exp \Big[ c_0 +\sum_j c_j \,T_j(\te;G)   \Big],
\eeq
where $c_0$ is some normalizing constant and the $c_j$'s are the LP-maximum entropy coefficients.  The following algorithm outlines the process to solve for the unknown $c_j$'s starting from the $\mathcal{L}^2$ estimate.

\makebox[\textwidth]{\textbf{Orthogonal Series to Maximum Entropy Estimator}}
\rule{\textwidth}{.8pt}
\texttt{Step 0.} Input: BIC-smoothed LP-Fourier ($\mathcal{L}^2$) coefficients $\widehat{\LP}[j;G,\Pi]$, $j = 1, \ldots, m$.
\vskip.4em
\texttt{Step 1.}  Define the set $\mathcal{J} = \left\{j: | \widehat{\LP}[j;G,\Pi]| > 0\right \}$, collection of $j$'s for which we have significant non-zero $\mathcal{L}^2$ orthogonal coefficients.
\vskip.4em
\texttt{Step 2.}  To estimate the maximum entropy coefficients $c_j$ in $\breve{\pi}(\theta)$ of \eqref{eq:LPme}, solve the following sets of moment equality constraints:
\begin{equation}
\widehat{\LP}[j;G,\Pi] = \int T_j(\theta;G) \breve{\pi}(\theta) d\theta,~~\text{for}~j \in \mathcal{J}.\\
\end{equation}
\vskip.4em
\texttt{Step 3.} Output: $\big(\hat c_0, \{\hat c_j\}_{j \in \mathcal{J}}\big)$; accordingly the estimated maximum entropy $\breve{d}$ and $\breve{\pi}$.\\
\rule{\textwidth}{.8pt}
\vskip.5em
{\bf Two Data Examples}. Here we carry out the maximum entropy analysis for \texttt{rat} (binomial variate) and \texttt{galaxy} data (normal variate). The \texttt{galaxy} data consists of $k=324$ observed rotation velocities $y_i$ and their uncertainties of Low Surface Brightness (LSB) galaxies \cite{de2001high}. 
\begin{enumerate}[label=(\alph*)]
\item Rat Tumor data, $g$ is beta distribution with MLE $\alpha = 2.30$, $\beta= 14.08$:
\begin{equation}\label{eq:rat_me_pi_hat}
\breve{\pi}(\theta) = g(\theta; \alpha,\beta)\exp\big[-0.13 - 0.52T_3(\theta;G) \big].
\end{equation}
\item Galaxy data, $g$ is normal distribution with MLE $\mu = 85.5$, $\tau^2= 3304$:
\begin{equation}\label{eq:gal_me_pi_hat}
\breve{\pi}(\theta) = g(\theta; \mu,\tau^2)\exp\big[-0.15 + 0.26T_3(\theta;G) - 0.28T_4(\theta;G) + 0.46T_5(\theta;G)\big].
\end{equation}
\end{enumerate}

\begin{figure}[t]
  \centering
  \begin{subfigure}{.44\textwidth}
  \includegraphics[width=\linewidth,trim=1cm .5cm 0cm 1cm]{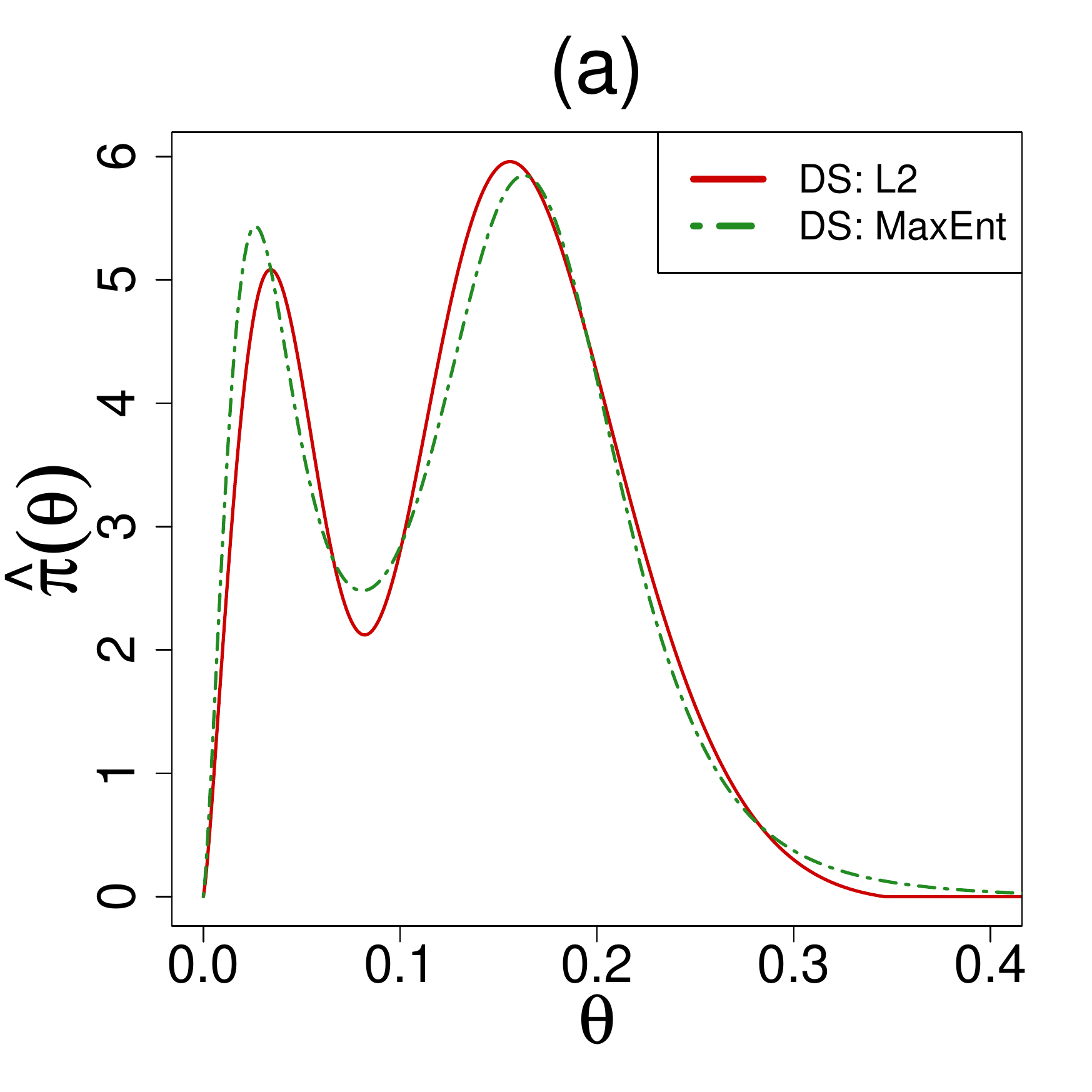}
  \end{subfigure}\hspace{12mm}%
  \begin{subfigure}{.44\textwidth}
    \includegraphics[width=\linewidth,trim=1cm .5cm 0cm 1cm]{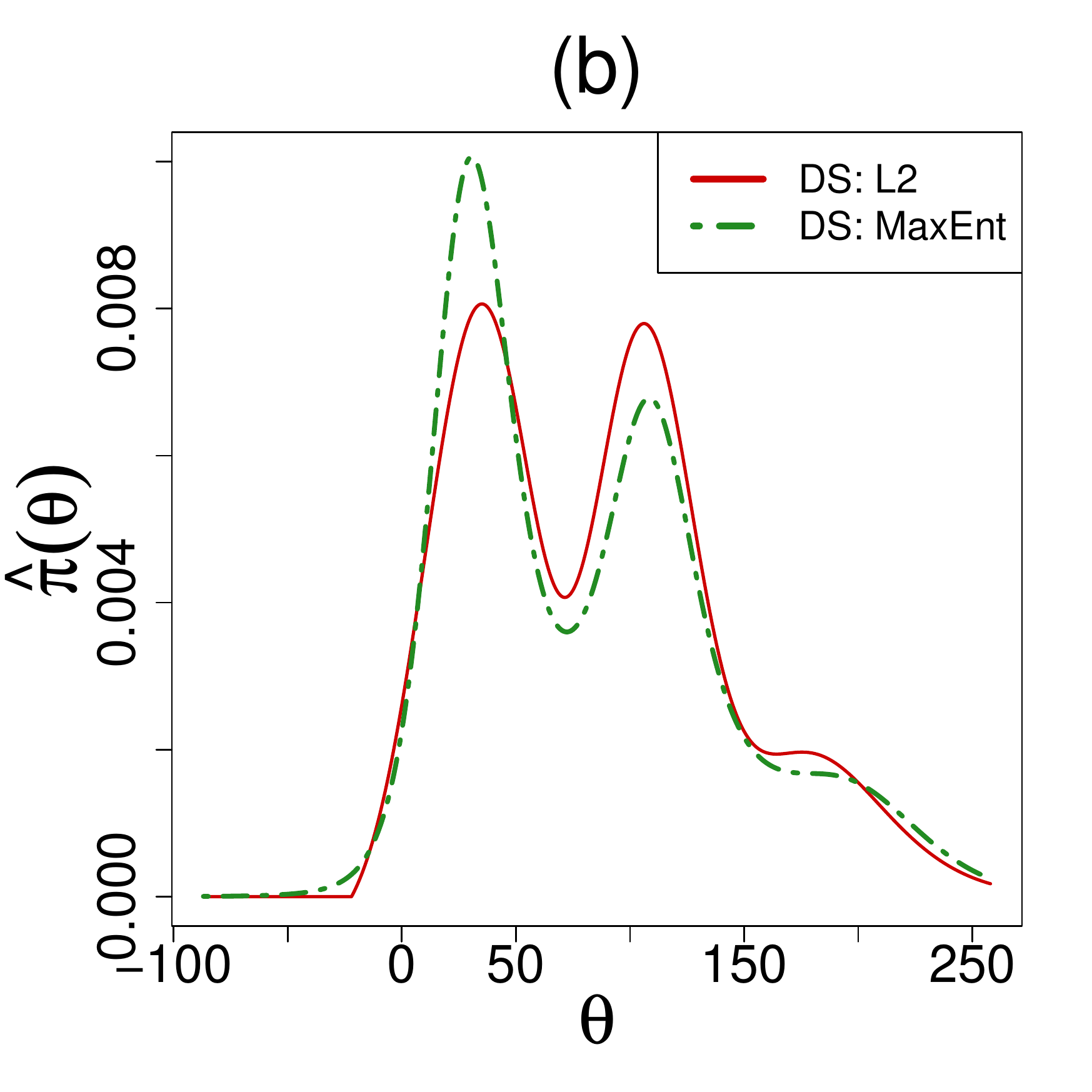}
  \end{subfigure}
  \vskip.25em
\caption{Comparison of $\mathcal{L}^2$ (solid red line) and maximum entropy (two-dash green line) estimates of DS prior.  Panel (a) shows the comparison for the \texttt{rat} tumor data, while panel (b) illustrates the difference (in modal shapes) for the \texttt{galaxy} data.}
\label{fig:appI_MaxEnt}
\vskip1em
\end{figure}
The resulting LP-maximum-entropy $\DS(G,m)$ priors are shown in Figure \ref{fig:appI_MaxEnt}.  In both examples, we see the maximum entropy estimates (green dashed lines) are very similar to the $\mathcal{L}^2$ with some adjustments to the modal shapes.

The \texttt{BayesGOF} package in R implements this algorithm as an option for the \texttt{DS.prior} function. The following code demonstrates how to generate both the $\mathcal{L}^2$ and maximum entropy representations of the LP coefficients for both the rat tumor and galaxy data sets.  
\vskip.75em
\begin{tcolorbox}[colback=gray!5!white,colframe=black]
 \begin{verbatim}
library(BayesGOF)
#---Rat Tumor Data
data(rat)
rat.start <- gMLE.bb(rat$y, rat$n)$estimate
rat.ds.L2 <- DS.prior(rat, max.m = 4, g.par = rat.start, 
                      family = "Binomial", LP.type = "L2")
## Shown in Figure 15(a) as solid red line                
rat.ds.ME <- DS.prior(rat, max.m = 4, g.par = rat.start, 
                      family = "Binomial", LP.type = "MaxEnt")
## Shown in Figure 15(a) as two-dashed green line
#---Galaxy Data
data(galaxy)
gal.start <- gMLE.nn(galaxy$y, galaxy$se)$estimate
gal.ds.L2 <- DS.prior(galaxy, max.m = 5, g.par = gal.start, 
                      family = "Normal", LP.type = "L2")
## Shown in Figure 15(b) as solid red line
gal.ds.ME <- DS.prior(galaxy, max.m = 5, g.par = gal.start, 
                      family = "Normal", LP.type = "MaxEnt")
## Shown in Figure 15(b) as two-dashed green line
\end{verbatim}
\end{tcolorbox}
\vskip.2em

 
\putbib[app-ref]
\end{bibunit}
\end{document}